\providecommand*{\toclevel@algorithm}{0}
\newcommand*{\versionswitch}[2]{\ifdefined\longversion #2\else #1\fi}
\def\longversion{}
\newcommand*{\citea}[1]{\citeauthor*{#1}~\cite{#1}}
\newcommand{\tch}[1]{\textcolor{blue5}{#1}}
\newcommand{\ch}{\color{blue5}}
\newcommand*{\Ps}{\ensuremath{P_s}\xspace}
\newcommand*{\Po}[1]{\ensuremath{P_{o_{#1}}}\xspace}
\newcommand*{\tmathc}[2][]{\ensuremath{\mathcal{#2}_{#1}}}
\newcommand*{\Aa}{\tmathc{A}\xspace}
\newcommand*{\Bb}{\tmathc{B}\xspace}
\newcommand*{\Cc}{\tmathc{C}\xspace}
\newcommand*{\Dd}{\tmathc{D}\xspace}
\newcommand*{\Ff}[1][]{\tmathc[#1]{F}\xspace}
\newcommand*{\Hh}[1][]{\ensuremath{\mathcal{H}#1}\xspace}
\newcommand*{\Ss}{\tmathc{S}\xspace}
\newcommand*{\Nn}[1][]{\tmathc[#1]{N}\xspace}
\newcommand*{\Oo}[1][]{\tmathc[#1]{O}\xspace}
\newcommand*{\Pp}[1][]{\tmathc[#1]{P}\xspace}
\newcommand*{\Rr}{\tmathc{R}\xspace}
\newcommand*{\Rrs}{\mathcal{R}}
\newcommand*{\Zz}{\tmathc{Z}\xspace}
\newcommand*{\Path}[2]{\ensuremath{(\Po{#1}, \ldots, \Po{#2})}\xspace}
\mathchardef\mhyphen="2D
\newcommand*{\fa}{\ensuremath{^{\rightarrow}}\xspace}
\newcommand*{\ba}{\ensuremath{^{\leftarrow}}\xspace}
\newcommand*{\Gg}{\emph{G}\xspace}
\newcommand*{\FormO}{\textsc{FormOnion}\xspace}
\newcommand*{\Proc}{\textsc{ProcOnion}\xspace}
\newcommand*{\Reply}{\textsc{FormReply}\xspace}
\newcommand*{\Recog}{\textsc{RecognizeOnion}\xspace}
\newcommand*{\RO}{\textsc{ROnion}\xspace}
\newcommand*{\FRN}{\Ff[\textit{RSOR}]}
\newcommand{\DocComment}[1]{\State \(\triangleright\) #1}
\newcommand{\CallPar}[2]{\textproc{#1}(#2)}
\newcommand*{\NewOnion}{\textsc{ProcessNewOnion}\xspace}
\newcommand*{\DelOnion}{\textsc{DeliverOnion}\xspace}
\newcommand*{\ForOnion}{\textsc{ForwardOnion}\xspace}
\newcommand*{\ToRelay}{\textsc{ProcessToRelay}\xspace}
\newcommand*{\DelPlain}{\textsc{DeliverMessage}\xspace}
\newcommand*{\MakeReply}{\textsc{SetupReply}\xspace}
\newcommand*{\LeakMess}{\textsc{LeakMessage}\xspace}
\newcommand*{\Tag}{\textsc{Tag}\xspace}
\newcommand*{\NewReply}{\textsc{ProcessNewReply}\xspace}
\newcommand*{\InitReply}{\textsc{InitiateReply}\xspace}
\newcommand*{\DelReply}{\textsc{DeliverReply}\xspace}
\newcommand*{\ByReply}{\textsc{BypassReply}\xspace}
\newcommand*{\LeakReply}{\textsc{LeakReply}\xspace}
\newcommand*{\Step}[1][]{%
	\upshape
	\textsc{ProcessNextStep}%
	\ifthenelse{\equal{#1}{}}
	{}%
	{$_{\!
		\ifthenelse{\equal{#1}{*}}
		{*}%
		{\textit{#1}}}$}%
	\xspace}
\newcommand*{\optsub}[1]{\ensuremath{\ifthenelse{\equal{#1}{}}{}{_{#1}}}}
\newcommand*{\nSMS}[1][]{\ensuremath{\ifthenelse{\equal{#1}{}}{(n, k)}{(#1)}\mhyphen{}\textit{SMR}\bar{S}}\xspace}
\newcommand*{\nSRS}[1][]{\ensuremath{\ifthenelse{\equal{#1}{}}{(n, k)}{(#1)}\mhyphen{}\textit{SR}\bar{S}}\xspace}
\newcommand*{\FLU}{\ensuremath{\textit{LU}\,\fa}\xspace}
\newcommand*{\BLU}{\ensuremath{\textit{LU}\,\ba}\xspace}
\newcommand*{\RTI}{\ensuremath{\textit{TI}\,^\leftrightarrow}\xspace}
\newcommand*{\TFLU}{\ensuremath{\textit{TLU}\,^\rightarrow}\xspace}
\newcommand*{\NBLU}{\ensuremath{\textit{SLU}\,\ba}\xspace}
\newcommand*{\NTI}{\ensuremath{\textit{STI}\,^\leftrightarrow}\xspace}
\newcommand*{\ROKEM}{Sphinx RO-KEM\xspace}
\newcommand*{\KEMCCA}{KEM-IND-CCA\xspace}
\newcommand*{\SKEMCCA}{Sphinx-KEM-IND-CCA\xspace}
\newlength{\curindent}
\newlength{\oldabove}
\newlength{\oldashort}
\newlength{\oldbelow}
\newlength{\oldbshort}
\newcommand*{\compactequations}{
	\setlength{\oldabove}{\abovedisplayskip}
	\setlength{\abovedisplayskip}{0pt}

	\setlength{\oldashort}{\abovedisplayshortskip}
	\setlength{\abovedisplayshortskip}{0pt}

	\setlength{\oldbelow}{\belowdisplayskip}
	\setlength{\belowdisplayskip}{0pt}

	\setlength{\oldbshort}{\belowdisplayshortskip}
	\setlength{\belowdisplayshortskip}{0pt}
}
\newcommand*{\looseequations}{
	\setlength{\abovedisplayskip}{\oldabove}
	\setlength{\abovedisplayshortskip}{\oldashort}
	\setlength{\belowdisplayskip}{\oldbelow}
	\setlength{\belowdisplayshortskip}{\oldbshort}
}
\newtheorem{theorem}{Theorem}
\newtheorem{assumption}{Assumption}
\newtheoremstyle{enumdefinition}{}{}{}{}{\bfseries}{}{\newline}{}
\theoremstyle{enumdefinition}
\newtheorem{definition}{Definition}
\title[Provable Security for the Onion Routing and Mix Network Packet Format Sphinx]{Provable Security for the Onion Routing and\texorpdfstring{\\}{} Mix Network Packet Format Sphinx}
\author{Philip Scherer}
\affiliation{%
	\institution{KIT Karlsruhe}
	\city{}
	\country{}
}
\email{firstname.lastname@student.kit.edu}
\author{Christiane Weis}
\affiliation{%
	\institution{NEC Laboratories Europe}
	\city{}
	\country{}
}
\email{firstname.lastname@neclab.eu}
\author{Thorsten Strufe}
\affiliation{%
	\institution{KIT Karlsruhe}
	\city{}
	\country{}
}
\email{firstname.lastname@kit.edu}
\begin{document}
	\makeatletter
	\fancyhead[LE]{\@headfootfont\shorttitle}
	\fancyhead[RO]{\@headfootfont\@shortauthors}
	\makeatother
	\begin{abstract}
	\label{sec:abstract}
	Onion routing and mix networks
	are fundamental concepts to
	provide users
	with anonymous access
	to the Internet.
	Various corresponding solutions
	rely on the efficient
	Sphinx packet format.
	However,
	flaws in Sphinx's
	underlying proof strategy
	were found recently.
	It is thus currently unclear
	which guarantees Sphinx
	actually provides,
	and,
	even worse,
	there is no
	suitable proof strategy
	available.

	In this paper,
	we restore the security foundation
	for all these works
	by building a theoretical framework
	for Sphinx.
	We discover that
	the previously-used DDH assumption
	is insufficient for a security proof
	and show that
	the Gap Diffie-Hellman (GDH) assumption
	is required instead.
	We apply it
	to prove that a
	slightly adapted version
	of the Sphinx packet format
	is secure
	under the GDH assumption.
	Ours is the first work
	to provide a detailed,
	in-depth security proof
	for Sphinx in this manner.
	Our adaptations to Sphinx are necessary,
	as we demonstrate
	with an attack
	on sender privacy
	that would be possible otherwise.
\end{abstract}

\keywords{Privacy, Anonymity, Provable Security, Onion Routing, Mix Networks, Sphinx}

	\maketitle
	\section{Introduction}
	\label{sec:introduction}

	The majority of today's Internet traffic 
	fails to protect
	the privacy of its users
	since the exposed IP addresses serve as identifying information.
	Onion routing (OR)%
	~\cite{onion_routing}
	and mix networks%
	~\cite{untrace-mail}
	are techniques
	that address this issue
	by hiding the users' IP addresses.
	OR and mix networks
	are similar
	in the sense that
	both use relays
	together with
	multiple layers of encryption
	to hide the link
	between the sender
	and the message
	and receiver.
	The sender
	wraps the message
	in multiple encryption layers,
	thus creating an \enquote{onion} packet.
	The layers
	are peeled off
	one by one
	while traversing
	the relays
	on the onion's path.
	As a critical distinction,
	mix networks
	protect against
	a global adversary by changing the order of packets 
	at each relay (hence \emph{mixing}
	the communications). This comes
	at the cost of
	introducing additional delays. OR networks avoid these delays, but become vulnerable to global passive adversaries.
	Proposed works for both
	OR and mix networks
	exist in two network models:
	In the integrated-system model,
	the receiver acts
	as the last relay.
	On the contrary,
	in the service model,
	the receiver is
	unaware of the OR or mix network.
	The last relay,
	which is called
	the exit relay,
	retrieves and forwards
	the message
	to the receiver
	as in the underlying protocol.

	To hide the link
	from senders
	to messages and receivers,
	protocol designers
	aim to make
	all incoming
	and outgoing packets
	at honest relays
	unlinkable for the adversary.
	This requires special care
	when designing the packet formats,
	as any part
	of the packet
	could include
	linkable information.
	Tor~\cite{tor} is being broadly applied against local adversaries but mix networks against global adversaries are increasingly developed.
	The most efficient
	and commonly used
	mix network packet format
	is Sphinx%
	~\cite{sphinx}.
	In fact,
	Sphinx is not only used
	as a foundation
	for mix networks%
	~\cite{loopix},
	but also for OR protocols%
	~\cite{hornet, taranet},
	and even inspired
	a recent improvement
	of Tor~\cite{walking_onions}.  	

	Sphinx~\cite{sphinx} works
	in the service model
	and assumes
	an additional party,
	the nymserver%
	\footnote{%
		While the name
		is the same,
		this server has
		nothing to do
		with the anonymization network
		\enquote{Nym}%
		~\cite{nym}.
	}.
	Senders send onions
	with reply information
	to the nymserver.
	Exit relays send
	anonymous reply messages
	via the nymserver,
	which uses the reply information
	to build
	a reply packet.
	Sphinx packets
	consist of a header
	and a payload.
	The header contains
	encrypted routing information
	and keys.
	The payload contains
	the encrypted message.
	Using only group exponentiations
	and well-known
	symmetric cryptography,
	Sphinx is highly efficient%
	~\cite{sphinx}.

	Sphinx's privacy has been
	proven previously
	using a proof strategy
	by \citeauthor{formal_onion}%
	~\cite{formal_onion}%
	\footnote{%
		In order to comply
		with ~\cite{formal_onion}
		and to honor
		Sphinx's applicability
		for OR protocols,
		we (ab-\hspace{0cm})use the onion terminology
		in the rest of this paper,
		while always meaning
		OR \emph{and} mix networks
		and considering
		a global adversary.
	}.
	Camenisch and Lysyanskaya's proof strategy
	first proposes
	an idealized version
	of OR
	in the form of
	an ideal functionality
	in the Universal Composability framework~\cite{uc}.
	This ideal functionality
	is effectively an abstract version
	of an OR protocol,
	from which privacy guarantees
	can be derived
	more readily.
	As proving that
	a protocol securely realizes
	this functionality directly
	is cumbersome,
	they also create
	a set of
	game-based properties
	which they claim
	imply realization
	of the ideal functionality%
	~\cite{formal_onion}.
	As it turns out,
	those properties,
	which were used
	in Sphinx's privacy proof,
	are insufficent
	to realize the ideal functionality%
	~\cite{break_onion}.

	While \citeauthor*{break_onion}
	propose new properties
	that indeed imply the ideal functionality%
	~\cite{break_onion},
	Sphinx is not able
	to achieve them
	for two reasons.
	First,
	Sphinx, along with
	many real-world applications,
	works in the service model,
	while the new proof strategy
	of \citeauthor*{break_onion}
	is in the integrated-system model.
	Secondly,
	Sphinx does not protect
	the integrity of the payload
	at each hop
	and thus allows
	for a malleability attack
	on the payload:
	If the adversary \emph{tags}
	(i.e.,
	flips bits of the payload)
	an onion
	leaving its sender,
	the exit relay processing the Sphinx packet
	will notice that
	the payload has been modified
	and drop the message.
	If Sphinx is used
	in the integrated-system model,
	this attack allows
	an adversarial receiver
	colluding with the first relay
	to learn which user
	was contacting it.
	As this violates
	the desired privacy goal,
	it follows
	that Sphinx
	does not achieve
	the integrated-system properties
	of the related work%
	~\cite{break_onion}.
	In the service model,
	this attack only allows
	an adversary to
	link the sender
	to the exit relay
	and completely destroys
	the message
	in the process.
	Hence,
	there is hope that the highly efficient
	Sphinx packet format is still secure to use
	as long as it is
	in its intended
	service model.
	Indeed,
	this question is highly relevant
	since all known protocols
	that prevent this attack
	while supporting
	anonymous replies
	incur extremely high overhead
	due to heavy,
	relatively new
	cryptographic primitives%
	~\cite{or_replies}.

	In addition
	to the problems
	noted above,
	we discover that
	the Decisional Diffie-Hellman (DDH) assumption
	used by Danezis and Goldberg
	to prove
	that Sphinx satisfies
	Camenisch and Lysyanskaya's game-based properties%
	~\cite{sphinx}
	is insufficient
	for Sphinx's security proof.

	In this paper,
	we hence set out to
	perform a thorough analysis
	and provide
	the missing privacy proofs
	for Sphinx.
	We first
	provide the
	necessary framework
	for the service model:
	A reusable game-based proof strategy
	which is of independent interest
	for future work
	on packet formats
	as well as
	as ideal functionality
	in the UC framework
	for use
	in analyzing the privacy guarantees
	of service-model OR protocols.
	We first define
	this new ideal functionality,
	which incorporates both the
	relaxed privacy accounting for
	payload malleability
	as well as
	the changes required
	for the service model.
	We then construct
	our game-based \enquote{onion properties}
	and prove that
	a protocol satisfying them
	implies that that protocol
	also realizes the ideal functionality.
	During the work
	on this proof,
	we also discover and fix mistakes and details
	in the proof 
	for the related work
	in the integrated-system model~\cite{or_replies}.

	Secondly,
	we turn to
	an analysis of Sphinx
	and realize that
	an adaptation
	must be made
	to the packet format
	and its operation
	in order to achieve
	provable security.
	As originally defined,
	Sphinx uses a nymserver
	to enable its reply functionality.
	However,
	the use of such third parties
	allows for
	an additional tagging attack
	based on
	payload malleability.
	For secure operation
	of Sphinx,
	we hence propose
	an adaptation
	of the Sphinx protocol
	that works without a nymserver,
	but still supports
	anonymous replies.
	This works by
	simply embedding the reply information
	in the packet's payload
	instead of sending it separately.
	Lastly,
	we discuss the effect
	of our privacy relaxation
	and detail criteria
	for the secure usage
	of Sphinx. 

	In summary,
	our main contributions are:
	\begin{itemize}
		\item the definition of
			repliable service OR schemes,
		\item the construction of
			an ideal functionality
			tailored to Sphinx
			as well as
			corresponding game-based properties,
		\item minor fixes
			in the proof
			for the related integrated-system model work,
		\item the discovery that
			the GDH assumption
			is required to prove Sphinx secure
			instead of the DDH assumption,
		\item the first detailed security proof
			for (a slightly adapted version of) Sphinx
			under the GDH assumption,
			and
		\item a discussion of criteria
			for secure usage
			of Sphinx.
	\end{itemize}

	\paragraph*{\textit{Outline}}
	\autoref{sec:background} introduces
	the required background
	on onion routing.
	\autoref{sec:reor} constructs
	the formal foundations
	for repliable OR in the service model,
	which 
	\autoref{part:sphinx} uses
	to analyze
	and adapt Sphinx.
	\autoref{sec:discussion}
	discusses the privacy
	achieved by
	our adapted Sphinx
	as well as
	relevant criteria
	under which
	the adapted Sphinx
	is considered secure.
	Finally,
	\autoref{sec:conclusion}
	concludes this paper.

	\section{Background and Related Work}
	\label{sec:background}

	We first introduce
	our privacy requirements,
	onion routing and mix networks
	and the general network model
	before providing background
	on the formal analysis
	in this work
	and the Sphinx packet format
	itself.

	\subsection{Onion Routing and Mix Network Packet Formats}
	Onion routing
	and mix networks
	aim to hide the sender
	of a packet
	in a set of users
	called the \emph{anonymity set}.
	They thereby
	prevent linking the sender
	both to the sent message
	and the receiver.  
	The networks
	employ multiple relays
	between the sender
	and receiver
	that process packets
	and forward them
	to the next relay
	or the receiver.
	As the message
	in the packet
	is typically wrapped
	into multiple layers
	of encryption,
	it is also called
	an \emph{onion}
	and every processing result
	on an onion's relay path
	is an \emph{onion layer}%
	~\cite{onion_routing}.

	Relay services
	are typically run
	by volunteers
	that want to
	help the sender
	increase its privacy
	against adversarial receivers,
	but also internet providers
	and possibly even
	nation-state adversaries.
	Due to this
	open nature
	of the network,
	it is assumed that
	a fraction of the relays
	is controlled by
	the adversary
	as well%
	~\cite{tor}.
	Mix networks
	aim to protect against
	a global adversary
	and therefore
	not only
	change the representation
	of the onion packet,
	but also reorder
	incoming packets
	before forwarding them.
	Onion routing networks,
	however,
	traditionally protect only
	against local adversaries
	and prioritize
	offering low-latency service
	over stronger protection%
	~\cite{tor}.
	In terms of
	the packet format,
	however,
	the networks are similar ---
	they share
	the underlying idea
	of layered encryption.
	For the sake of
	compatibility with related work,
	we use OR as a representative for both
	OR and mix network packet formats.
	We stress that
	we nonetheless target
	a global adversary.

	\subsection{Network Models and Functionalities}
	\label{ssec:nw}

	We distinguish between
	two network models
	for OR protocols:
	The integrated-system model
	assumes that
	the receiver is aware of
	and runs
	the OR protocol,
	while the service model
	assumes that
	the receiver can be
	unaware of
	the OR protocol.
	The last relay
	before the receiver,
	the \emph{exit relay},
	translates the packets accordingly.
	
	We also distinguish
	two different functionalities:
	\emph{Non-repliable} OR 
	only sends messages
	from senders to receivers
	in one direction and
	\emph{repliable} OR acts as
	two-way communication.
	Repliable OR schemes involve
	sending requests
	as \emph{forward} messages
	and receiving responses
	as \emph{reply} messages.

	By design,
	the anonymity of
	the sender
	is required to hold
	even against
	a malicious replying receiver.
	Thus,
	the receiver must not know
	to whom it sends its reply. 
	To achieve this,
	the sender can
	prepare an anonymized
	\enquote{return envelope}
	and include it in its onion.
	Some repliable OR protocols
	(including Sphinx) additionally
	require forward
	and reply packets
	to be indistinguishable
	from each other
	while they are
	moving through
	the network
	in order to
	increase the anonymity set
	of each packet%
	~\cite{sphinx}.
	
	\subsection{Formally Analyzing Mix Network Packet Formats} 
	\label{ssec:rior}

	The state of the art
	uses the
	Universal Composability (UC) framework~\cite{uc}
	to formalize OR security
	for packet formats.
	In UC,
	the desired security
	is defined by
	an ideal functionality \Ff,
	which is
	an optimal, abstract version
	of the protocol.
	\Ff performs all computations
	on a trusted third party
	and clearly specifies how
	the environment
	(which controls the honest parties)
	and the adversary
	are allowed to interact
	with that third party.
	The goal is then
	to show that a real protocol
	is as secure
	as \Ff
	and thus
	\emph{securely realizes} \Ff.
	This means
	all attacks
	against the real protocol
	must also work in \Ff
	(which is secure
	by definition).
	In this model,
	the environment controls
	all of the honest parties
	and the adversary controls
	the corrupted parties.
	To show that
	a protocol securely realizes \Ff,
	one constructs a simulator
	that translates the actions
	of the real-world adversary
	into the ideal-world \Ff 's
	adversarial entity's behavior
	and the ideal world's
	honest parties' actions
	into real-world
	protocol messages.
	The environment
	and the adversary
	are allowed to collaborate
	and the environment's view
	after the protocol ends
	is given to
	a distinguisher.
	Only if 
	the view of the environment
	when it is interacting with
	the simulator and  \Ff
	is indistinguishable from
	the view created from
	an interaction
	of the environment with
	the real-world adversary
	and the real protocol
	does the protocol
	securely realize \Ff%
	~\cite{uc}.

	\subsubsection{Overview of Analysis Works}
		
	The first approach
	to consider OR
	in the UC framework
	was proposed by
	Camenisch and Lysyanskaya.
	They create an ideal functionality
	and (as they claim)
	corresponding game-based
	security properties
	for the integrated-system model
	without replies%
	~\cite{formal_onion}.
	The reason for
	the creation of
	the game-based properties
	is that
	proving that
	a protocol securely realizes
	an ideal functionality
	is complex
	and involves multiple steps:
	One must construct
	a simulator for every adversary
	and then prove that
	the simulator simulates
	the adversary indistinguishably.
	Since many OR protocols
	operate in a similar way,
	Camenisch and Lysyanskaya
	reduce this overhead
	by defining their game-based properties.
	Their secure realization proof
	works for any OR protocol
	in their model
	that satisfies the game-based properties,
	taking advantage of
	the similarities
	between protocols.
	For authors of packet formats,
	this means that
	they only need to prove
	that their format
	satisfies the game-based properties
	in order to
	securely realize
	the ideal functionality%
	~\cite{formal_onion}.

	A series of protocols
	have based
	their security proofs
	on Camenisch and Lysyanskaya's
	game-based properties%
	~\cite{sphinx,hornet,taranet}.
	Kuhn et al.
	find flaws
	in the proposed properties
	as well as
	in the corresponding protocols%
	~\cite{break_onion}. 
	They fix
	the proof strategy
	by proposing
	new OR properties
	that imply the ideal functionality
	as originally proposed by
	Camenisch and Lysyanskaya.
	
	\citeauthor*{formal_replies}
	extend the ideal functionality
	and OR properties
	to include replies
	in the integrated-system model
	and propose
	a new OR scheme%
	~\cite{formal_replies}
	for this model. 

	\citea{or_replies}
	improve on
	\citeauthor*{formal_replies}'s work
	by improving the privacy in the schemes,
	ideal functionality,
	and the onion properties
	for repliable integrated-system-model OR.	

	\subsubsection{Summary of Integrated-System Framework}
	In the following,
	we present
	a high-level summary
	of the integrated-system framework%
	~\cite{or_replies},
	which we will adapt
	for our repliable service-model
	formalization.

	The ideal functionality \Ff[R]
	for repliable OR in
	the integrated-system model
	offers an interface
	to the environment and adversary
	that allows relays
	to send and forward onions.
	The honest and corrupted parties
	send messages
	to the trusted third party
	to trigger these actions.
	\Ff[R] then provides
	the appropriate outputs
	to the parties ---
	these correspond to
	the ideal outputs
	one would expect from
	a secure OR protocol.
	For example,
	the adversary is notified
	when an onion is forwarded
	by an honest relay
	and receives a temporary onion identifier
	that it can send back to \Ff[R]
	when it decides
	to deliver that onion
	to the next honest relay.
	However,
	\Ff[R]
	never constructs
	or outputs any \enquote{real} onions.
	Instead,
	the parties only receive
	information about the onions
	associated with
	the aforementioned
	temporary onion identifiers,
	These identifiers include
	no information about
	the sender,
	the receiver,
	the path,
	or the message
	of the respective onion.
	The identifiers are also
	replaced with
	a new random temporary identifier
	at each honest relay.
	This is done because
	any honest relay
	in an OR network
	breaks the link
	between the onion layers
	before and after itself.

	Proving secure UC-realization
	for each OR protocol individually
	involves a lot of redundant work.
	In line with earlier work,
	\citeauthor*{or_replies}
	thus define onion properties
	that are sufficient to prove
	that an
	integrated-system packet format
	securely realizes \Ff[R].
	These are
	\emph{Correctness},
	\emph{Forwards Layer Unlinkability} (\FLU),
	\emph{Backwards Layer Unlinkability} (\BLU),
	and \emph{Repliable Tail Indistinguishability} (\RTI)%
	~\cite{or_replies}.

	\emph{Correctness} requires that
	onions follow their set paths
	and decrypt correctly
	if they are honestly processed.

	In the \emph{Forwards Layer Unlinkability}
	(\FLU) game,
	onion layers
	between honest relays
	on the forward path
	of the onion
	are replaced with
	random onion layers
	taking the same path.
	The adversary is required
	to distinguish the replacement
	from the original.
	If the adversary
	cannot do so,
	the packet format ensures that
	onion layers
	on the forward path
	of an onion
	cannot be linked
	across an honest relay.

	In the \emph{Backwards Layer Unlinkability}
	(\BLU) game,
	onion layers
	on the \emph{reply path}
	(instead of
	the forward path)
	are replaced
	with random onion layers.
	Notably,
	it replaces them with
	random \emph{forward} onion layers.
	Again,
	the adversary must distinguish
	the replacement
	from the original.
	If it cannot do so,
	the packet format ensures
	unlinkability of reply onion layers.
	This property
	also implies indistinguishability of
	forward and reply onions.

	In the \emph{Repliable Tail Indistinguishability}
	(\RTI) game,
	onion layers
	going to a corrupted receiver relay
	are replaced with
	random onion layers
	with the same path
	and message contents.
	The adversary must
	once again tell the difference
	between the two.
	If it cannot,
	the packet format guarantees
	that different onions
	with the same message
	going to a corrupted receiver
	are indistinguishable.

	Note that,
	if all four properties
	are satisfied,
	every layer of an onion
	can be replaced
	with a random layer
	such that only
	the respective subpaths
	between honest relays
	stay the same
	(and the message,
	if the layer goes
	to a corrupted receiver).
	Effectively,
	this means that
	an adversary learns
	no more from
	any given sequence of onion layers
	than it would
	if given a temporary onion identifier
	and the corresponding subpath
	like in \Ff[R].

	These onion properties are used
	to enable the
	secure realization proof:
	When constructing the simulator
	that simulates the adversary,
	but interacts only with \Ff[R],
	the simulator
	does not learn
	the full path or message
	of onions
	from honest senders,
	but only the subpaths
	between two honest relays.
	This is by construction
	of \Ff[R].
	In order to translate
	the ideal-world onion
	into a real-world onion,
	it must thus replace
	the real onion layers
	that the adversary and environment
	would expect to see
	with random (forward) onion layers
	that only match
	the original onion
	in their subpath
	(and their message
	if an adversarial receiver
	gets the onion).
	The properties ensure
	that the replacement
	cannot be detected,
	allowing the proof
	to be completed
	in this manner.

\subsection{Sphinx}
	\label{ssec:sphinx}
	Sphinx~\cite{sphinx} is a
	compact repliable mix packet format
	in the service model
	following the adversary model
	and privacy goals
	described above.

	\subsubsection{The Sphinx Packet}
	\label{sssec:sphinx-packet}

	A Sphinx packet
	consists of a header $\eta$
	and a payload $\delta$.
	The header of the packet
	contains all of
	the routing information
	for the packet
	while the payload
	contains the message
	and the receiver address.
	The sender of the packet
	builds both components
	layer by layer,
	starting at the final
	innermost layer
	and adding one additional layer
	of encryption
	for each relay
	on the packet's \enquote{path}
	as chosen by
	the sender.
	If the path
	is shorter
	than the configured
	maximum path length,
	padding is added
	to the final layer
	of the header.
	The relays each remove
	one layer of that encryption
	when they process
	the packet.
	The final relay
	removes the last layer of encryption
	from the packet
	and discovers
	the message and receiver address.
	It then delivers
	the message
	to the receiver%
	~\cite{sphinx}.
	The packet format
	is described
	in more detail
	in \autoref{ssec:sphinx-packet-detail}.

	The separation
	of the header and payload
	allows for Sphinx's replies:
	A sender forms
	a repliable packet by
	creating a header
	for the reply
	as well as
	a key for the reply payload
	before sending
	the \enquote{forward} packet.
	The reply header and the key
	are sent to a third-party
	\textit{nymserver}
	using another forward packet
	and stored under a pseudonym there.
	After receiving a reply
	from the receiver,
	the exit relay of the packet
	sends the nymserver
	the reply message
	and the pseudonym
	it learned
	from the forward packet.
	The nymserver
	encrypts
	the reply payload
	using the key
	and attaches it
	to the reply header
	before sending it%
	~\cite{sphinx}.

	\subsubsection{Sphinx Security}
	\label{sssec:sphinx-insecure}

	Sphinx (as defined in ~\cite{sphinx})
	has two known flaws. 
	The first
	is due to the padding
	in the final layer
	of the header.
	In the original Sphinx definition,
	this padding
	consists of only 0 bits and
	depends on the chosen path length
	~\cite{break_onion, sphinx-impl}.
	As this pattern is recognizable,
	a corrupted exit relay
	learns information about the path length. 
	The Sphinx implementation~\cite{sphinx-impl}
	fixes this issue
	by using
	random bits instead.
	In the following,
	we consider
	the version of Sphinx
	that includes this fix.

	\paragraph{\textit{Payload Tagging}}
	Sphinx's second flaw
	concerns the integrity
	of the payload.
	Due to how Sphinx handles replies,
	it cannot use
	a standard integrity check 
	of the payload $\delta$
	at each relay%
	\footnote{
		Note that there are
		repliable OR schemes
		that solve this issue,
		but they are highly inefficient as they require
		significantly more complex
		cryptographic primitives%
		~\cite{or_replies}.
	}.
	Adversarial modifications
	of the payload thus
	go unnoticed
	until the packet
	reaches its exit relay,
	which notices
	(and drops)
	the modified packet
	during the integrity check
	on the payload.
	Crucially,
	this allows the adversary
	to link the packet
	it \enquote{tagged}
	(by, e.g.,
	flipping bits
	in its payload)
	to the packet
	that was dropped
	by the exit relay.
	The payload
	is completely destroyed
	in the process.
	If Sphinx is used
	in the integrated-system model,
	(as, e.g., in \cite{hornet}),
	this attack
	links sender and receivers
	and thereby completely breaks
	the scheme's security.
	However,
	if Sphinx is used
	in its intended service model,
	the attack only links
	the sender
	to its corresponding
	exit relay.
	This can be
	an acceptable leak
	in some settings,
	as we discuss later%
	\footnote{%
		Note that
		tagging the Sphinx header
		in this manner
		is not possible:
		The header's integrity
		is protected by a MAC
		that is checked
		at every relay
		for both forward
		and reply onions.
		The reason for
		this being easier to achieve
		than payload integrity
		is that
		the reply header
		can be built by the sender
		ahead of time,
		while the reply payload
		cannot (as the sender
		does not know
		the reply message).
	}.

	\section{Repliable Service Onion Routing}
	\label{sec:reor}

	We formalize repliable OR
	in the service model,
	hereafter referred to
	as repliable service OR
	(RSOR),
	and construct
	an ideal functionality
	and corresponding onion properties
	that allow for
	the payload malleability attack
	on Sphinx.
	We base
	our formalization
	on \citeauthor*{or_replies}'s work
	\cite{or_replies}.

\subsection{Defining Repliable Service Onion Routing}
	\label{sec:replies}

	\subsubsection{Notation}
	In general,
	we follow the notation
	of the corresponding related work
	(\cite{or_replies, sphinx})
	as much as possible.
	We represent
	messages with $m$,
	headers with $\eta$,
	and payloads,
	which include $m$
	along with other metadata,
	with $\delta$.
	A packet is the combination of  $\eta$ and $\delta$.
	We also refer to packets
	as onion layers
	and shorten the term \emph{onion layer}
	to \emph{onion} where appropriate.
	Onion paths $\mathcal{P}=(P_1,\dots, P_n)$
	consist of
	a sequence of relays
	with $P_i$ being
	the $i$-th relay's name
	and $P_n$ being
	the exit relay.
	$R$,
	the receiver,
	is not part of
	the onion's path.
	$PK_i$ and $SK_i$
	are the public and secret key
	of relay $P_i$.
	$O_i$ is the $i$-th onion layer,
	i.e.,
	the processing result
	that $P_{i-1}$ produces
	and sends to $P_i$.
	For reply information,
	we use the same notation
	with an additional
	superscript arrow:
	$x\ba$ indicates
	the reply counterpart
	to the forward component $x$.

	\subsubsection{Assumptions}
	\label{ssec:reor-assumptions}
	In this work,
	we make use of
	the following standard assumptions
	regarding the OR protocol,
	which we inherit
	from related work
	\cite{or_replies, break_onion}.
	Note that none of them
	require additional trust,
	but just limit
	the packet schemes
	our model applies to.
	To the best of our knowledge,
	every previously-proposed
	OR scheme and protocol
	adheres to
	these assumptions.

	\begin{assumption}
		\label{ass:path-length}
		A maximum path length (number of relays on the path)
		of $N$  is used (inclusive upper bound).
	\end{assumption}

	\begin{assumption}
		\label{ass:honest-sender}
		Honest senders choose
		acyclic paths
		(to increase the chance
		of picking
		at least one
		honest relay).
	\end{assumption}

	\begin{assumption}
		\label{ass:replays}
		Replay protection at honest relays
		drops onions
		whose headers
		are bit-for-bit identical
		to ones
		that have already been seen
		at that relay.
		Detecting these duplicates
		only fails with a
		negligible probability.
	\end{assumption}

	\begin{assumption}
		\label{ass:pki}
		A sender always knows
		the public keys $PK_i$
		of any relays $P_i$
		it uses
		for its onion's paths.
	\end{assumption}

	\begin{assumption}
		\label{ass:header-payload}
		An onion $O$
		consists of
		a header $\eta$
		and a payload $\delta$
		such that
		$O = (\eta, \delta)$.
	\end{assumption}

	\begin{assumption}
		\label{ass:secure-channels}
		Honest relays
		inside the OR network
		communicate with each other
		via secure channels%
		\footnote{%
			This assumption
			was implicitly introduced
			in the UC proof
			of%
			~\cite{break_onion}.
		}.
		Relays and receivers
		do not communicate
		via secure channels.
	\end{assumption}

	Assumption~\ref{ass:secure-channels}
	expands the secure-channel assumption
	for honest relays
	from~\cite{break_onion}
	by explicitly denying
	relay-receiver pairs
	from using secure channels
	to communicate
	since we do not know
	the state
	of the external network.

	Next,
	we introduce new assumptions
	related to the service model.

	\begin{assumption}
		\label{ass:crossfire-ignored}
		Receivers drop
		any onions
		sent to them.
		Similarly,
		relays drop
		any onions
		they get
		from links
		to receivers.
	\end{assumption}

	Since we assume
	that receivers
	are unaware
	of the OR network,
	they cannot process onions.
	Traffic between receivers
	is not part
	of our OR model.

	\begin{assumption}
		\label{ass:empty-paths}
		An onion
		in an RSOR packet format
		never has
		an empty forward path.
		If it is repliable,
		it does not have
		an empty backward path.
	\end{assumption}

	An onion
	with an empty forward path
	is effectively
	not an onion at all
	since the sender
	is also
	the exit relay.
	The packet format
	should thus not allow
	a valid onion
	to have an empty path.
	We use
	an empty backward path
	as a sentinel value
	for a non-repliable onion.

	\begin{assumption}
		\label{ass:unsolicited-replies}
		An honest relay
		will always drop
		an unsolicited reply
		(i.e.,
		a reply
		with a header
		the relay
		does not recognize
		as belonging to
		the final reply layer
		of an onion it sent).
	\end{assumption}

	This is not
	a surprising limitation.
	Honest relays
	will only process replies
	that they expect.
	Finally,
	we add one cryptographic assumption
	that we require
	for our game-based properties
	and UC realization proof:

	\begin{assumption}
		\label{ass:recreate-tag}
		Onion payloads
		are encrypted with
		a pseudorandom permutation (PRP).
	\end{assumption}

	With this assumption,
	any modification
	of a payload
	(i.e., a tagging attack)
	will completely randomize
	and thus destroy
	the payload contents,
	which we require
	for security against
	these attacks.

	\subsubsection{Formal RSOR Schemes}
	\label{ssec:reor-schemes}

	We build an RSOR scheme
	using the following algorithms (following
	~\cite{or_replies}):

	\compactequations
	\begin{itemize}
		\item Key generation algorithm $G$ for $P \in \Nn$:
			\[(PK, SK) \gets G(1^\lambda, p, P)\]

		\item Onion sending algorithm \FormO $(n, n\ba \leq N)$:
			\begin{gather*}
				O_i\! \gets\! \FormO(i, \bm{\Rrs}, m, \bm{R}, \Pp, \Pp\ba\!\!, PK_{\Pp}, PK_{\Pp\ba\!\!}), \\
				\Pp^{(\leftarrow)} = (P^{(\leftarrow)}_1, \ldots, P^{(\leftarrow)}_{n^{(\leftarrow)}}) \in \Nn^{n^{(\leftarrow)}}, \\
				PK_{\Pp^{(\leftarrow)}} = (PK^{(\leftarrow)}_1, \ldots, PK^{(\leftarrow)}_{n^{(\leftarrow)}}).
			\end{gather*}
			If $i > n$,
			$R$ is ignored
			and the onion $O_i$
			is formed like a reply
			with the message $m$.
			$\Pp\ba$ may be empty
			if no reply is desired.
			Otherwise,
			$\Pp\ba$ contains
			the \enquote{reply receiver}
			(which is the sender \Ps)
			as its final relay.

		\item Onion processing algorithm \Proc at $P$:
			\[(O', P') \gets \Proc(SK, O, P).\]
			$P$  processing $O$
			with its secret key $SK$
			results in an onion $O'$
			to send to
			the next relay $P'$.
			In case of an error,
			$(O', P') = (\bot, \bot)$.

			$(m, R) = \Proc(SK_E, O_n, E)$
			if $E$ is the exit relay of the onion.
			$m$ is the message for receiver $R$.\\
			$(m\ba, \bot) = \Proc(SK_s, O_{n+n\ba}, \Ps)$
			if sender \Ps
			(with private key $SK_s$)
			received the reply onion $O_{n+n\ba}$
			containing reply message
			 $m\ba$.

		\item Reply sending algorithm \Reply:
			\[(O\ba, P\ba) \gets \Reply(m\ba, O, E, SK).\]
			The reply onion $O\ba$
			to be sent to $P\ba$ is created
			from the reply message $m\ba$
			and the original onion $O$
			as processed by
			$O$'s exit relay $E$
			with the secret key $SK$.
			If an error occurs,
			$(O\ba, P\ba) = (\bot, \bot)$.
	\end{itemize}
	\looseequations

	\begin{definition}
		\label{def:rnre}
		An \emph{RSOR} scheme
		is a tuple of polynomial-time algorithms
		($G$, \FormO, \Proc, \Reply).
	\end{definition}

	Using an RSOR scheme,
	we can construct
	a corresponding protocol
	as follows:
	\compactequations
	\begin{enumerate}
		\item A sender
			selects the parameters
			for its onion,
			builds it
			using $\FormO$,
			and sends $O_1$
			to $P_1$.

		\item Each relay $P_i$
			processes the onion
			in turn
			using $\Proc$
			and sends $O_i$
			to $P_i$.

		\item The onion's exit relay $P_n$
			gets
			$(m, R)$ $\gets$ $\Proc($$SK_n$, $O_n$, $P_n)$.
			It generates
			a random \enquote{reply ID} $rid$,
			remembers $O_n$ in a map
			as $(rid, O_n)$,
			and sends $(m, rid)$
			to $R$.

		\item $R$ receives $(m, rid)$
			and decides
			to reply to the message.
			It sends its reply $(m\ba, rid)$
			back to $P_n$.

		\item $P_n$ gets the reply message
			and finds ($rid$, $O_n$)
			in its map.
			It calculates
			$(O\ba, P\ba) \gets \Reply(m\ba, O_n, P_n, SK_n)$
			and sends $O\ba$
			to $P\ba$.

		\item The reply onion
			follows its reply path
			like in step four
			until the sender receives
			$O\ba_{n\ba}$
			and processes it
			to receive $(m\ba, \bot)$.
	\end{enumerate}

	We introduce the concept
	of \enquote{reply IDs} ($rid$s)
	in the protocol description.
	These IDs are
	intended to be abstractions
	of the connections
	established by real-world
	network connection protocols
	like TCP ---
	To avoid the complexity
	actual connections
	would bring
	into our definitions,
	we represent them
	as a simple device
	that allows a receiver
	to send a reply
	directly to the relay
	it received the $rid$ from.
	Our protocols
	do not protect $rid$s,
	so they can be
	manipulated by adversaries
	(preventing this
	would require
	the establishment
	of secure channels
	between receivers and relays,
	which contradicts
	\autoref{ass:secure-channels}).

	In addition
	to the algorithms defined above,
	we also require
	the algorithm
	$\textsc{RecognizeOnion}($$i$, $O$, $\Rr$, $m$, $R$, $\Pp$, $\Pp\ba$, $PK_{\Pp}$, $PK_{\Pp\ba})$
	to be defined
	for RSOR
	analogously
	to its original definition
	by \citea{or_replies}:
	The algorithm compares
	the header of $O$
	to the $i$-th layer
	of the onion
	originally created
	from the given inputs.
	For our RSOR schemes,
	this means that
	the message
	is not checked
	since it is
	in the payload.

	\subsection{RSOR Ideal Functionalities}
	\label{sec:reor-ifs}
	In this section,
	we provide an overview
	of our ideal functionality \FRN
	for RSOR,
	focusing on the core ideas
	and issues
	in its construction.
	We base it on
	\citeauthor*{or_replies}'s
	\Ff[R]~\cite{or_replies}.
	We also explain
	the differences between
	\FRN and \Ff[R].
	\FRN
	is given in pseudocode
	in Appendix
	\ref{app:reor-if}.

	\subsubsection{Ideal Functionality}
	Recall from
	\autoref{ssec:rior}
	that,
	in an UC ideal functionality \Ff,
	all processing happens
	on a trusted party
	with a set of interfaces
	and procedures
	that the environment \Zz
	and the adversary \Ss
	interact with
	and receive information from.

	\paragraph{\textit{Fundamental concept}}
	Following the related work,
	onion layers are abstracted
	into a series of
	random temporary identifiers,
	the temp IDs ($tid$s).
	When an onion
	is forwarded through
	an honest relay,
	it receives a new $tid$,
	thus rendering
	the layers before the relay
	and after the relay unlinkable.
	Corrupted relays
	do not cause
	the $tid$
	to be replaced.
	As an onion
	is forwarded
	between honest relays,
	\Ss receives
	the $tid$s
	for the layers
	along with information
	on the layers' paths.
	This corresponds
	to the intuition that
	the adversary learns nothing
	about the content
	of the onion
	and cannot link
	the layers
	before and after
	an honest relay
	to each other.
	\FRN models
	all of the interactions
	between relays
	and between relays and receivers
	in the network
	including all possible
	adversarial capabilities
	in our adversary model.
	This includes corner cases
	and unusual adversary behavior.

	\paragraph{\textit{Detailed Interaction}}
	To send a new onion
	in \FRN,
	\Zz (for honest parties) or \Ss (for adversarial parties)
	notify \FRN
	with the desired receiver,
	message,
	and forward
	and reply paths.
	The resulting temp ID $tid$
	is given to \Ss,
	which controls the links
	and may choose to deliver
	the $tid$
	(i.e., the onion)
	to the next honest relay.
	If the onion
	was sent by a corrupted sender,
	\Ss receives
	all of the information
	on the onion
	for every path segment%
	\footnote{%
		This matches the behavior
		of real packet formats like Sphinx,
		where the sender
		calculates every layer itself
		and can thus recognize them.
	}.
	\Zz can decide
	when honest relays
	are done processing
	an onion and forward it
	to the next relay.

	While the onion
	is being forwarded
	through the network,
	\Ss can also choose
	to \emph{tag} it.
	This feature
	behaves like the tagging
	of a Sphinx packet
	(which is explained
	in \autoref{sssec:sphinx-insecure}).
	When the onion
	is forwarded
	by its last honest relay,
	it is either discarded
	(if it was tagged)
	or the message and receiver
	are leaked
	to \Ss.

	Reply handling
	comes in two variants:
	If the last honest relay
	is the exit relay
	of the onion,
	\FRN generates
	a reply ID $rid$
	(an abstraction
	of a connection
	that could be established
	by a protocol
	like TCP)
	and gives it to \Ss.
	\Ss can use
	the $rid$
	to provide the exit relay
	with a message
	for the reply onion.
	On the other hand,
	if the exit relay
	of the onion
	is corrupted,
	\Ss receives
	a $tid$
	it can use
	to send the reply onion
	with a reply message
	from any corrupted relay%
	\footnote{%
		A packet format
		cannot prevent
		the adversary
		from sending messages
		using any of the relays
		under its control.
	}.
	Since the channels
	between the relays
	and receivers
	are not secure,
	we have to assume that
	\Ss has complete control
	over message and reply delivery.
	Honest receivers
	can also initiate
	sending a reply
	to a message
	with an $rid$.
               
	\subsubsection{Comparison with Repliable Integrated-System OR}
	While we are able to
	reuse large parts
	of \citeauthor*{or_replies}'s \Ff[R]%
	~\cite{or_replies}
	to build \FRN,
	switching to the service model
	requires several additions
	to the existing \Ff[R].

	First,
	the new tagging feature
	marks onions
	to be discarded when
	they reach their
	last honest relay%
	\footnote{%
		Since the ideal functionality itself
		does not model
		adversarial processing,
		this is the case
		even when
		the last honest relay
		is not the exit relay.
		In that case,
		\Ss is provided with
		the remainder
		of the onion's path.
	}.
	Notably,
	tagging behaves asymmetrically:
	when a forward onion
	is tagged,
	the last honest relay
	discards it.
	On the reply path,
	a tagged onion
	is noticed
	by the reply receiver.
	If the reply receiver is honest,
	\Ss is not informed
	about the tag.

	Second, \FRN
	adds handling of
	message and reply delivery
	on the final link.
	Since these links
	are not secure,
	a real-life adversary
	is capable of
	manipulating the traffic
	on them in many ways.
	These include
	delivering messages
	and $rid$s to the wrong receivers
	or exit relays,
	swapping $rid$ and message pairs,
	impersonating
	exit relays to receivers
	and vice versa,
	and sending reply onions
	from other corrupted relays
	if the exit relay
	is corrupted.
	\FRN accounts for
	all of these
	adversarial behaviors
	in its interface
	to \Ss.

\subsection{RSOR Onion Properties}
	\label{sec:reor-props}

	Our properties are based on
	the onion properties
	originally defined
	by \citea{or_replies}
	with appropriate adjustments
	for the new algorithms
	and functionality.
	We detail one property
	and sketch
	the others here.
	For a formal definition
	of the other properties
	see Appendix~\ref{app:rnre-props-full}.

	\label{ssec:rnre-props}

	\subsubsection{RSOR-Correctness}
	This property requires that
	the scheme works as intended
	if no adversarial actions
	take place.
	Precisely,
	this means that,
	as the onion
	is processed using \Proc,
	each relay
	decrypts the correct
	address of the next relay
	and the next onion layer
	and the final layer
	decrypts to the message
	chosen by the sender.
	The same applies
	to the corresponding
	reply onion.

	\subsubsection{\texorpdfstring{Tagging-Forward Layer Unlinkability (See \autoref{fig:tflu})}{Tagging-Forward Layer Unlinkability}}
	\label{sssec:tflu}

	\TFLU is
	the RSOR equivalent
	of \FLU.
	It replaces onion layers
	between the honest sender
	and the first honest relay
	of an onion
	with random layers
	using that path segment.
	The introduction
	of tagging
	requires us to adjust
	this property.
	The original definition
	of \FLU
	provides the adversary
	with the challenge onion $O_1$
	or its replacement onion $\bar{O}_1$
	as produced
	by the sender
	and recognizes
	the processed layer $O_j$
	or $\bar{O}_j$
	when the adversary
	submits it
	to the processing oracle
	of the first honest relay.
	This recognition
	is based on
	the onion's header,
	so a tagged onion
	is still recognized.
	In the $b = 0$ case,
	the tagged $O_j$
	is processed normally
	and a tagged $O_{j+1}$
	is output to
	the adversary.
	However,
	in the $b = 1$ case,
	the tagged $\bar{O}_j$
	is recognized
	by the oracle
	and the original
	$O_{j+1}$ is output
	without being tagged.
	The adversary
	can simply finish
	the onion's processing
	to tell the difference.

	To alleviate this issue,
	\TFLU's oracle
	recognizes when
	the payload
	of the challenge onion
	has been tagged
	and \enquote{forwards}
	the tag
	by tagging $O_{j+1}$
	before sending it.
	If the honest relay
	is the exit relay,
	the oracle outputs nothing
	in this case.
	Since we assume that
	the payload is encrypted
	using a PRP,
	tagging the onion
	simply involves
	replacing the payload
	with randomness.

	\begin{figure}
		\centering
		\scalebox{0.72}{
		\begin{tikzpicture}
			[font=\footnotesize]
			\node[draw, red, minimum height=2cm] (P1) {$P_1$};
			\node[draw, left=of P1, minimum height=2cm] (Pr0) {\Ps};
			\node[anchor=north] (P0) at (P1.north) {\vphantom{$P_0$}};
			\node[anchor=south] (Pbkm1) at (P1.south) {\vphantom{$\bar{P}_0$}};

			\node[font=\small, left=1.7 of Pbkm1] (b) {(b)};
			\node[font=\small] (a) at (b.center |- P0.center) {(a)};

			\node[draw, red, right=0.5 of P1, minimum height=2cm] (Pjm1) {$P_{j-1}$};
			\node[draw, right=0.7 of Pjm1, minimum height=2cm] (Pj) {$P_j$};

			\node[draw, red, right=0.9 of Pj] (Pjp1) {$P_{j+1}$};
			\node[draw, red, right=0.6 of Pjp1] (Pnp1) {$P_{n+1}$};
			\node[draw, red, right=0.6 of Pnp1] (Pback) {$P\ba_{n\ba\!\!\!-1}$};
			\node[draw, right=0.6 of Pback] (P02) {\Ps};

			\node[below=0.2 of Pnp1] (m) {$(m, R)$};
			\node[below=0.2 of P02] (mback) {$m\ba$};

			\draw[->] (Pr0.east |- P0.center) -- node[auto] {$O_1$} (P1.west |- P0.center);
			\draw[->] (Pr0.east |- Pbkm1.center) -- node[auto] {$\bar{O}_1$} (P1.west |- Pbkm1.center);

			\draw[->, dashed] (P1.east |- P0.center) -- (Pjm1.west |- P0.center);
			\draw[->, dashed] (P1.east |- Pbkm1.center) -- (Pjm1.west |- Pbkm1.center);

			\draw[->] (Pjm1.east |- P0.center) -- node[auto] {$O_j$} (Pj.west |- P0.center);
			\draw[->] (Pjm1.east |- Pbkm1.center) -- node[auto] {$\bar{O}_{j}$} (Pj.west |- Pbkm1.center);

			\draw[->] (Pj) -- node[auto] {$O_{j+1}$} (Pjp1);

			\draw[->, dashed] (Pjp1) -- (Pnp1);
			\draw[->, dashed] (Pnp1) -- (Pback);
			\draw[->] (Pback) -- (P02);

			\draw[->] (Pnp1) -- (m);
			\draw[->] (P02) -- (mback);
		\end{tikzpicture}
		}
		\caption[The \emph{Forwards Layer Unlinkability} onion property]
		{
			The \TFLU
			onion property.
			The adversary is given
			either its chosen onion $O_1$ (a)
			or the random onion $\bar{O}_1$ (b)
			and must distinguish
			the two cases.
			Relays marked in
			\textcolor{red}{this style}
			are adversarial,
			while those
			in the normal style
			are honest.
			Omitted adversary-chosen paths
			are shown with dashed lines.
			The secondary output $O_{j+1}$
			after $P_j$
			is the same
			in both cases.
		}
		\label{fig:tflu}
	\end{figure}
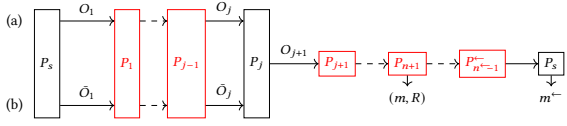

	The following full definition
	of \TFLU is derived from
	the original version
	by \citea{or_replies}.
	Our modifications
	are given in
	\tch{this style}.
	We also abbreviate
	\Recog as \RO
	for the sake of formatting.
	\begin{definition}
	\textit{{\ch Tagging}-Forward Layer Unlinkability}
	is defined as:

	\begin{enumerate}
		\item The adversary receives
			the router names $P_H$, \Ps
			and challenge public keys $PK_S$, $PK_H$,
			chosen by the challenger
			by letting $(PK_H, SK_H) \gets G(1^\lambda, p, P_H)$
			and $(PK_S, SK_S) \gets G(1^\lambda, p, \Ps)$.

		\item Oracle access:
			The adversary may submit
			any number of \textbf{Proc}
			and \textbf{Reply} requests
			for $P_H$ or \Ps
			to the challenger.
			For any $\textbf{Proc}(P_H, O)$,
			the challenger checks
			whether $\eta$ is
			on the $\eta_H$-list.
			If it is not on the list,
			it sends the output of
			$\Proc(SK_H, O, P_H)$,
			stores $\eta$
			on the $\eta_H$-list
			and $O$ on the $O_H$-list.
			For any $\textbf{Reply}(P_H, O, m)$,
			the challenger checks
			if $O$ is on the $O_H$-list
			and if so,
			the challenger sends ${\ch \Reply}(m, O, P_H, SK_H)$
			to the adversary.
			(Similar for requests
			on \Ps
			with the $\eta_S$-list).

		\item The adversary submits
			a message $m$,
			{\ch a receiver $R$,}
			a position $j$ with $1 \leq j \leq {\ch n}$,
			a path ${\ch\Pp} = (P_1, \ldots, P_j, \ldots, P_{\ch n})$
			with $P_j = P_H$,
			a path $\Pp\ba = (P_1\ba, \ldots, P\ba_{\ch n\ba} = \Ps)$
			and public keys
			for all relays $PK_i$
			($1 \leq i \leq {\ch n}$
			for the relays on the path
			and ${\ch n} < i$
			for the other relays).

		\item The challenger checks
			that the chosen paths are acyclic,
			the router names
			\tch{and public keys} are valid
			and that the same key is chosen
			if the router names are equal,
			and if so,
			sets $PK_j = PK_H$
			and $PK\ba_{\ch n\ba} = PK_S$
			and picks $b \in {0, 1}$
			at random.

		\item The challenger creates
			the onion $O_1$ with
			the adversary's input choice
			and honestly chosen randomness \Rr:
			\[{\ch \FormO(1, \Rr, m, R, \Pp, \Pp\ba, PK_{\Pp}, PK_{\Pp\ba})}\]
			and a replacement onion $\bar{O}_1$
			with the first part
			of the forward path ${\ch \bar{\Pp}} = (P_1, \ldots, P_j)$,
			a random message $\bar{m} \in M$,
			another honestly chosen randomness $\bar{\Rr}$,
			{\ch an honestly chosen random receiver $\bar{R}$},
			and an empty backward path $\bar{\Pp}\ba = ()$:
			\[{\ch \FormO(1, \bar{\Rr}, \bar{m}, \bar{R}, \bar{\Pp}, \bar{\Pp}\ba, PK_{\bar{\Pp}}, PK_{\bar{\Pp}\ba})}\]

		\item If $b = 0$,
			the challenger gives
			$O_1$ to the adversary. \\
			Otherwise,
			the challenger gives
			$\bar{O}_1$ to the adversary.

		\item Oracle access:
			If $b = 0$
			the challenger processes
			all oracle requests
			as in step 2). \\
			Otherwise,
			the challenger processes
			all requests
			as in step 2)
			except for:
			\begin{itemize}
				\item If $j < {\ch n}$:
					\begin{itemize}
						\item $\textbf{Proc}(P_H, O)$ with
							\[{\ch \RO(j, O, \bar{\Rr}, \bar{m}, \bar{R}, \bar{\Pp}, \bar{\Pp}\ba, PK_{\bar{\Pp}}, PK_{\bar{\Pp}\ba}),}\]
							{\ch and the expected payload $\delta_j$,}
							$\eta$ is not
							on the $\eta_H$-list
							and \[\Proc(SK_H, O, P_H) \neq {\ch (\bot, \bot)}\mathpunct{:}\]
							The challenger outputs $(P_{j+1}, O_c)$ with
							\begin{align*}
								O_c \gets {\ch \FormO(}{\ch j + 1, \Rr, m, R,}{\ch \Pp, \Pp\ba, PK_{\Pp}, PK_{\Pp\ba})}
							\end{align*}
							and adds $\eta$
							to the $\eta_H$-list
							and $O$
							to the $O_H$-list.

						\item $\textbf{Proc}(P_H, O)$ with
							\[{\ch \RO(j, O, \bar{\Rr}, \bar{m}, \bar{R}, \bar{\Pp}, \bar{\Pp}\ba, PK_{\bar{\Pp}}, PK_{\bar{\Pp}\ba})}\]
							{\ch but the incorrect payload $\delta'$,}
							$\eta$ is not
							on the $\eta_H$-list
							and \[\Proc(SK_H, O, P_H) \neq {\ch (\bot, \bot)}\mathpunct{:}\]
							The challenger outputs $(P_{j+1}, \ch{\tilde{O}_c})$
							with
							\begin{align*}
								O_c \gets {\ch \FormO(}{\ch j + 1, \Rr, m, R,}{\ch \Pp, \Pp\ba, PK_{\Pp}, PK_{\Pp\ba})}
							\end{align*}
							{\ch and $\tilde{O}_c$
							being $O_c$ with a tagged payload}
							and adds $\eta$
							to the $\eta_H$-list
							and $O$
							to the $O_H$-list.
					\end{itemize}

				\item If $j = {\ch n}$:
					\begin{itemize}
						\item $\textbf{Proc}(P_H, O)$
							with \[{\ch \RO(j, O, \bar{\Rr}, \bar{m}, \bar{R}, \bar{\Pp}, \bar{\Pp}\ba\!, PK_{\bar{\Pp}}, PK_{\bar{\Pp}\ba})},\]
							$\eta$ is not in the $\eta_H$-list
							and \[\Proc(SK_H, O, P_H) \neq {\ch (\bot, \bot)}\mathpunct{:}\]
							The challenger outputs $(m, {\ch R})$
							and adds $\eta$
							to the $\eta_H$-list
							and $O$
							to the $O_H$-list.

						\item $\textbf{Reply}(P_H, O, m\ba)$
							with \[{\ch \RO(j, O, \bar{\Rr}, \bar{m}, \bar{R}, \bar{\Pp}, \bar{\Pp}\ba\!, PK_{\bar{\Pp}}, PK_{\bar{\Pp}\ba})}\]
							$O$ is on the $O_H$-list
							and has not
							been replied before
							and \[{\ch \Reply}(m\ba, O, P_H, SK_H) \neq {\ch (\bot, \bot)}\mathpunct{:}\]
							The challenger outputs ${\ch (O_c, P_1\ba)}$
							with
							\begin{align*}
								O_c \gets {\ch \FormO(}{\ch j + 1, \Rr, m\ba, R,}{\ch \Pp, \Pp\ba, PK_{\Pp}, PK_{\Pp\ba}).}
							\end{align*}
					\end{itemize}
			\end{itemize}

		\item The adversary produces
			guess $b'$.
	\end{enumerate}
	${\ch \TFLU}$ is achieved
	if any PPT adversary \Aa
	cannot guess $b' = b$
	with a probability
	non-negligibly better than $\frac{1}{2}$.
	\end{definition}

	\subsubsection{\texorpdfstring{RSOR-Backwards Layer Unlinkability (See \autoref{fig:nblu})}{REOR-Backwards Layer Unlinkability}}

	The definition of
	the \NBLU property
	is analogous to
	the \TFLU property,
	but replaces the challenge onion
	on a path segment
	of the reply path.
	Additionally,
	the replacement onion
	is a forward onion,
	not a reply onion.
	This ensures that
	a scheme satisfying \NBLU
	will have
	indistinguishable
	forward and reply onions
	in the OR network.
	We do not need to
	adapt \NBLU
	to account for
	the tagging attack
	since the reply receiver
	processing oracle
	never produces output
	for onions
	with the challenge headers,
	so it cannot
	produce a wrong
	(untagged) output.

	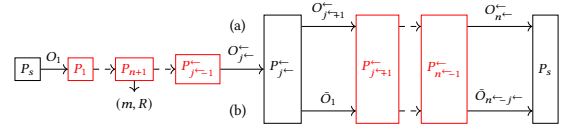
\begin{figure}
		\centering
		\scalebox{0.72}{
		\begin{tikzpicture}
			[font=\footnotesize]
			\node[draw, red, minimum height=2cm] (P1) {$P\ba_{j\ba\!\!\!+1}$};
			\node[draw, left=of P1, minimum height=2cm] (Pr0) {$P\ba_{j\ba}$};
			\node[anchor=north] (P0) at (P1.north) {\vphantom{$P_0$}};
			\node[anchor=south] (Pbkm1) at (P1.south) {\vphantom{$\bar{P}_0$}};

			\node[font=\small, left=2.2 of Pbkm1] (b) {(b)};
			\node[font=\small] (a) at (b.center |- P0.center) {(a)};

			\node[draw, red, right=0.4 of P1, minimum height=2cm] (Pjm1) {$P\ba_{n\ba\!\!\!-1}$};
			\node[draw, right=1.2 of Pjm1, minimum height=2cm] (Pj) {\Ps};

			\node[draw, red, left=0.8 of Pr0] (Pjp1) {$P\ba_{j\ba\!\!\!-1}$};
			\node[draw, red, left=0.4 of Pjp1] (Pnp1) {$P_{n+1}$};
			\node[draw, red, left=0.4 of Pnp1] (Pback) {$P_1$};
			\node[draw, left=0.5 of Pback] (P02) {\Ps};

			\node[below=0.2 of Pnp1] (m) {$(m, R)$};

			\draw[->] (Pr0.east |- P0.center) -- node[auto] {$O\ba_{j\ba\!\!\!+1}$} (P1.west |- P0.center);
			\draw[->] (Pr0.east |- Pbkm1.center) -- node[auto] {$\bar{O}_1$} (P1.west |- Pbkm1.center);

			\draw[->, dashed] (P1.east |- P0.center) -- (Pjm1.west |- P0.center);
			\draw[->, dashed] (P1.east |- Pbkm1.center) -- (Pjm1.west |- Pbkm1.center);

			\draw[->] (Pjm1.east |- P0.center) -- node[auto] {$O\ba_{n\ba}$} (Pj.west |- P0.center);
			\draw[->] (Pjm1.east |- Pbkm1.center) -- node[auto] {$\bar{O}_{n\ba\!\!\!-j\ba}$} (Pj.west |- Pbkm1.center);

			\draw[->] (Pjp1) -- node[auto] {$O\ba_{j\ba}$} (Pr0);

			\draw[->] (P02) -- node[auto] {$O_1$} (Pback);
			\draw[->, dashed] (Pback) -- (Pnp1);
			\draw[->, dashed] (Pnp1) -- (Pjp1);

			\draw[->] (Pnp1) -- (m);
		\end{tikzpicture}
		}
		\caption[The \emph{RSOR-Backwards Layer Unlinkability} onion property]
		{
			The \NBLU
			onion property.
			The adversary is initially given
			its chosen onion $O_1$.
			The oracle at the relay $P\ba_{j\ba}$
			will then return
			either its chosen onion $O\ba_{j\ba+1}$ (a)
			or the random onion $\bar{O}_1$ (b)
			and must distinguish
			the two scenarios.
			Relays marked in
			\textcolor{red}{this style}
			are adversarial,
			while those
			in the normal style
			are honest.
			Omitted adversary-chosen paths
			are shown with dashed lines.
		}
		\label{fig:nblu}
	\end{figure}

	\subsubsection{\texorpdfstring{RSOR-Tail Indistinguishability (See \autoref{fig:nti})}{REOR-Tail Indistinguishability}}

	The \NTI property
	is RSOR's counterpart
	for the \RTI property
	in the integrated-system model.
	It replaces the challenge onion
	with a random onion
	using the same path segment
	between an honest relay
	on the forward path
	and a second honest relay
	on the reply path.
	In \NTI,
	we do not allow
	the exit relay
	to be
	the honest relay
	since that situation
	is already covered
	by \NBLU.

	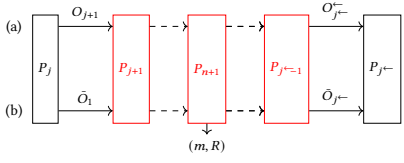
\begin{figure}
		\centering
		\scalebox{0.72}{
		\begin{tikzpicture}
			[font=\footnotesize]
			\node[draw, minimum height=2cm] (Pj) {$P_j$};
			\node[anchor=north] (Pjm1) at (Pj.north) {\vphantom{$P_{j-1}$}};
			\node[anchor=south] (Pbkm1) at (Pj.south) {\vphantom{$\bar{P}_{k-1}$}};

			\node[font=\small, left=0.2 of Pbkm1] (b) {(b)};
			\node[font=\small] (a) at (b.center |- Pjm1.center) {(a)};

			\node[draw, red, right=1 of Pj, minimum height=2cm] (Pjp1) {$P_{j+1}$};
			\node[draw, red, right=0.7 of Pjp1, minimum height=2cm] (Pnp1) {$P_{n+1}$};
			\node[draw, red, right=0.7 of Pnp1, minimum height=2cm] (Pback) {$P_{j\ba\!\!\!-1}$};
			\node[draw, right=1 of Pback, minimum height=2cm] (Pjback) {$P_{j\ba}$};

			\node[below=0.2 of Pnp1] (m) {$(m, R)$};

			\draw[->] (Pj.east |- P0.center) -- node[auto] {$O_{j+1}$} (Pjp1.west |- P0.center);
			\draw[->] (Pj.east |- Pbkm1.center) -- node[auto] {$\bar{O}_{1}$} (Pjp1.west |- Pbkm1.center);

			\draw[->, dashed] (Pjp1.east |- P0.center) -- (Pnp1.west |- P0.center);
			\draw[->, dashed] (Pjp1.east |- Pbkm1.center) -- (Pnp1.west |- Pbkm1.center);
			\draw[->, dashed] (Pnp1.east |- P0.center) -- (Pback.west |- P0.center);
			\draw[->, dashed] (Pnp1.east |- Pbkm1.center) -- (Pback.west |- Pbkm1.center);
			\draw[->, dashed] (Pnp1.east |- P0.center) -- (Pback.west |- P0.center);
			\draw[->, dashed] (Pnp1.east |- Pbkm1.center) -- (Pback.west |- Pbkm1.center);
			\draw[->] (Pback.east |- P0.center) -- node[auto] {$O\ba_{j\ba}$} (Pjback.west |- P0.center);
			\draw[->] (Pback.east |- Pbkm1.center) -- node[auto] {$\bar{O}_{j\ba}$} (Pjback.west |- Pbkm1.center);

			\draw[->] (Pnp1) -- (m);
		\end{tikzpicture}
		}
		\caption[The \emph{RSOR-Tail Indistinguishability} onion property]
		{
			The \NTI
			onion property.
			The adversary is given
			either its chosen onion $O_{j+1}$ (a)
			or the random onion $\bar{O}_{1}$ (b)
			and must distinguish
			the two cases.
			Relays marked in
			\textcolor{red}{this style}
			are adversarial,
			while those
			in the normal style
			are honest.
			Omitted adversary-chosen paths
			are shown with dashed lines.
		}
		\label{fig:nti}
	\end{figure}

	\subsubsection{Comparison with Integrated-System Properties}

	Besides the adaptions necessary
	for the different representation
	of the receiver
	and path,
	the main difference
	between the integrated-system properties
	and RSOR properties
	is the introduction
	of tagging for \TFLU.
	The other two properties
	do not require provisions
	for tagging mitigation:
	\NBLU remains unaffected
	because of
	the previously-described
	asymmetry of the tagging attack
	on the reply path,
	while \NTI
	never outputs
	a challenge onion
	from its oracles.

	\subsection{\texorpdfstring{Properties imply \FRN}{Properties imply FRSOR}}

	With the properties defined above,
	we can now define
	\emph{secure} RSOR schemes and protocols
	as those
	that satisfy the properties
	and behave
	in the way described
	in \autoref{ssec:reor-schemes}.

	\begin{definition}
		\label{def:sec-rnre-scheme}
		An RSOR scheme
		($G$, \FormO, \Proc, \Reply)
		that satisfies the four
		onion properties
		\textsl{RSOR-Correctness},
		\TFLU,
		\NBLU,
		and \NTI
		is a \emph{secure} RSOR scheme.
	\end{definition}

	\begin{definition}
		\label{def:sec-rnre-protocol}
		A \emph{secure RSOR protocol}
		is based on
		a secure RSOR scheme
		and behaves as defined
		in \autoref{ssec:reor-schemes}.
		A full definition
		of the protocol behavior
		is given in%
		\versionswitch{%
			~\cite{long-version}.
		}{%
			\ Appendix~\ref{app:reor-defs}.
		}
	\end{definition}

	\begin{theorem}
		\label{thm:rnre-props}
		A \emph{secure} RSOR protocol
		securely realizes \FRN.
	\end{theorem}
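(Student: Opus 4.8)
The plan is to follow the UC realization template of \citea{or_replies}, adapted to the service model and to the new tagging feature. First I would fix an arbitrary PPT real-world adversary \Aa and construct a simulator \Ss that runs \Aa internally while interacting only with \FRN. The simulator maintains a table mapping each $tid$ it receives from \FRN to a (partial) fabricated real onion, together with the path-segment information \FRN leaks. Whenever \FRN reports that an honest sender (or honest relay) emitted an onion with a known path prefix up to the first honest relay, \Ss fabricates the corresponding first real layer by calling \FormO with the true prefix of the path, but with a uniformly random message, a random receiver, and an empty backward path for everything it does not know --- exactly the replacement onion of \TFLU. At each honest relay \Ss produces a fresh fabricated layer for the next segment; when the last honest relay of an onion is reached, \FRN tells \Ss whether the onion was tagged and, if not, the message and receiver, which \Ss uses to produce the correctly addressed continuation (tagged onions are instead dropped or, on the reply path, simply delivered as randomized payloads, matching \autoref{ass:recreate-tag}). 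For corrupted senders \Ss learns every layer from \FRN and recognizes onions via \Recog exactly as in the real protocol.

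Second, I would specify how \Ss handles the insecure final links and replies. Since the channels between relays and receivers are unprotected, every message/$rid$ pair that \Aa delivers, drops, swaps, redirects, or forges is translated into the matching \FRN interface call (deliver to a different receiver, swap pairs, impersonate an exit relay, inject a reply from another corrupted relay when the exit relay is corrupted, and so on); the ideal functionality was built precisely to expose all of these, so this step is a mechanical case analysis. Honest-receiver replies are triggered through \FRN's reply interface, and the reply onion produced by the last honest relay is fabricated as a \emph{forward} replacement onion, matching \NBLU.

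Third, for the indistinguishability argument I would use a hybrid sequence over the (polynomially many) onions created during the execution and over their honest-relay segments. Hybrid $0$ is the real protocol with \Aa; the final hybrid is the ideal execution with \Ss and \FRN. Moving from one hybrid to the next replaces the real layers on one segment by the fabricated ones; the change is undetectable by \TFLU when the segment starts at an honest sender, by \NTI for an internal honest-to-honest segment that ends on the reply path, and by \NBLU for a segment on the reply path (which also yields forward/reply indistinguishability). \textsl{RSOR-Correctness} handles the segments processed without adversarial interference, and the PRP payload encryption of \autoref{ass:recreate-tag} guarantees that a tagged layer is statistically a fresh random payload on top of the correct header. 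A union bound over the at most polynomially many segments bounds the overall distinguishing advantage by a negligible function.

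The hard part will be the service-model bookkeeping rather than the cryptographic reductions: one must check that \Ss's state (the $tid$/$rid$ tables, the $\eta$-lists modelling replay protection, the set of onions already replied to) stays consistent with \FRN under \emph{every} admissible \Aa action, including the corner cases \FRN was designed for --- tagging an onion whose last honest relay is not the exit relay, a corrupted exit relay re-using a leaked $tid$, impersonation on the final link, and unsolicited replies. A secondary subtlety is aligning the hybrid boundaries with the precise oracle interfaces of \TFLU, \NBLU, and \NTI (in particular the tag-forwarding clauses of \TFLU for $j < n$ versus $j = n$), which is also where the minor fixes to the related integrated-system proof~\cite{or_replies} come into play.
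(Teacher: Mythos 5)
Your plan matches the paper's proof in structure and in the tools it invokes: build a simulator that fabricates truncated random onion layers from the path fragments \FRN leaks, carry \Aa's tagging of a payload into the ideal world via the \Tag interface, handle the unprotected relay--receiver links by a mechanical case analysis over \FRN's message/$rid$ interface, and then close the gap with a hybrid argument over all onion segments whose indistinguishability steps reduce to \TFLU, \NBLU, and \NTI.

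One part of your sketch is underspecified, and it happens to be exactly where the paper departs from (and fixes) \citea{or_replies}. You assign \TFLU to ``segments that start at an honest sender'', \NBLU to ``segments on the reply path'', and \NTI to the one segment that crosses the corrupted exit relay. This leaves \emph{internal} honest-to-honest segments on the forward path (neither endpoint is the honest sender) and internal segments on the reply path (neither endpoint is the honest reply receiver) without a directly applicable property: the \TFLU challenger only ever produces layers whose path begins at the challenge sender, and \NBLU only ever produces layers whose reply path ends at the challenge reply receiver. The paper resolves this with a two-step hybrid for every such internal segment: first apply \NTI once to truncate the challenge onion so that its (forward or reply) path genuinely starts at the earlier of the two honest relays (\Hh[_{2a}] resp.\ \Hh[_{2a}\ba] in the appendix), and only then apply \TFLU (resp.\ \NBLU) for the actual layer replacement (\Hh[_{2b}] resp.\ \Hh[_{2b}\ba]). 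You flag that ``the minor fixes to the related integrated-system proof come into play'' near the hybrid boundaries, which is the right instinct, but you should make the \NTI-then-\TFLU and \NTI-then-\NBLU combination explicit in the hybrid sequence, since without it the reduction for those intermediate segments would not typecheck against the game definitions. With that addition, your proposal lines up with the paper's proof.
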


	The proof of this theorem
	is analogous to
	\citeauthor*{or_replies}'s original proof
	for integrated-system protocols
	and \Ff[R]%
	~\cite{or_replies}.
	We provide a brief sketch
	of the proof
	and the required changes here.
	The full proof
	is given in%
	\versionswitch{%
		~\cite{long-version}.
	}{%
		\ Appendix~\ref{app:frn-uc-proof}.
	}
	First,
	we construct a simulator \Ss
	that interacts
	with the adversary \Aa
	and \FRN,
	replicating
	the adversary's real-world actions
	in \FRN
	and vice versa.
	After that,
	we use
	a hybrid proof
	to show that
	any RSOR protocol
	that satisfies
	the RSOR properties
	is indistinguishable from
	our simulator
	interacting with \FRN.

	When interacting
	with \FRN and \Aa,
	\Ss must correctly translate
	the onions and messages
	sent by honest relays
	in \FRN into
	the real world
	without complete information
	on the contents
	or paths
	of the onions.
	The RSOR properties
	ensure that
	the simulator can create
	replacement onions
	with randomized contents
	and truncated paths
	whenever it needs
	to do so
	without \Aa or \Zz
	noticing the replacement.

	\subsubsection*{\texorpdfstring{Differences to \Ff[R]}{Differences to FR}}
	Introducing tagging
	requires additional logic
	in the simulator.
	When the adversary tags
	an onion
	from an honest sender,
	the simulator can tell
	that the payload
	has been manipulated
	and tag the onion
	in \FRN.
	If the onion
	is from a corrupted sender,
	\Ss notices the manipulation
	when it processes the onion
	at its honest exit relay
	and can tag it
	in \FRN
	at that point.
	This behavior
	allows the simulator
	to handle tagging attacks
	correctly.

	In addition,
	\Ss now also
	needs to handle
	communication with
	external receivers
	involving messages
	and reply IDs.
	Translating these
	between \FRN
	and the real world
	involves forwarding
	the appropriate communications
	while potentially
	impersonating exit relays
	or receivers.

	In performing our
	hybrid machine construction,
	we fix an error
	in \citeauthor*{or_replies}'s proof:
	They apply the \FLU
	(and \BLU) properties
	to path segments
	that do not start
	at the honest sender
	(or end at
	the honest reply receiver)%
	~\cite{or_replies}.
	However,
	the properties
	do not apply
	to these situations.
	We fix this problem
	by applying the \NTI property
	once
	in these situations
	to truncate the paths
	of the corresponding onions.
	Afterwards,
	we can apply
	the \TFLU or \NBLU properties
	as before.
	The same change
	(using \RTI)
	can be used to
	repair the proof
	in~\cite{or_replies}.

	\section{Analyzing Sphinx's Security}
\label{part:sphinx}

	We now aim to show that the Sphinx securely realizes
	\FRN.
	Therefore,
	we use our RSOR properties
	and show that
	Sphinx satisfies each of them.
	However,
	to be able to do so,
	we have to make
	two further adjustments
	to the Sphinx protocol:
	Firstly,
	we remove the nymserver,
	as it allows
	for attacks
	(in \autoref{sec:nymserver}).
	Secondly,
	we rely on
	a slightly stronger
	cryptographic assumption
	for the proof
	(in \autoref{ssec:GapDH}).

	\subsection{The Sphinx Packet}
	\label{ssec:sphinx-packet-detail}

	A Sphinx packet
	consists of
	a header
	$\eta = (\alpha, \beta, \gamma)$
	and a payload $\delta$
	(see \autoref{sssec:sphinx-packet}
	for an overview).
	$\alpha$ contains
	the shared secret keys
	encrypted for each relay,
	$\beta$ holds
	the routing information
	padded by the sender,
	and $\gamma$ is a MAC
	for $\beta$.

	Each relay $P_i$ has
	an asymmetric key pair
	of the form
	$(x_i, y_i := g^{x_i})$
	with $x_i \in \mathbb{Z}_q^*$
	and $g$ as a generator
	of a cyclic group $\mathcal{G}$
	with the prime order $q$.
	$q$ should be approximately
	$2^{2\kappa}$,
	where $\kappa$ is
	the security parameter%
	~\cite{sphinx}.
	The public keys
	are used along with
	random oracles
	in order to encapsulate
	the shared secret keys used
	for the symmetric cryptographic primitives
	in each layer
	of the Sphinx packet.
	We refer to
	this part
	of the packet format
	as Sphinx's
	random oracle-key encapsulation mechanism (RO-KEM).

	The RO-KEM's ciphertext
	is $\alpha$,
	the first component
	of the Sphinx packet header%
	~\cite{sphinx}.
	$\alpha$ is formed
	using the public keys
	of the relays
	on the sender's chosen path.
	First,
	the sender chooses
	a secret $x \in \mathbb{Z}_q^*$.
	$x$ is the only source
	of randomness
	in the Sphinx packet.
	Using $x$,
	the sender generates
	the $\alpha$-key encapsulation
	for the first onion layer
	(which is layer 0):
	$\alpha_0 = g^x$.
	The sender
	then derives
	the shared secret
	for the first relay:
	$s_0 = y_0^x$,
	where $y_0$ is the public key
	of the first relay%
	\footnote{%
		Note that
		our indices $i$
		for relay names
		are relative to
		a single packet's path
		for clarity.
	}.
	When the first layer
	of the onion reaches
	the first relay,
	it can derive $s_0$
	using its secret key:
	$s_0 = \alpha_0^{x_0}$.
	The keys and random values
	used in the
	cryptographic primitives
	in the Sphinx packet
	are all derived from $s_0$
	using the random oracles
	$h_b$,
	$h_\rho$,
	$h_\mu$,
	and $h_\pi$%
	~\cite{sphinx}.

	In particular,
	the random oracle $h_b$
	is used to build
	the key encapsulations
	for the following onion layers:
	To calculate $\alpha_1$,
	the sender lets
	$b_0 = h_b(\alpha_0, s_0)$
	and $\alpha_1 = g^{xb_0}$.
	The corresponding shared secret
	is similarly calculated
	as $s_1 = y_1^{xb_0}$,
	using the public key
	of the second relay
	on the onion's path.
	The $b_i$ are referred to
	as \emph{blinding factors}.
	When the first relay
	processes the first onion layer,
	it can calculate
	$\alpha_1 = \alpha_0^{b_0}$
	with the blinding factor
	$b_0$ that it gets from
	the same random oracle $h_b$.
	With this,
	the second relay
	can derive its shared secret
	using its private key.
	This process is repeated
	for each onion layer.
	Note that
	only $\alpha_0$
	is included in
	the \enquote{final first onion layer} ---
	the later $\alpha_i$
	are calculated
	by the relays themselves
	and replace
	the respective $\alpha_{i-1}$
	in the packet
	during processing%
	~\cite{sphinx}.

	Since we focus
	on Sphinx's RO-KEM
	in the following sections,
	we only give
	a short overview
	of the remaining
	packet structure here.
	For a more detailed
	and technical description,
	see Appendix~\ref{sapp:sphinx-packet-format}.
	
	The other two components
	of the Sphinx header
	are $\beta$ and $\gamma$.
	In each onion layer $i$,
	$\gamma_i$ is simply
	a MAC $\mu$ of $\beta_i$
	keyed with $h_\mu(s_i)$.
	Here,
	$h_\mu$ is
	one of the random oracles
	keyed with
	that layer's shared secret.
	$\beta_i$ contains
	the address of
	the next relay $P_{i+1}$,
	the next MAC $\gamma_{i+1}$,
	and a prefix
	of the next $\beta_{i+1}$
	in that order.
	This information
	is XORed with the output
	of a PRG $\rho$
	keyed with $h_\rho(s_i)$.
	When processing
	an onion layer,
	the relay $P_i$
	creates $\beta_{i+1}$
	from $\beta_i$
	by appending a string
	of zero bits
	to $\beta_i$,
	XORing the PRG output
	onto the result,
	and cutting off
	the relay address
	and MAC
	at the start
	of the end result.
	The construction
	of Sphinx's padding scheme
	means that
	the XORing the zero-bit extension
	with the PRG output
	results in
	exactly the missing suffix
	of $\beta_{i+1}$.

	Finally,
	the Sphinx packet payload $\delta$
	is simply constructed
	thorugh multiple layers
	of encryption
	with a pseudorandom permutation (PRP) $\pi$,
	which is keyed with
	$h_\pi(s_i)$
	in the $i$-th layer.
	The payload
	is constructed by the sender
	in reverse order,
	layering encryption
	from the final onion layer
	to the first.
	The innermost layer
	contains
	a zero padding
	of length equal
	to the security parameter,
	the receiver address,
	and the message
	in that order.
	When a relay
	processes an onion,
	it simply removes
	one layer of encryption
	from the payload.
	The final relay
	performs an integrity check
	by checking that
	the zero padding
	at the beginning of the payload
	is intact.

	\subsection{Nymserver}
	\label{sec:nymserver}
	The original Sphinx definition
	by \citea{sphinx}
	uses a \emph{nymserver}
	to hold the reply headers
	for the onions
	in the network.
	To create a repliable onion,
	the sender first sends
	a non-repliable onion
	containing the reply header
	and a symmetric key
	for the payload
	to the nymserver
	under a pseudonym.
	The sender is responsible
	for embedding that pseudonym
	in the forward onion
	for the exit relay
	to find.
	If the receiver
	decides to reply,
	it sends its reply message
	back to the exit relay.
	The exit relay
	sends the reply message
	and the pseudonym
	to the nymserver.
	The nymserver
	finds the reply header
	associated with
	the pseudonym
	in its database,
	encrypts the payload,
	attaches it
	to the reply header,
	and sends the reply onion%
	~\cite{sphinx}.

	This nymserver construction
	is insecure in the presence
	of an attacker that can
	tag or drop onions
	and controls the nymserver
	or observes its traffic:
	A sender that wants to
	send a repliable onion
	actually sends two onions,
	one going to the nymserver.
	The attacker tags (or drops)
	one of these onions,
	hoping that it picked
	the nymserver onion.
	If it is successful,
	the reply header
	is not stored
	in the nymserver.
	When the exit relay
	(on behalf of the receiver,
	which is trivally linkable
	to the exit relay)
	sends the message
	and pseudonym
	to the nymserver,
	the attacker can observe
	that no onion
	is produced
	by the nymserver%
	\footnote{%
		If the attacker
		controls the nymserver itself,
		it can simply see
		that the requested pseudonym
		does not exist.
	},
	thus learning the connection
	between the sender
	and the receiver.

	To fix this problem
	and create a version of Sphinx
	that can be proven secure,
	we adapt Sphinx
	to include the reply header
	and payload symmetric key
	in the forward onion payload
	directly%
	\footnote{%
		Note that
		there are other feasible mitigations
		for this attack,
		e.g.,
		sending multiple reply headers
		ahead of time
		to make
		linking the missing reply header
		to the original sender
		more difficult
		for an attacker.
		However,
		our adaptation
		both completely prevents the attack
		and simplifies
		the Sphinx protocol
		while being very simple itself.
		We thus consider it
		the most appropriate fix
		for the problem.
	}.
	In the original definition
	of Sphinx,
	a forward payload
	has the contents
	$0_\kappa\|R\|m$
	and a reply payload
	has the contents
	$0_\kappa\|m\ba$%
	\footnote{%
		Messages are always padded
		to the full length
		of the PRP's message space.
	}%
	~\cite{sphinx}.
	In our adaptation,
	forward payloads
	are formed as
	$0_\kappa\|R\|\eta_0\|\tilde{k}\|m$
	with $R$
	as the receiver address,
	$\eta_0$ as the reply header,
	and $\tilde{k}$
	as the symmetric key.
	Non-repliable forward payloads
	and reply payloads
	contain
	zero paddings
	$\text{pad}\fa_{\kappa,N}$
	and $\text{pad}\ba_{\kappa,N}$
	of the appropriate lengths
	instead of
	$\eta_0\|\tilde{\kappa}$
	and $R\|\eta_0\|\tilde{\kappa}$
	respectively.
	This means
	a non-repliable forward payload
	contains
	$0_\kappa\|R\|\text{pad}\fa_{\kappa,N}\|m$
	and a reply payload
	contains
	$0_\kappa\|\text{pad}\ba_{\kappa,N}\|m\ba$.
	This change fixes
	the issue
	while removing a third party
	from the protocol
	and simplifying Sphinx.
	With this change,
	the exit relay
	is now responsible for
	taking the reply message
	from the receiver,
	embedding it
	in the reply payload,
	and sending
	the completed reply onion.

	\subsection{Sphinx Key Encapsulation Mechanism (KEM)}
	Before we move on
	to the RSOR properties,
	we discuss the security
	of Sphinx's random oracle (RO-)KEM,
	which is used
	to form the $\alpha$
	in each header.
	To simplify its analysis
	in isolation,
	we define the KEM
	separately from
	the rest of the packet format
	and prove that
	it satisfies
	a modified version
	of the IND-CCA property
	for KEMs (\KEMCCA)
	as defined by
	\citea{cramer-shoup}%
	\footnote{%
		The property
		we use here
		is defined
		in Section 7.1.2.
	}
	where the challenger
	outputs additional information
	that we require
	in our later onion property proofs.
	In those proofs,
	we will make use
	of our \SKEMCCA property
	in order to
	randomize the blinding factors
	and symmetric keys
	used in the challenge onion.
	In this section,
	we abbreviate the concatenation
	of the three random oracles
	$h_\rho$,
	$h_\mu$,
	and $h_\pi$
	as $h_*$
	for legibility
	since they operate identically
	with regards to the KEM.

	\begin{definition}[\ROKEM]
		The \ROKEM
		is a three-tuple of polynomial-time algorithms
		(\textsc{KeyGen}, \textsc{Encap}, \textsc{Decap})
		with:

		\begin{itemize}
			\item Key generation:
				\[\textsc{KeyGen}(1^\kappa) := (PK = g^x, SK = x)\]
				with $x \gets^R \mathbb{Z}^*_q$
				and $g$ as the public generator
				of the group $\mathcal{G}$.
			\item Encapsulation:
				\begin{equation*}
					\textsc{Encap}(1^\kappa, PK=g^x) := ((h_*(PK^{x'}), h_b(g^{x'}, PK^{x'})), g^{x'})
				\end{equation*}
				for a random $x' \in \mathbb{Z}^*_q$,
				with $h_*$ and $h_b$
				being the random oracles
				used to key the components
				of the Sphinx header.
			\item Decapsulation:
				\begin{equation*}
					\textsc{Decap}(1^\kappa, SK=x, \alpha=g^{x'}) := (h_*(\alpha^{SK}), h_b(\alpha, \alpha^{SK})),
				\end{equation*}
				where $\alpha$ is an encapsulation
				produced by \textsc{Encap}%
				~\cite{sphinx}.
		\end{itemize}

	\end{definition}

	\subsubsection{\SKEMCCA}
	The basic \KEMCCA game
	as defined by
	\citea{cramer-shoup}
	is unfortunately insufficient
	for our later proofs,
	where we require information
	on the RO outputs used in
	other (non-challenge) layers.
	We thus define
	a modified \SKEMCCA
	that outputs all of the information
	required to build
	a Sphinx packet
	while embedding the KEM challenge
	at an adversary-chosen index.

	\begin{definition}[\SKEMCCA]
		\begin{enumerate}
			\item[]
			\item First, the challenger chooses
				$(PK, SK) \gets \textsc{KeyGen}(1^\kappa)$
				and sends $PK$
				to the adversary.
			\item Oracle access:
				The adversary can submit requests
				to the decapsulation oracle \Oo
				and the random oracles
				$h_*$ and $h_b$.
			\item The adversary submits
				\begin{itemize}
					\item $n - 1$ public keys
						$y_0, \ldots, y_{j-1}, y_{j+1}, \ldots, y_{n-1}$
						with $n < N$.
						These are the public keys
						for the non-honest relays
						on the \enquote{KEM's path},
					\item and a position $j$ with $0 \leq j < n$.
				\end{itemize}
			\item The challenger checks
				that the $y_i$ are all distinct
				and valid public keys.
			\item The challenger creates
				the KEM challenge
				for the adversary
				by choosing a random
				$x' \in \mathbb{Z}^*_q$
				and generating
				the first $j$ encapsulations
				$\alpha_0$ through $\alpha_{j-1}$
				and secrets $s_i$
				like for a Sphinx header:
				$\alpha_i \gets g^{x'b_0\cdots b_{i-1}}$,
				$s_i \gets y_i^{x'b_0\cdots b_{i-1}}$,
				$b_i \gets h_b(\alpha_i, s_i)$.
			\item The challenger sends
				the adversary its \enquote{auxiliary information}%
				\footnote{%
					Note that this is
					all of the information
					required to build any layer
					of the Sphinx packet
					preceding the
					challenge layer $j$.
				}:
				\begin{itemize}
					\item The first encapulation $\alpha_0$,
					\item the $h_*$ outputs
						$h_*(s_0)$, \ldots, $h_*(s_{j-1})$,
					\item and the blinding factors
						$b_0$, \ldots, $b_{j-1}$.
				\end{itemize}%
			\item The challenger provides
				the adversary with
				the KEM challenge:
				It picks $b \in \{0,1\}$
				at random.
				If $b = 0$,
				the challenger lets
				$b_j = h_b(\alpha_j, s_j)$
				and gives the adversary
				$(\alpha_j, h_*(s_j), b_j)$.
				Otherwise,
				the adversary gets
				$(\alpha_j, r_1, b_j)$
				for $r_1 \gets^R \{0,1\}^{3\kappa}$
				and $b_j \gets^R \mathbb{Z}^*_q$.
				Finally,
				the challenger generates
				the rest of the KEM layers
				$\alpha_{j+1}$, \ldots, $\alpha_n$
				with $s_{j+1}$, \ldots, $s_n$
				and $b_{j+1}$, \ldots, $b_n$
				the same way as the previous layers
				(using the corresponding $b_j$)
				and gives the adversary
				$h_*(s_{j+1})$, \ldots, $h_*(s_n)$
				and $b_{j+1}$, \ldots, $b_n$.
			\item Oracle access:
				The adversary gets access
				to the same \Oo,
				$h_*$, and $h_b$ oracles.
			\item The adversary submits
				its guess $b'$
				to the challenger.
		\end{enumerate}
		\SKEMCCA is achieved
		if any PPT adversary \Aa
		cannot guess $b' = b$
		with a probability
		non-negligibly better
		than $\frac{1}{2}$.
	\end{definition}

	\subsubsection{Security}
	\label{ssec:GapDH}
	In order to show
	that the \ROKEM
	satisfies our security property,
	we can perform
	a reduction proof
	to a Diffie-Hellman assumption.
	The original Sphinx definition
	uses the DDH assumption
	for the \ROKEM
	in order to show
	that the blinding factors
	and symmetric keys
	are indistinguishable
	from randomness
	for the adversary%
	~\cite{sphinx}.
	However,
	a more detailed analysis
	reveals that
	the DDH assumption
	is insufficient
	to prove the \ROKEM secure:
	In a reduction
	from \SKEMCCA
	to the DDH assumption,
	the DDH attacker must
	simulate both
	the decapsulation oracle \Oo
	as well as the random oracles
	$h_*$ and $h_b$
	consistently
	for the \SKEMCCA attacker.
	Doing so correctly
	for adversary-chosen inputs
	involves identifying
	which encapsulations $\alpha$
	and secrets $s$
	belong together.
	Since being able
	to do so efficiently
	would already break DDH,
	the redution
	is not possible
	in this form.
	We use
	the \emph{Gap Diffie-Hellman} assumption instead.
	It states that
	the CDH problem is hard
	even given an oracle
	that solves the DDH problem.
	It is generally assumed
	that the GDH assumption
	holds in the standard elliptic curve groups,
	which Sphinx already uses
	~\cite{gap-problems}.
	For our reduction,
	this means
	that we are reducing
	\SKEMCCA to the CDH problem,
	but our CDH attacker
	additionally receives
	a DDH oracle \Oo[G].
	Using \Oo[G],
	the CDH attacker
	can correctly identify
	matching secrets
	and encapsulations.
	It follows that:

	\begin{theorem}
		The \ROKEM
		satisfies \SKEMCCA
		under the GDH assumption.
	\end{theorem}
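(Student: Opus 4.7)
The plan is a reduction from GDH. Assume an adversary \Aa wins \SKEMCCA with non-negligible advantage $\varepsilon$; I will build a CDH solver \Bb that, given a challenge $(g^a, g^b)$ and access to a DDH oracle $\mathcal{O}_G$, outputs $g^{ab}$ with non-negligible probability. The embedding is to set $PK := g^a$ as the honest relay's public key and, once \Aa commits to its adversarial keys $y_0, \ldots, y_{j-1}, y_{j+1}, \ldots, y_{n-1}$ and index $j$, to declare $\alpha_0 := g^b$, which implicitly fixes the unknown Sphinx seed $x' := b$. \Bb samples fresh blinding factors $\beta_0, \ldots, \beta_{n-1} \in \mathbb{Z}_q^*$ and computes every $\alpha_i := \alpha_0^{\beta_0 \cdots \beta_{i-1}}$ as a concrete group element, even though the corresponding secrets $s_i = y_i^{b\,\beta_0 \cdots \beta_{i-1}}$, and in particular $s_j = g^{a\,b\,\beta_0 \cdots \beta_{j-1}}$, remain unknown to \Bb.

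Next, \Bb must simulate the random oracles $h_*$, $h_b$ and the decapsulation oracle \Oo consistently without knowing any $s_i$. I apply the standard gap technique: \Bb answers each query by lazy sampling, but on every input $w$ it uses $\mathcal{O}_G$ to test whether $w$ coincides with a hidden secret, namely $w = s_i$ iff $\mathcal{O}_G(\alpha_i, y_i, w) = 1$ (with $y_j := PK$). When a match is found, \Bb returns the pre-recorded value for that slot---for instance the programmed blinding factor $\beta_i$ for $h_b(\alpha_i, s_i)$---and tags $w$ so that later queries remain consistent. Decapsulation queries on an arbitrary $\alpha$ are handled analogously: \Bb searches its recorded random-oracle calls for a $w$ with $\mathcal{O}_G(\alpha, PK, w) = 1$, returning the matched outputs or sampling fresh values tagged for later matching.

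For the challenge layer, \Bb hands \Aa the triple $(\alpha_j, K^*, b_j^*)$ with uniformly fresh $K^*$ and $b_j^*$ regardless of the bit it is implicitly playing, and continues the chain past position $j$ by setting $\alpha_{j+1} := \alpha_j^{b_j^*}$ and sampling the remaining $h_*(s_i), b_i$ under the same bookkeeping. Because $h_*$ and $h_b$ are independent random oracles, \Aa's view is information-theoretically identical to both \SKEMCCA experiments unless it ever submits an input equal to $s_j$ to $h_*$ or $h_b$. By the assumption on $\varepsilon$, such a query must occur with non-negligible probability; \Bb detects it via $\mathcal{O}_G(\alpha_j, PK, w) = 1$ and outputs $w^{(\beta_0 \cdots \beta_{j-1})^{-1}} = g^{ab}$, contradicting the GDH assumption.

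The main obstacle I anticipate is the consistency bookkeeping across the entire chain. Because \Aa chooses the $y_i$ itself and can legitimately know their discrete logs, it can compute and query $h_*$ or $h_b$ on $s_i$ for every $i \neq j$, and similarly submit decapsulation queries that collide with arbitrary previously programmed values; \Bb must recognise each such query via $\mathcal{O}_G$, reply with its previously committed value, and never contradict itself across interleaved $h_*$, $h_b$, \Oo, and chain-construction calls. Proving that the simulated view is perfectly distributed conditional on the bad event of a query on $s_j$ not occurring, and that this same bad event is exactly what lets \Bb extract $g^{ab}$, is the technical heart of the argument.
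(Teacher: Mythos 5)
Your reduction is essentially identical to the paper's proof in Appendix~\ref{app:Sphinx-KEM-Proof}: embed the CDH instance into the Sphinx chain, use the DDH oracle for lazy-sampling bookkeeping of $h_*$, $h_b$, and \Oo, and extract $g^{ab}$ the moment \Aa queries one of the random oracles on $s_j$. The only cosmetic differences are where you anchor the challenge ($\alpha_0 = g^b$ rather than $\alpha_j = g^{x_2}$, which forces you to strip off the blinding-factor exponents at extraction time) and that you skip the paper's explicit bit-flip, instead arguing directly that the two \SKEMCCA worlds coincide until the bad event.
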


	\begin{proof}[Proof Sketch]
		We use a \SKEMCCA attacker \Aa
		on the \ROKEM
		to construct
		a GDH attacker $\Bb^{\Oo[G]}$
		with the DDH oracle
		\Oo[G].
		\Bb uses its CDH challenge
		$(g, g^{x_1}, g^{x_2})$
		as a public key $PK = g^{x_1}$
		and a challenge $\alpha = g^{x_2}$.
		After getting
		the challenge index $j$
		and the other public keys
		from \Aa,
		\Bb sets
		$\alpha_j = \alpha$
		and constructs
		the rest of the KEM's path
		by choosing
		random blinding factors $b_i$
		and $h_*$ outputs
		for every layer,
		programming its choices
		into the ROs.
		\Bb also randomly chooses
		a bit $b \in \{0,1\}$
		to determine which scenario
		it simulates.
		The only difference
		between the two
		is whether
		the random oracle outputs
		on layer $j$
		are programmed
		into the ROs.
		\Bb then simulates
		the oracles,
		keeping the outputs consistent
		using its DDH oracle \Oo[G].
		In order for \Aa
		to tell the scenarios apart,
		it must request
		$h_*$ or $h_b$
		with $g^{x_1x_2}$,
		allowing \Bb
		to win the GDH game.
		For the full proof see Appendix~\ref{app:Sphinx-KEM-Proof}.
		\renewcommand{\qedsymbol}{}
	\end{proof}

	\subsection{Sphinx Security Analysis}	

	In order to prove
	that Sphinx
	securely realizes \FRN,
	we show
	that it satisfies
	our new properties.
	For the sake of brevity,
	we only sketch
	the proof for \TFLU here
	along with short proof outlines
	for \NBLU and \NTI.
	\versionswitch{%
		Extended explanations
		of the proofs
		are given in
		Appendix~\ref{sapp:nbluproof}
		and Appendix~\ref{sapp:ntiproof}.
		The full proofs for \NBLU
		and \NTI
		can be found in%
		~\cite{long-version}.
	}{%
		The full proofs for
		all three properties
		can be found in
		Appendix~\ref{sapp:tfluproof},
		Appendix~\ref{sapp:nbluproof}
		and Appendix~\ref{sapp:ntiproof}.
	}
	RSOR-Correctness follows
	from inspection
	of the Sphinx scheme.

	\begin{theorem}
		\label{thm:tflu}
		Sphinx satisfies \TFLU
		under the GDH assumption.
	\end{theorem}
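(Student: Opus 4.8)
The plan is to transform the $b=0$ world of the \TFLU game into the $b=1$ world by a short hybrid argument whose only cryptographic step is a reduction to \SKEMCCA, and hence (by the previous theorem) ultimately to GDH. The guiding observation is that in both worlds everything the adversary observes from the $(j+1)$-th onion layer onward is identical: in the $b=0$ world $P_H$ honestly processes the submitted $j$-th layer, which by RSOR-Correctness yields exactly $\FormO(j+1,\Rr,m,R,\Pp,\Pp\ba,\dots)$, and in the $b=1$ world the oracle hands out precisely this onion. Thus the entire content of the proof is to show that the first $j$ layers of the real challenge onion are indistinguishable from the first $j$ layers of the truncated random onion $\bar O_1$, even given that $(j+1)$-th layer and the \textbf{Proc}/\textbf{Reply} oracles for $P_H$ and \Ps.

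First I would pass to a hybrid $G_1$ in which the random-oracle outputs $h_\rho(s_j),h_\mu(s_j),h_\pi(s_j)$ and the blinding factor $b_j$ at the honest layer $j$ of the challenge onion are replaced by uniform values. This is justified by \SKEMCCA: the reduction plays \SKEMCCA with $P_H$ as the challenge relay at index $j$, forwards the \TFLU adversary's random-oracle queries, and uses the auxiliary information (the first encapsulation, and the $h_*$ outputs and blinding factors for all non-challenge layers) to assemble the entire onion; the \SKEMCCA bit decides whether layer $j$'s keys are real (giving $G_0$) or random (giving $G_1$). The reduction simulates $P_H$'s processing oracle without $SK_H$ by querying the \SKEMCCA decapsulation oracle \Oo on the submitted encapsulation $\alpha'$. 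The only onions for which \Oo is unavailable are those with $\alpha'$ equal to the challenge encapsulation $\alpha_j$; for these the reduction returns the hard-coded continuation $O_{j+1}$ --- tagged if the submitted payload differs from the expected one, mirroring the \TFLU oracle's tag-forwarding --- when the header matches the $j$-th layer, and otherwise rejects with $(\bot,\bot)$, which is correct up to a negligible MAC-forgery probability since $h_\mu(s_j)$ is never exposed. Replay protection, the $\eta_H$-/$O_H$-lists, and \Ps's oracle (for which the reduction knows $SK_S$, as \SKEMCCA concerns only $P_H$) are handled directly.

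Next I would argue $G_1\approx G_2$, where $G_2$ additionally replaces the hidden content of the challenge onion --- the tail of the forward path, the message, the receiver, and the reply path --- by the truncated and randomized values, so that the challenge onion becomes exactly $\bar O_1$ and $G_2$ coincides with the real $b=1$ game. With layer $j$'s symmetric keys now uniform and unknown to the adversary, the parts of layer $j$ reflected in the adversary's view --- $\beta_j$ (content XORed with the PRG stream), $\gamma_j$ (a MAC under a random key), the payload $\delta_j$ it computes for the $j$-th layer (a PRP image under a random key), and any payload it obtains after tagging that layer (a PRP preimage of an adversary-chosen value, hence effectively fresh randomness by Assumption~\ref{ass:recreate-tag}) --- are all indistinguishable from uniform strings regardless of the underlying content, by the standard PRG/PRF/PRP properties of Sphinx's symmetric primitives; the padding suffix that $P_{j-1}$ exposes is derived from $s_{j-1}$ and so is distributed identically in both hybrids. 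The $\alpha$-chain of layers $1,\dots,j$ depends only on $PK_1,\dots,PK_j$ and a uniform exponent, hence is equidistributed in the two worlds, and the encapsulation carried out of $P_H$ in $O_{j+1}$ is a uniform group element independent of that chain ($b_j$ is uniform in $G_1$, and relating the two chains would require $SK_H$). Hence the change of hidden message, receiver, forward tail, and reply path is undetectable, which completes the chain to the $b=1$ game.

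The main obstacle I expect is the faithful emulation of $P_H$'s (and \Ps's) oracle inside the \SKEMCCA reduction while reproducing every corner of the \TFLU oracle: distinguishing mauled copies of the $j$-th layer (tagged payload vs.\ altered header vs.\ exact replay), routing each correctly through \Oo versus the hard-coded answers without ever querying \Oo on the forbidden challenge encapsulation, and keeping the replay lists, the exit-relay padding check, and the tag-forwarding behaviour consistent throughout. The second delicate point is making the $G_1\to G_2$ statistical step rigorous across Sphinx's intertwined $\beta$-padding, payload-PRP, and $\alpha$-blinding chains --- isolating exactly which bits the adversary observes at each of $P_1,\dots,P_{j-1}$ and at the relays it processes after $P_H$, and verifying that every such bitstring is either fixed by data the adversary already controls or uniform once layer $j$'s keys are random.
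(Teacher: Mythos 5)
Your overall strategy matches the paper's: a hybrid argument whose cryptographic anchor is \SKEMCCA at the honest relay's layer, followed by symmetric-primitive games to swap out the hidden content. The identification that the chain after $P_j$ becomes a fresh KEM once $b_{j-1}$ is uniform, and that the $\alpha$-chain of the first $j$ layers depends only on $PK_1,\dots,PK_j$, are exactly the observations the paper exploits. However, your two-hybrid compression $G_0\to G_1\to G_2$ hides several steps that the paper shows are genuinely non-trivial, and at least one step is missing outright.

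First, $G_2$ as you describe it does \emph{not} coincide with the $b=1$ game: in $G_1$ the symmetric keys and blinding factor at the honest layer are uniform random, but in the $b=1$ world $\bar O_1$ is an honestly-formed Sphinx onion whose last layer keys are $h_*(\bar s_{j-1})$ for a concrete $\bar s_{j-1}$. Bridging these requires a second, reverse \SKEMCCA step (the paper's $\Hh[_{13}]$) in addition to the \enquote{undoing} of the temporary PRP/PRG replacements ($\Hh[_{14}]$). Your sketch goes only one way.

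Second, your one-paragraph treatment of $\beta$, $\gamma$, $\delta$ glosses over the padding interleaving, which is the single most delicate part of the proof. In the $b=0$ world the padding suffix $\Phi_{j-1}$ of the layers shown to $P_{j-1}$ is woven into the padding $\Phi_j,\dots,\Phi_{n-1}$ of the continuation $O_{j+1}$ that the oracle hands back; in the $b=1$ world $\bar O_1$'s padding and $O_{j+1}$'s padding are generated independently. Saying \enquote{the padding suffix that $P_{j-1}$ exposes is derived from $s_{j-1}$ and so is distributed identically} misses this cross-correlation entirely. The paper resolves it with a multi-hybrid sequence ($\Hh[_6]$--$\Hh[_8]$ and then $\Hh[_{11}]$--$\Hh[_{12}]$) that first detaches the two paddings using the now-random PRG stream, then re-weaves a fresh padding into the back part, each step reducing to a separate LR-CPA/OTP game. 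A single appeal to \enquote{PRG/PRP properties} does not settle this.

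Third, you do not separate $j=n$ from $j<n$. When $P_j$ is the exit relay, there is no $O_{j+1}$; the oracle instead outputs $(m,R)$ and, on a \textbf{Reply} call, a reply onion built from the header embedded in the payload. Arguing that this equals the hard-coded $\FormO(j+1,\Rr,m\ba,R,\dots)$ requires RSOR-Correctness \emph{plus} the intermediate hybrids that force the exit relay to only answer when the submitted payload matches $\delta_{n-1}$ bit for bit (the paper's Case~1 $\Hh[_3],\Hh[_4]$); this is what blocks an adversary from mauling the embedded reply header without triggering the zero-padding check. You flag \enquote{the exit-relay padding check} as a corner case, but for $j=n$ it is a load-bearing part of the reduction, not an edge condition, and the framing that \enquote{everything from layer $j+1$ onward is identical in both worlds} does not apply verbatim there.
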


	\begin{proof}[Proof Sketch]
		To show that Sphinx
		satisfies \TFLU,
		we prove that an adversary
		cannot distinguish
		the $b = 0$ scenario
		of the \TFLU game
		from the $b = 1$ scenario
		through a hybrid argument
		starting at $b=0$
		and ending at $b=1$.
		We gradually move
		from the $b=0$ scenario
		to the $b=1$ scenario
		by constructing one hybrid
		after another.
		The first hybrid game
		is simply the $b=0$ scenario.
		Each hybrid changes
		the challenge onion
		or the challenger's behavior
		in an indistinguishable way
		until the final hybrid
		is identical to
		the $b=1$ scenario.
		We summarize
		the hybrids' construction here,
		see Appendix~\ref{sapp:tfluproof}
		for details.
		For clarity,
		we also separate the proof
		into two cases:
		One where $j = n$
		and the other
		where $j < n$.
		In the \TFLU game,
		$j$ determines
		the index of the honest relay
		on the challenge onion's path.

		\noindent\textbf{Case 1 $(j = n)$:}
		In this case,
		the honest relay
		on the forward path
		is also
		the exit relay
		of the onion
		and thus also
		the sender of the reply onion.
		This case demands
		that the entire forward onion
		is replaced with an onion
		using the same forward path,
		but containing
		a random message
		and receiver
		as well as
		having an empty reply path.

		To perform this replacement,
		we take advantage
		of the innermost layer
		of the PRP protecting
		the payload's contents:
		Since the adversary
		cannot tell
		what the contents
		of the payload are,
		we can replace
		the message,
		receiver,
		and reply header
		with a random message,
		a random receiver,
		and padding (i.e., an empty reply path)
		respectively.
		
		The original reply header
		is still used
		to form the actual reply onion
		at the honest exit relay.
		We use the PRP's security
		and the implicit integrity check
		in the forward payload
		(via the zero padding)
		to show that
		the adversary
		does not notice this change
		and cannot manipulate
		the payload
		in order to
		to affect the reply header.
		With this,
		the forward onion
		given to the adversary
		is completely independent
		of the challenge onion
		and contains
		a random message,
		a random receiver,
		and an empty reply path.
		The reply header of
		the challenge onion
		is still used
		on the reply path.
		This is identical
		to the $b=1$ scenario.

		\noindent\textbf{Case 2 $(j < n)$:}
		Now,
		we consider the case
		where the honest relay
		on the forward path
		is not the exit relay.
		This case
		is more complex
		since the forward path
		of our replacement onion
		is not the same
		as the forward path
		of the challenge onion.
		Our hybrids thus need to
		truncate the forward path
		of the challenge onion
		at the honest relay
		in order to move
		to the $b=1$ scenario.
		In addition,
		our hybrids must also handle
		a potential tagging attack
		by the adversary correctly.
		Proceed as follows:

		\begin{enumerate}
			\item First,
				we need to handle
				tagging attacks.
				If the adversary tags
				the challenge onion
				in the $b=0$ scenario,
				the payload will be mangled
				by the PRP decryption step
				in the honest relay's processing.
				The key aspect
				of this mangling
				is that the PRP's output
				in that case
				is indistinguishable from
				a random bitstring
				of the same length.
				We take advantage
				of this fact
				in the first hybrids:
				Instead of processing
				the challenge payload normally
				at the honest relay,
				the hybrid checks
				whether the payload has been modified.
				If not,
				the next layer
				of the challenge onion
				with the correct
				next payload layer
				is output.
				If the payload
				has been modified (i.e., tagged),
				then the payload output
				is replaced with
				a random bitstring.
				This change reduces to
				the PRP's security
				at the honest relay
				and the reply receiver.

				With the new payload handling,
				we have effectively decoupled
				the onion layers
				before and after
				the honest relay:
				Any adversary modification
				to the layers
				before the honest relay
				either results
				in a failure in processing
				(due to the MAC
				if the header is modified)
				or in the payload
				being replaced
				with randomness
				(if the payload is modified).
				We can thus change
				the contents
				of both the header and payload layers
				before the honest relay
				and replace the changed onion
				with the original challenge onion
				at the honest relay oracle
				without the adversary
				being able to
				\enquote{sneak} information
				through the honest relay.
				The contents
				of the payload layers
				before the honest relay
				are replaced with
				a random message,
				a random receiver,
				and an empty reply path
				like in the $j = n$ case.

			\item Next,
				we \enquote{detach}
				the layers of the challenge onion's header
				before the honest relay
				(which we will refer to as $A$)
				from the layers
				after it
				(referred to as $B$).

				As a first step,
				the final layer of $A$
				is no longer processed
				at the honest relay.
				The first layer of $B$
				is always output
				as the next layer's header
				instead.
				We detach
				$A$ from $B$
				in multiple steps:
				First,
				replace the innermost contents
				of $A$'s header layers
				with the contents
				of a final Sphinx header layer
				such that
				the $A$ layers
				are now formed
				as if the onion's path
				ended at the honest relay.
				We can do so
				since the PRG
				protects that header information
				until the honest relay.
				Second,
				the KEM keys
				used to build $B$
				are replaced with
				a new instance
				of the KEM
				that starts at
				the honest relay.
				We use the randomness
				of the blinding factor
				multiplied onto the exponent
				at the honest relay
				to secure this step.
				Finally,
				the padding in $B$,
				which still contains
				$A$'s padding,
				is changed:
				The bits corresponding to
				$A$'s padding
				are replaced with random bits
				in the last header layer
				of $B$.
				This is possible
				due to Sphinx's padding construction
				and the PRG's security.
				After these steps,
				the $A$ header layers
				are completely independent
				of the $B$ header layers.

			\item In a last step,
				we adjust $B$'s padding
				and KEM construction
				so that the $B$ layers
				are built like part
				of the complete challenge onion again.
				Now,
				$A$ is an independent forward onion
				with a truncated path,
				an empty reply path,
				and a random message
				and receiver,
				while $B$
				is built like
				the original challenge onion.
				This corresponds to
				the $b=1$ scenario
				of \TFLU.
		\end{enumerate}
		\renewcommand*{\qedsymbol}{}
	\end{proof}

	\noindent\NBLU:
	The \NBLU proof works
	similarly to the \TFLU proof.
	First,
	the part
	of the challenge reply onion
	after the honest relay
	is \enquote{detached}
	from the first part.
	Then,
	the second part's header and payload contents
	are adapted into
	those of a forward onion.

	\noindent\NTI:
	For the \NTI proof,
	we have to truncate
	the forward and reply paths
	of the challenge onion
	to move from one scenario
	to the other.
	Truncating the forward path
	is like performing
	hybrid \Hh[_9]
	from the \TFLU proof
	and adjusting the padding accordingly.
	Truncating the reply path
	is analogous to hybrid \Hh[_7].

	Given that Sphinx
	satisfies
	each of the RSOR properties,
	it follows that

	\begin{theorem}
		\label{thm:sphinx-if}
		Nymserverless Sphinx securely realizes \FRN
		under the GDH assumption.
	\end{theorem}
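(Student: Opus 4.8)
The plan is to combine the RSOR property results of this section with the generic realization theorem, Theorem~\ref{thm:rnre-props}. The first step is to argue that nymserverless Sphinx, as adapted in \autoref{sec:nymserver}, is genuinely an instance of the RSOR scheme interface of Definition~\ref{def:rnre}: its key-generation procedure is the sampling of $(x, g^x)$, \FormO is the layer-by-layer construction of the header $\eta = (\alpha,\beta,\gamma)$ together with the PRP-layered payload $\delta$ (now carrying the reply header $\eta_0$ and payload key $\tilde k$ inside the forward payload), \Proc is the per-hop processing consisting of the MAC check on $\beta$, the KEM step on $\alpha$, and the removal of one PRP layer from $\delta$ (with the zero-padding integrity check at the exit relay), and \Reply is the exit relay's assembly of the reply onion from the stored payload key and the received reply message. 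I would also record that \Recog is well defined for this scheme, comparing only the header of the $i$-th layer as required in \autoref{ssec:reor-schemes}, since the message now lives in the payload.

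Second, I would check that the adapted scheme meets the structural and cryptographic assumptions of \autoref{ssec:reor-assumptions}. The maximum path length $N$ is baked into Sphinx's padding scheme (Assumption~\ref{ass:path-length}); acyclicity of honest paths (Assumption~\ref{ass:honest-sender}) is a constraint on senders, not the scheme; replay protection (Assumption~\ref{ass:replays}) is realized by dropping headers bit-for-bit identical to ones already processed; a sender always supplies the relay public keys to \FormO (Assumption~\ref{ass:pki}); $O = (\eta,\delta)$ is immediate (Assumption~\ref{ass:header-payload}); and, crucially, the payload is encrypted with the PRP $\pi$ layer by layer, so any tagging fully randomizes it (Assumption~\ref{ass:recreate-tag}). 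The remaining assumptions about receivers dropping onions, non-empty paths, unsolicited replies, and secure channels only between honest relays are satisfied by the protocol wrapper of \autoref{ssec:reor-schemes}, which the nymserverless Sphinx protocol follows by construction (including the exit relay's new role of embedding the reply message into the reply payload).

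Third, having placed nymserverless Sphinx inside the RSOR framework, I would invoke the property results: RSOR-Correctness holds by inspection of the scheme (the padding construction reconstructs each $\beta_{i+1}$ correctly, the RO-KEM recovers each shared secret, and the PRP layers peel off in order), Theorem~\ref{thm:tflu} gives \TFLU, and the arguments sketched above give \NBLU and \NTI, all under the GDH assumption, which enters through the \SKEMCCA security of the \ROKEM used throughout those proofs. By Definition~\ref{def:sec-rnre-scheme} this makes nymserverless Sphinx a secure RSOR scheme, and by Definition~\ref{def:sec-rnre-protocol} the associated protocol is a secure RSOR protocol. Applying Theorem~\ref{thm:rnre-props} then yields that it securely realizes \FRN under the GDH assumption, which is the claim.

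I expect the main obstacle to be the bookkeeping of the second step together with the \Recog check: one must verify that replacing $\eta_0 \| \tilde k$ by fixed-length padding in the non-repliable and reply payload formats preserves the per-layer payload length (so that the PRP arguments in the property proofs and the reply-header recovery in \Reply both go through) and that forward and reply onions remain indistinguishable while in transit. Everything else is a direct application of the results already established in this section.
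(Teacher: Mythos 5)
Your proposal matches the paper's argument exactly: it establishes that nymserverless Sphinx fits the RSOR scheme interface and the protocol assumptions, shows the four properties (RSOR-Correctness by inspection, \TFLU via Theorem~\ref{thm:tflu}, plus \NBLU and \NTI, all under GDH via \SKEMCCA), and then applies Theorem~\ref{thm:rnre-props} through Definitions~\ref{def:sec-rnre-scheme} and~\ref{def:sec-rnre-protocol}. Your attention to the length-preserving padding in the reformatted payload and to the fact that \Recog compares headers only is a worthwhile sanity check, but it is the same route the paper takes.
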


	\section{Discussion}
	\label{sec:discussion}

	In this section we argue
	that the relaxation
	used in our
	ideal functionality
	is acceptable in practice
	under certain conditions. We further give
	detailed advice on
	when we consider
	the usage of Sphinx
	secure.

	\subsection{\texorpdfstring{Relaxed Security Requirements of \FRN}{Relaxed Security Requirements of FRSOR}}
	\label{sec:relaxedSec}

	We stress that
	\FRN still prevents
	all tagging attacks
	except for the malleability attack
	on the payload.
	Thus,
	if an adversary
	is able to
	link layers
	of an honest sender's onion
	that do not involve the exit relay, 
	both \FRN
	and our properties
	are not achieved.

	However, we
	also emphasize that
	the reduction in security
	due to allowing
	the malleability attack
	can be critical
	and RSOR protocols
	should only be used
	under the following conditions:

	First,
	the exit relay choice
	must not leak critical information. 
	Thus,
	the exit relay
	cannot depend on
	the receiver%
	\footnote{%
		This might be considered
		in order to
		have an exit relay
		that is
		topologically close
		to the receiver.
	} 
	or the included message.
	However, if
	(as in many protocols)
	the exit relays are chosen
	uniformly at random
	or randomly according to
	their capacities,
	linking the sender
	to the exit relay
	does not provide the adversary
	with useful new information.

	Second, senders must not
	react to receiving
	tagged payloads
	in a way
	that an adversary
	can distinguish
	from the reception
	of a well-formed,
	correct payload.
	Otherwise,
	the payload tagging attack
	becomes a threat
	once again:
	A corrupted exit relay
	could tag
	a reply payload
	and observe
	that one sender
	reacts differently
	to receiving
	a reply onion
	and link a sender-receiver pair.
	In particular,
	if a packet format is used
	as part of
	a larger protocol,
	an honest sender
	receiving a reply message
	must not trigger
	any output
	to the adversary.

	Third,
	one must not use
	sessions visible to
	the exit relay.
	This includes techniques
	that allow the linking
	of multiple onions
	for the same sender-receiver pair
	to each other.
	Otherwise,
	one packet from the sender
	can be tagged
	while the other packets
	are not.
	This allows for
	conclusions about
	the receiver
	and content
	of the tagged onion
	via the onions
	which are linked
	to the tagged one
	because of the session.
	
	\vspace{-0.5\baselineskip}
	\subsection{Using Sphinx in a Network}

	We want to give
	practical advice
	on the cases in which
	we consider
	the usage of Sphinx
	in its intended RSOR model
	a secure choice.
	First of all,
	Sphinx should only be used
	with the changes
	we apply
	in this paper.
	Precisely,
	one must include
	the fix for path padding
	(random bits instead of zero bits)
	and the Sphinx reply header
	has to be included
	in the forward payload
	to avoid attacks
	based on the nymserver.

	Further,
	it is important
	to ensure that
	all of the conditions
	mentioned for security
	in \autoref{sec:relaxedSec}
	are met.
	This means that the exit relay
	is chosen randomly,
	independently of the communication parties
	and contents,
	senders' reactions
	to well-formed replies
	and tagged replies
	are indistinguishable,
	and multiple onions
	between the same communication parties
	cannot be linked
	to each other
	at the relays.

	\section{Conclusion}
	\label{sec:conclusion}

	With this paper,
	we provide
	the privacy formalization
	for repliable service-model OR protocols
	with our ideal functionality \FRN
	and the four new onion properties
	RSOR-Correctness,
	Tagging-Forward Layer Unlinkability,
	RSOR-Backwards Layer Unlinkability,
	and RSOR-Tail Indistinguishability.
	We prove that
	these properties
	imply \FRN together.
	Our formalization
	pays close attention
	to consider all
	the new edge cases
	of the service model
	and to relax
	the security
	in an acceptable way
	to allow for
	payload malleability. 

	For Sphinx,
	we correct
	the cryptographic group assumption
	for the Sphinx scheme
	from DDH to GDH.
	Additionally,
	we realize that
	a security proof
	is not possible
	in the presence of
	the nymserver.
	We propose to
	include the reply header
	in the forward payload instead.
	With our formal groundwork,
	we are then able to
	prove this adapted version
	of Sphinx
	secure according to \FRN.
	To our knowledge,
	we are the first
	to provide a security proof
	for Sphinx
	at our level of detail.
	We thereby ensure
	that the OR and mix networks
	that base their protocols
	on the Sphinx packet format
	can rely on
	a thoroughly-analyzed
	foundation again.

	Of course,
	there is still progress
	in OR,
	mix networks,
	and packet formats
	to be expected
	in future works.
	Authors of new OR
	and mix network protocols
	benefit from our investigation
	of the criteria
	for using Sphinx
	in a secure way
	to decide whether or not
	to base their protocols
	on Sphinx.
	In addition,
	future works
	on OR and mix network packet formats
	profit from
	our formalization
	in the service model,
	especially by using
	our new onion properties
	to build their algorithms
	and prove their privacy.

	\bibliographystyle{ACM-Reference-Format}
	\begin{acks}
		We thank Dennis Hofheinz
		for pointing us
		towards the GDH assumption.
	\end{acks}
	\bibliography{sources-long}


\begin{thebibliography}{18}


\ifx \showCODEN    \undefined \def \showCODEN     #1{\unskip}     \fi
\ifx \showDOI      \undefined \def \showDOI       #1{#1}\fi
\ifx \showISBNx    \undefined \def \showISBNx     #1{\unskip}     \fi
\ifx \showISBNxiii \undefined \def \showISBNxiii  #1{\unskip}     \fi
\ifx \showISSN     \undefined \def \showISSN      #1{\unskip}     \fi
\ifx \showLCCN     \undefined \def \showLCCN      #1{\unskip}     \fi
\ifx \shownote     \undefined \def \shownote      #1{#1}          \fi
\ifx \showarticletitle \undefined \def \showarticletitle #1{#1}   \fi
\ifx \showURL      \undefined \def \showURL       {\relax}        \fi
\providecommand\bibfield[2]{#2}
\providecommand\bibinfo[2]{#2}
\providecommand\natexlab[1]{#1}
\providecommand\showeprint[2][]{arXiv:#2}

\bibitem[Ando and Lysyanskaya(2021)]%
        {formal_replies}
\bibfield{author}{\bibinfo{person}{Megumi Ando} {and} \bibinfo{person}{Anna Lysyanskaya}.} \bibinfo{year}{2021}\natexlab{}.
\newblock \showarticletitle{Cryptographic Shallots: A Formal Treatment of Repliable Onion Encryption}. In \bibinfo{booktitle}{\emph{Theory of Cryptography}}, \bibfield{editor}{\bibinfo{person}{Kobbi Nissim} {and} \bibinfo{person}{Brent Waters}} (Eds.). \bibinfo{publisher}{Springer International Publishing}, \bibinfo{address}{Cham}, \bibinfo{pages}{188–221}.
\newblock
\showISBNx{978-3-030-90456-2}


\bibitem[Bellare and Rogaway(2005)]%
        {intro_crypto_4}
\bibfield{author}{\bibinfo{person}{Mihir Bellare} {and} \bibinfo{person}{Phillip Rogaway}.} \bibinfo{year}{2005}\natexlab{}.
\newblock \showarticletitle{The Birthday Problem}.
\newblock In \bibinfo{booktitle}{\emph{Introduction to Modern Cryptography}}. \bibinfo{pages}{273–274}.
\newblock
\urldef\tempurl%
\url{https://web.cs.ucdavis.edu/~rogaway/classes/227/spring05/book/main.pdf}
\showURL{%
\tempurl}


\bibitem[Camenisch and Lysyanskaya(2005)]%
        {formal_onion}
\bibfield{author}{\bibinfo{person}{Jan Camenisch} {and} \bibinfo{person}{Anna Lysyanskaya}.} \bibinfo{year}{2005}\natexlab{}.
\newblock \showarticletitle{A Formal Treatment of Onion Routing}. In \bibinfo{booktitle}{\emph{Advances in Cryptology – CRYPTO 2005}}, \bibfield{editor}{\bibinfo{person}{Victor Shoup}} (Ed.). \bibinfo{publisher}{Springer Berlin Heidelberg}, \bibinfo{address}{Berlin, Heidelberg}, \bibinfo{pages}{169–187}.
\newblock
\showISBNx{978-3-540-31870-5}


\bibitem[Canetti(2001)]%
        {uc}
\bibfield{author}{\bibinfo{person}{R. Canetti}.} \bibinfo{year}{2001}\natexlab{}.
\newblock \showarticletitle{Universally composable security: a new paradigm for cryptographic protocols}. In \bibinfo{booktitle}{\emph{Proceedings 42nd IEEE Symposium on Foundations of Computer Science}}. \bibinfo{pages}{136–145}.
\newblock
\showISSN{1552-5244}
\urldef\tempurl%
\url{https://doi.org/10.1109/SFCS.2001.959888}
\showDOI{\tempurl}


\bibitem[Chaum(1981)]%
        {untrace-mail}
\bibfield{author}{\bibinfo{person}{David~L Chaum}.} \bibinfo{year}{1981}\natexlab{}.
\newblock \showarticletitle{Untraceable electronic mail, return addresses, and digital pseudonyms}.
\newblock \bibinfo{journal}{\emph{Commun. ACM}} \bibinfo{volume}{24}, \bibinfo{number}{2} (\bibinfo{year}{1981}), \bibinfo{pages}{84–90}.
\newblock


\bibitem[Chen et~al\mbox{.}(2015)]%
        {hornet}
\bibfield{author}{\bibinfo{person}{Chen Chen}, \bibinfo{person}{Daniele~E. Asoni}, \bibinfo{person}{David Barrera}, \bibinfo{person}{George Danezis}, {and} \bibinfo{person}{Adrain Perrig}.} \bibinfo{year}{2015}\natexlab{}.
\newblock \showarticletitle{HORNET: High-Speed Onion Routing at the Network Layer}. In \bibinfo{booktitle}{\emph{Proceedings of the 22nd ACM SIGSAC Conference on Computer and Communications Security}} (Denver, Colorado, USA) \emph{(\bibinfo{series}{CCS '15})}. \bibinfo{publisher}{Association for Computing Machinery}, \bibinfo{address}{New York, NY, USA}, \bibinfo{pages}{1441–1454}.
\newblock
\showISBNx{9781450338325}
\urldef\tempurl%
\url{https://doi.org/10.1145/2810103.2813628}
\showDOI{\tempurl}


\bibitem[Chen et~al\mbox{.}(2018)]%
        {taranet}
\bibfield{author}{\bibinfo{person}{Chen Chen}, \bibinfo{person}{Daniele~E. Asoni}, \bibinfo{person}{Adrian Perrig}, \bibinfo{person}{David Barrera}, \bibinfo{person}{George Danezis}, {and} \bibinfo{person}{Carmela Troncoso}.} \bibinfo{year}{2018}\natexlab{}.
\newblock \showarticletitle{TARANET: Traffic-Analysis Resistant Anonymity at the Network Layer}. In \bibinfo{booktitle}{\emph{2018 IEEE European Symposium on Security and Privacy (EuroS P)}}. \bibinfo{pages}{137–152}.
\newblock
\urldef\tempurl%
\url{https://doi.org/10.1109/EuroSP.2018.00018}
\showDOI{\tempurl}


\bibitem[Cramer and Shoup(2004)]%
        {cramer-shoup}
\bibfield{author}{\bibinfo{person}{Ronald Cramer} {and} \bibinfo{person}{Victor Shoup}.} \bibinfo{year}{2004}\natexlab{}.
\newblock \showarticletitle{Design and Analysis of Practical Public-Key Encryption Schemes Secure against Adaptive Chosen Ciphertext Attack}.
\newblock \bibinfo{journal}{\emph{SIAM J. Comput.}} \bibinfo{volume}{33}, \bibinfo{number}{1} (\bibinfo{date}{jan} \bibinfo{year}{2004}), \bibinfo{pages}{167–226}.
\newblock
\showISSN{0097-5397}
\urldef\tempurl%
\url{https://doi.org/10.1137/S0097539702403773}
\showDOI{\tempurl}


\bibitem[Danezis and Goldberg(2009)]%
        {sphinx}
\bibfield{author}{\bibinfo{person}{George Danezis} {and} \bibinfo{person}{Ian Goldberg}.} \bibinfo{year}{2009}\natexlab{}.
\newblock \showarticletitle{Sphinx: A Compact and Provably Secure Mix Format}. In \bibinfo{booktitle}{\emph{2009 30th IEEE Symposium on Security and Privacy}}. \bibinfo{pages}{269–282}.
\newblock
\showISSN{2375-1207}
\urldef\tempurl%
\url{https://doi.org/10.1109/SP.2009.15}
\showDOI{\tempurl}


\bibitem[Diaz et~al\mbox{.}(2021)]%
        {nym}
\bibfield{author}{\bibinfo{person}{Claudia Diaz}, \bibinfo{person}{Harry Halpin}, {and} \bibinfo{person}{Aggelos Kiayias}.} \bibinfo{year}{2021}\natexlab{}.
\newblock \bibinfo{title}{The Nym Network}.  (\bibinfo{year}{2021}).
\newblock
\urldef\tempurl%
\url{https://nymtech.net/nym-whitepaper.pdf}
\showURL{%
\tempurl}


\bibitem[Dingledine et~al\mbox{.}(2004)]%
        {tor}
\bibfield{author}{\bibinfo{person}{Roger Dingledine}, \bibinfo{person}{Nick Mathewson}, {and} \bibinfo{person}{Paul Syverson}.} \bibinfo{year}{2004}\natexlab{}.
\newblock \showarticletitle{Tor: The {Second-Generation} Onion Router}. In \bibinfo{booktitle}{\emph{13th USENIX Security Symposium (USENIX Security 04)}}. \bibinfo{publisher}{USENIX Association}, \bibinfo{address}{San Diego, CA}.
\newblock
\url{https://www.usenix.org/conference/13th-usenix-security-symposium/tor-sec-}
\url{ond-generation-onion-router}


\bibitem[Goldberg(2016)]%
        {sphinx-impl}
\bibfield{author}{\bibinfo{person}{Ian Goldberg}.} \bibinfo{year}{2016}\natexlab{}.
\newblock \bibinfo{title}{SphinxClient.py - Padding Fix}.
\newblock \bibinfo{howpublished}{Git version control system}.
\newblock
\newblock
\shownote{Python source file SphinxClient.py in the UCL-InfoSec Sphinx repository - available at \url{https://github.com/UCL-InfoSec/sphinx/blob/c05b7034eaffd8f98454e0619b0b15-} \url{48a9fa0f42/SphinxClient.py\#L67}}.


\bibitem[Goldschlag et~al\mbox{.}(1996)]%
        {onion_routing}
\bibfield{author}{\bibinfo{person}{David Goldschlag}, \bibinfo{person}{Michael Reed}, {and} \bibinfo{person}{Paul Syverson}.} \bibinfo{year}{1996}\natexlab{}.
\newblock \showarticletitle{Hiding Routing Information}. In \bibinfo{booktitle}{\emph{Proceedings of the First International Workshop on Information Hiding}}.
\newblock
\showISBNx{978-3-540-61996-3}
\urldef\tempurl%
\url{https://doi.org/10.1007/3-540-61996-8_37}
\showDOI{\tempurl}


\bibitem[Komlo et~al\mbox{.}(2020)]%
        {walking_onions}
\bibfield{author}{\bibinfo{person}{Chelsea~H Komlo}, \bibinfo{person}{Nick Mathewson}, {and} \bibinfo{person}{Ian Goldberg}.} \bibinfo{year}{2020}\natexlab{}.
\newblock \showarticletitle{Walking onions: Scaling anonymity networks while protecting users}. In \bibinfo{booktitle}{\emph{Proceedings of the 29th USENIX Conference on Security Symposium}}. \bibinfo{pages}{1003–1020}.
\newblock


\bibitem[Kuhn et~al\mbox{.}(2020)]%
        {break_onion}
\bibfield{author}{\bibinfo{person}{Christiane Kuhn}, \bibinfo{person}{Martin Beck}, {and} \bibinfo{person}{Thorsten Strufe}.} \bibinfo{year}{2020}\natexlab{}.
\newblock \showarticletitle{Breaking and (Partially) Fixing Provably Secure Onion Routing}. In \bibinfo{booktitle}{\emph{2020 IEEE Symposium on Security and Privacy (SP)}}. \bibinfo{pages}{168–185}.
\newblock
\showISSN{2375-1207}
\urldef\tempurl%
\url{https://doi.org/10.1109/SP40000.2020.00039}
\showDOI{\tempurl}


\bibitem[Kuhn et~al\mbox{.}(2021)]%
        {or_replies}
\bibfield{author}{\bibinfo{person}{Christiane Kuhn}, \bibinfo{person}{Dennis Hofheinz}, \bibinfo{person}{Andy Rupp}, {and} \bibinfo{person}{Thorsten Strufe}.} \bibinfo{year}{2021}\natexlab{}.
\newblock \showarticletitle{Onion Routing with Replies}. In \bibinfo{booktitle}{\emph{Advances in Cryptology – ASIACRYPT 2021}}, \bibfield{editor}{\bibinfo{person}{Mehdi Tibouchi} {and} \bibinfo{person}{Huaxiong Wang}} (Eds.). \bibinfo{publisher}{Springer International Publishing}, \bibinfo{address}{Cham}, \bibinfo{pages}{573–604}.
\newblock
\showISBNx{978-3-030-92075-3}


\bibitem[Okamoto and Pointcheval(2001)]%
        {gap-problems}
\bibfield{author}{\bibinfo{person}{Tatsuaki Okamoto} {and} \bibinfo{person}{David Pointcheval}.} \bibinfo{year}{2001}\natexlab{}.
\newblock \showarticletitle{The Gap-Problems: A New Class of Problems for the Security of Cryptographic Schemes}. In \bibinfo{booktitle}{\emph{Public Key Cryptography}}, \bibfield{editor}{\bibinfo{person}{Kwangjo Kim}} (Ed.). \bibinfo{publisher}{Springer Berlin Heidelberg}, \bibinfo{address}{Berlin, Heidelberg}, \bibinfo{pages}{104–118}.
\newblock
\showISBNx{978-3-540-44586-9}


\bibitem[Piotrowska et~al\mbox{.}(2017)]%
        {loopix}
\bibfield{author}{\bibinfo{person}{Ania~M Piotrowska}, \bibinfo{person}{Jamie Hayes}, \bibinfo{person}{Tariq Elahi}, \bibinfo{person}{Sebastian Meiser}, {and} \bibinfo{person}{George Danezis}.} \bibinfo{year}{2017}\natexlab{}.
\newblock \showarticletitle{The loopix anonymity system}. In \bibinfo{booktitle}{\emph{26th $\{$USENIX$\}$ Security Symposium ($\{$USENIX$\}$ Security 17)}}. \bibinfo{pages}{1199–1216}.
\newblock


\end{thebibliography}
\vfill
\newpage
\appendix

	\section{RSOR onion properties}
	\label{app:rnre-props-full}
	The following definitions
	are copied
	from \citea{or_replies}.
	Our modifications
	are given in
	\tch{this style}.
	We also abbreviate \Recog
	as \RO
	for the sake of
	formatting.

	\subsection{RSOR-Correctness}
	\label{sapp:rnre-correctness}

	\emph{{\ch RSOR}-Correctness} is defined as%
	\footnote{
		This definition
		was originally proposed by
		\citea{formal_onion}
		in a slightly different format.
	}:

	Let ($G$, \FormO, \Proc, \tch{\Reply})
	be a \tch{RSOR} scheme
	with maximal path length N
	\tch{and polynomial $|\Nn|$ and $|D|$.}
	Then for all $n, n\ba\!\! <\!\! N, \lambda \in \mathbb{N}$,
	all choices of
	the public parameter $p$,
	all choices of
	randomness \Rr,
	all choices of
	\tch{receiver $R$},
	all choices of
	forward paths
	{\ch$\Pp = (P_1, \ldots , P_n)$
 	and backward paths
	$\Pp\ba = (P_1\ba, \ldots, P_n\ba)$},
	all $(PK^{(\leftarrow)}_i, SK^{(\leftarrow)}_i)$
	generated by $G(1^\lambda, p, P^{(\leftarrow)}_i)$,
	all messages $m$, $m\ba$,
	all possible choices
	of internal randomness
	used by \tch{\FormO},
	\Proc,
	and \tch{\Reply},
	the following needs to hold:

	\noindent
	\textbf{Correctness of forward path.}
	\vspace{-\abovedisplayskip}
	\begin{gather*}
		Q_i = P_i,\text{ for }1 \leq i \leq n \text{ and }Q_1 := P_1, \\
		O_1 \gets {\ch \FormO(1, \Rr, m, R, \Pp, \Pp\ba, PK_{\Pp}, PK_{\Pp\ba})}, \\
		(O_{i+1}, Q_{i+1}) \gets \Proc(SK_i, O_i, Q_i).
	\end{gather*}
	\textbf{Correctness of request reception.}
	\vspace{-\abovedisplayskip}
	\begin{equation*}
		(m, {\ch R}) = \Proc(SK_{\ch n}, O_{\ch n}, P_{\ch n}).
	\end{equation*}
	\textbf{Correctness of backward path.}
	\vspace{-\abovedisplayskip}
	\begin{align*}
		Q_i\ba &= P_i\ba \text{ for } 1 \leq i \leq n \\
		\text{ and } (O_1\ba, Q_1\ba) &\gets \Reply(m\ba, O_{\ch n}, P_{\ch n}, SK_{\ch n}), \\
		(O_{i+1}\ba, Q_{i+1}\ba) &\gets \Proc(SK_i\ba, O_i\ba, Q_i\ba).
	\end{align*}
	\textbf{Correctness of reply reception.}
	\vspace{-\abovedisplayskip}
	\begin{equation*}
		(m\ba, \bot) = \Proc(SK\ba_{\ch n\ba}, O\ba_{\ch n\ba}, P\ba_{\ch n\ba}).
	\end{equation*}

	\subsection{\texorpdfstring{RSOR-Backw. Layer Unlinkability (\NBLU)}{REOR-Backward Layer Unlinkability}}
	\label{sapp:nblu}

	\textit{{\ch RSOR}-Backward Layer Unlinkability} is defined as:
	\begin{enumerate}
		\item The adversary receives
			the router names $P_H$, \Ps
			and challenge public keys $PK_S$, $PK_H$,
			chosen by the challenger
			by letting $(PK_H, SK_H) \gets G(1^\lambda, p, P_H)$
			and $(PK_S, SK_S) \gets G(1^\lambda, p, \Ps)$.

		\item Oracle access:
			The adversary may submit
			any number of \textbf{Proc}
			and \textbf{Reply} requests
			for $P_H$ or \Ps
			to the challenger.
			For any $\textbf{Proc}(P_H, O)$,
			the challenger checks
			whether $\eta$ is
			on the $\eta_H$-list.
			If it is not on the list,
			it sends the output of
			$\Proc(SK_H, O, P_H)$,
			stores $\eta$
			on the $\eta_H$-list
			and $O$ on the $O_H$-list.
			For any $\textbf{Reply}(P_H, O, m)$,
			the challenger checks
			if $O$ is on the $O_H$-list
			and if so,
			the challenger sends ${\ch \Reply}(m, O, P_H, SK_H)$
			to the adversary.
			(Similar for requests
			on \Ps
			with the $\eta_S$-list).

		\item The adversary submits
			a message $m$,
			{\ch a receiver $R$,}
			a position $j\ba$ with $0 \leq j\ba \leq {\ch n\ba}$,
			a path ${\ch\Pp} = (P_1, \ldots, P_{\ch n})$
			where $P_{\ch n} = P_H$ if $j\ba = 0$,
			a path $\Pp\ba = (P_1\ba, \ldots, P_{j\ba}\ba, \ldots, P\ba_{\ch n\ba} = \Ps)$
			with the honest relay $P_H$
			at backward position $j\ba$
			if $1 \leq j\ba \leq {\ch n\ba}$,
			and the second honest relay \Ps
			at position ${\ch n\ba}$,
			and public keys
			for all relays $PK_i$
			($1 \leq i \leq {\ch n}$
			for the relays on the path
			and ${\ch n} < i$
			for the other relays).

		\item The challenger checks
			that the chosen paths are acyclic,
			the router names
			\tch{and public keys} are valid
			and that the same key is chosen
			if the router names are equal,
			and if so,
			sets $PK\ba_{j\ba} = PK_H$
			(resp. $PK_{\ch n}$ if $j\ba = 0$),
			$PK\ba_{\ch n\ba} = PK_S$
			and sets bit $b$
			at random.

		\item The challenger creates
			the onion $O_1$ with
			the adversary's input choice
			and honestly chosen randomness \Rr:
			\[{\ch \FormO(1, \Rr, m, R, \Pp, \Pp\ba, PK_{\Pp}, PK_{\Pp\ba})}\]
			and sends $O_1$ to the adversary.

		\item The adversary gets oracle access
			as in step 2)
			except if:
			\begin{enumerate}
				\item The request is\ldots
					\begin{itemize}
						\item for $j\ba > 0$:
							$\textbf{Proc}(P_H, O)$
							with
							\begin{align*}
								{\ch \RO(}{\ch n + j\ba\!, O, \Rr, m, R,}{\ch \Pp, \Pp\ba\!, PK_{\Pp}, PK_{\Pp\ba}\!)} = True,
							\end{align*}
							$\eta$ is not
							on the $\eta_H$-list
							and \[\Proc(SK_H, O, P_H) \neq {\ch (\bot, \bot)}\mathpunct{:}\]
							stores $\eta$
							on the $\eta_H$
							and $O$
							on the $O_H$-list and \ldots

						\item for $j\ba = 0$:
							$\textbf{Reply}(P_H, O, m\ba)$
							with
							\begin{align*}
							{\ch \RO(}{\ch n, O, \Rr, m, R,}{\ch\Pp, \Pp\ba, PK_{\Pp}, PK_{\Pp\ba})} = True,
							\end{align*}
							$O$ is
							on the $O_H$-list
							and no onion
							with this $\eta$
							has been replied to before
							and \[{\ch \Reply}(m\ba, O, P_H, SK_H) \neq {\ch (\bot, \bot)}\ldots\]
					\end{itemize}
					\ldots then:
					The challenger picks
					the rest of the return path
					${\ch \bar{\Pp}} = (P_{j\ba + 1}\ba, \ldots, P\ba_{\ch n\ba})$,
					an empty backward path $\bar{\Pp}\ba = ()$,
					and a random message $\bar{m}$,
					another honestly chosen randomness $\bar{\Rr}$,
					{\ch an honestly chosen random receiver $\bar{R}$},
					and generates
					an onion $\bar{O}_1$:
					\[{\ch \FormO(1, \bar{\Rr}, \bar{m}, \bar{R}, \bar{\Pp}, \bar{\Pp}\ba, PK_{\bar{\Pp}}, PK_{\bar{\Pp}\ba})}\]

					\begin{itemize}
						\item If $b = 0$,
							the challenger calculates
							\begin{align*}
								(O_{j\ba\!+1}, P_{j\ba\!+1}\ba)\! =
								\begin{cases}
									\Proc(SK_H, O, P_{j\ba}\ba\!)&, j\ba > 0, \\
									{\ch \Reply}(m\ba\!\!, O, P_{j\ba}\ba\!, SK_H) &, j\ba = 0
								\end{cases}
							\end{align*}
							and gives $O_{j\ba+1}$
							for $P_{j\ba+1}\ba$
							to the adversary.

						\item Otherwise,
							the challenger gives
							$\bar{O}_1$
							for $P_{j\ba+1}\ba$
							to the adversary.
					\end{itemize}

				\item $\textbf{Proc}(\Ps, O)$
					with $O$ being
					the challenge onion
					as processed
					for the final receiver
					on the backward path,
					i.e.:
					\begin{itemize}
						\item for $b = 0$:
							\begin{align*}
								{\ch \RO(}{\ch n + n\ba\!\!, O, \Rr, m, R,}{\ch\Pp, \Pp\ba\!\!, PK_{\Pp}, PK_{\Pp\ba})} = True
							\end{align*}

						\item for $b = 1$:
							\begin{align*}
								{\ch \RO(}{\ch n\ba\!\! - j\ba\!\!, O, \bar{\Rr}, \bar{m}, \bar{R},}
								{\ch\bar{\Pp}, \bar{\Pp}\ba\!\!, PK_{\bar{\Pp}}, PK_{\bar{\Pp}\ba})} = True
							\end{align*}
					\end{itemize}
					\ldots then the challenger
					outputs nothing.
			\end{enumerate}

		\item The adversary
			produces guess $b'$.
	\end{enumerate}
	{\ch \NBLU} is achieved
	if any PPT adversary \Aa
	cannot guess $b' = b$
	with a probability
	non-negligibly better
	than $\frac{1}{2}$.

	\subsection{\texorpdfstring{RSOR-Tail Indistinguishability (\NTI)}{REOR-Tail Indistinguishability}}
	\label{sapp:nti}

	\textit{{\ch RSOR}-Tail Indistinguishability} is defined as:

	\begin{enumerate}
		\item The adversary receives
			the router names
			$P_H$, $P_H\ba$, \Ps,
			and challenge public keys
			$PK_S$, $PK_H$, $PK_H\ba$,
			which are chosen by the challenger
			by letting
			$(PK_H, SK_H) \gets G(1^\lambda, p, P_H)$,
			$(PK_H\ba, SK_H\ba) \gets G(1^\lambda, p, P_H\ba)$,
			$(PK_S, SK_S) \gets G(1^\lambda, p, \Ps)$.

		\item Oracle access:
			The adversary may submit
			any number of
			\textbf{Proc} and \textbf{Reply} requests
			for $P_H$, $P_H\ba$, or \Ps
			to the challenger.
			For any $\textbf{Proc}(P_H, O)$,
			the challenger checks
			whether $\eta$ is
			on the $\eta_H$-list.
			If it is not on the list,
			it sends the output of
			$\Proc(SK_H, O, P_H)$,
			stores $\eta$
			on the $\eta_H$-list
			and $O$
			on the $O_H$-list.
			For any $\textbf{Reply}(P_H, O, m)$,
			the challenger checks
			if $O$ is
			on the $O_H$-list
			and if so,
			the challenger sends
			${\ch \Reply}(m, O, P_H, SK_H)$
			to the adversary.
			(Similar for requests
			on $P_H\ba$, \Ps).

		\item The adversary submits
			a message $m$,
			{\ch a receiver $R$,}
			a path ${\ch \Pp} = (P_1, \ldots, P_j, \ldots, P_{\ch n})$
			with the honest relay
			$P_H$ or $P_H\ba$
			at position $j, 0 \leq j < {\ch n}$,
			a path $\Pp\ba = (P_1\ba, \ldots, P_{\ch n}\ba)$
			with the honest relay $P_H\ba$
			at position $1 \leq j\ba \leq {\ch n\ba}$
			and public keys
			for all relays $PK_i$
			($1 \leq i \leq {\ch n\ba}$ for the relays
			on the path
			and ${\ch n} < i$
			for the other relays).

		\item The challenger checks that
			the given paths
			are acyclic,
			the router names
			\tch{and public keys} are valid
			and that the same key is chosen
			if the router names are equal,
			and if so,
			sets $PK_j = PK_H$
			(or $PK_j = PK_H\ba$,
			if the adversary chose $P_H\ba$
			at this position as well),
			$PK\ba_{j\ba} = PK_H\ba$,
			$PK\ba_{\ch n\ba} = PK_S$
			and sets bit $b$
			at random.

		\item The challenger creates
			the onion $O_{j+1}$
			with the adversary's
			input choice
			and honestly chosen randomness \Rr:
			\[{\ch \FormO(j + 1, \Rr, m, R, \Pp, \Pp\ba, PK_{\Pp}, PK_{\Pp\ba})}\]
			and a replacement onion
			$\bar{O}_1$
			with the path
			from the honest relay $P_H$
			to the corrupted {\ch exit relay}
			${\ch \bar{\Pp}} = (P_{j+1}, \ldots, P_{\ch n})$
			and the backward path
			ending at $j\ba$:
			$\bar{\Pp}\ba = (P_1\ba, \ldots, P\ba_{j\ba})$;
			and another honestly chosen randomness $\bar{\Rr}$:
			\[{\ch \FormO(1, \bar{\Rr}, m, R, \bar{\Pp}, \bar{\Pp}\ba, PK_{\bar{\Pp}}, PK_{\bar{\Pp}\ba})}\]

		\item If $b = 0$:
			The challenger sends
			$O_{j+1}$
			to the adversary. \\
			Otherwise:
			The challenger sends
			$\bar{O}_1$
			to the adversary.

		\item Oracle access:
			the challenger processes
			all requests
			as in step 2)
			except for\ldots \\
			\ldots $\textbf{Proc}(P_H\ba, O)$
			with $O$ being
			the challenge onion
			as processed for
			the honest relay
			on the backward path,
			i.e.:
			\begin{itemize}
				\item for $b = 0$:
					\begin{align*}
					{\ch \RO(}{\ch n + j\ba, O, \Rr, m, R,}{\ch\Pp, \Pp\ba, PK_{\Pp}, PK_{\Pp\ba})} = True
					\end{align*}

				\item for $b = 1$:
					\begin{align*}
						{\ch \RO(}{\ch(n - j) + j\ba, O, \bar{\Rr}, m, R,}
								  {\ch\bar{\Pp}, \bar{\Pp}\ba, PK_{\bar{\Pp}}, PK_{\bar{\Pp}\ba})} = True
					\end{align*}
			\end{itemize}
			\ldots then the challenger
			outputs nothing.

		\item The adversary produces
			guess $b'$.
	\end{enumerate}
	{\ch \NTI} is achieved
	if any PPT adversary \Aa
	cannot guess $b' = b$
	with a probability
	non-negligibly better
	than $\frac{1}{2}$.

	\newpage
	\section{Ideal Functionality}
	\label{app:reor-if}
	Modifications to
	\citeauthor*{or_replies}'s \Ff[R]
	are highlighted
	in \textcolor{blue5}{this style}
	in the pseudocode.

	\newcommand*{\algspace}{\vspace{0.15cm}}
	\newcommand*{\ShSend}{\textsc{Out.Cor.Sender}}
	\newcommand*{\ShStep}{\textsc{Proc.NextStep}}
	\begin{algorithm}[ht]
		\caption{Ideal Functionality \FRN (1)}
		\label{alg:frn-1}
		\begin{algorithmic}
			\footnotesize
			\DocComment{\textbf{Data structures:}}
			\State {Bad: Set of corrupted relays and receivers}
			\State {$L_o$: List of onions processed by adversarial relays}
			\State {$B_i$: List of onions held by relay $P_i$}
			\State {\ch $B^r_i$: List of receiver replies held by relay $P_i$}
			\State {\ch $L_{tag}$: List of messages tagged by the adversary}
			\State {$Back$: Map from $tid$s to reply paths and forward IDs}
			\State {$ID_{fwd}$: Map from a reply onion ID to a forward onion ID}
			\State {\ch $Rep_i$: Map of reply identifiers to $tid$s at exit relay $P_i$}

%
			\vspace{0.2cm}
			\Message{\NewOnion}{${\color{blue5}R}, m, \Pp, \Pp\ba$}{\textit{\Zz or \Ss via \Ps}}
				\If{$|\Pp| > N \textbf{ or } |\Pp\ba| > N$} \textbf{reject}
				\Else
					\State $sid \gets^R$ session ID
					\State $O \gets (sid, \Ps, {\color{blue5}R}, m, \Pp, 0, f)$
					\State \Call{\ShSend}{$\Ps, sid, {\color{blue5}R}, m, \Pp, \Pp\ba, \textsl{start}, f$}
					\State \Call{\ShStep}{$O$}
				\EndIf
			\EndMessage

			\algspace
			\BlueProc{\NewReply}{$m, tid$}
				\If{$(tid, \ldots) \notin Back$} \textbf{reject}
				\Else
					\State {\color{blue5}$(tid, \Ps, \Pp, \Pp\ba, sid', \_)$} $\gets Back$
					\State $sid \gets^R$ session ID
					\State Store $(sid, sid')$ in $ID_{fwd}$
					\State $O \gets (sid, P_{\ch i}, \Ps, m, \Pp\ba, (), 0, b)$
					\State \Call{\ShSend}{$P_{\ch i}, sid, \Ps, m, \Pp, \Pp\ba, \textsl{start}, b$}
					\State \Call{\ShStep}{$O$}
				\EndIf
			\EndBlueProc

			\algspace
			\Message{\DelOnion}{$tid$}{\Ss}
				\If{$(tid, \_, \_) \in L_o$}
					\State $(tid, O = (sid, \Ps, {\color{blue5}R / P_r}, m, \Pp, \Pp\ba, i, d), j) \gets L_o$
					\State $O \gets (sid, \Ps, {\color{blue5}R / P_r}, m, \Pp, \Pp\ba, j, d)$
					\ch
					\If{$d = b \textbf{ and } j = |\Pp|$}
						\normalcolor
						\If{$m \neq \bot {\ch \textbf{ and } O \notin L_{tag}}$}
							\ch
							\State \Call{Send}{$P_r$, \enquote{Message $m$ received as reply}}
							\DocComment{Not forwarded to \Zz}
						\EndIf
						\normalcolor
					\Else
						\State $tid' \gets^R$ temporary ID
						\State \Call{Send}{\Po{j}, \enquote{$tid'$ received from \Po{j-1}{}}}
						\State Store $(tid', O)$ in $B_{o_j}$
					\EndIf
				\EndIf
			\EndMessage

			\algspace
			\Message{\ForOnion}{$tid'$}{\textit{\Zz or \Ss via $P_i$}}
				\If{$(tid', \_) \in B_i$}
					\State Pop $(tid', O)$ from $B_i$
					\State \Call{\ShStep}{$O$}
				\ch
				\ElsIf{$(tid', \_) \in B^r_i$}
					\State Pop $(tid', m, tid)$ from $B^r_i$
					\State \Call{\NewReply}{$m, tid$}
				\EndIf
				\normalcolor
			\EndMessage

			\algspace
			\color{blue5}
			\Message{Tag}{$tid$}{\Ss}
				\If{$(tid, \_, \_) \in L_o$}
					\State Retrieve $(tid, O, \_)$ from $L_o$
					\State Store $O$ in $L_{tag}$
				\EndIf
			\EndMessage
			\normalcolor

			\algspace
			\Procedure{\ShSend}{$\Ps, sid, {\color{blue5}R / P_r}, m, \Pp, \Pp\ba\!\!, tid, d$}
				\If{$d = f \textbf{ and } \Ps \in \text{Bad}$}
					\State \Call{Send}{\Ss,
						\enquote{$tid$ is from \Ps with
							$sid$, ${\color{blue5}R}$, $m$, $\Pp$, $\Pp\ba$, $d$}}
				\ElsIf{$d = b \textbf{ and } P_r \in \text{Bad}$}
					\makeatletter
					\setlength{\curindent}{\ALG@tlm}
					\makeatother

					\State \CallPar{Send}{\Ss,
						\enquote{$tid$ is reply from \Ps with
							$sid$, ${\color{blue5}P_r}$, $m$, $\Pp$, $\Pp\ba\!\!$, \\
					\phantom{\hspace{\curindent} \Call{Send}{\Ss}}
					$d$, replying to onion from $P_r$ with $ID_{fwd}(sid)$}}
				\EndIf
			\EndProcedure
		\end{algorithmic}
	\end{algorithm}

	\begin{algorithm}[ht]
		\caption{Ideal Functionality \FRN (2)}
		\label{alg:frn-2}
		\begin{algorithmic}
			\footnotesize
			\Procedure{\textsc{Proc.ToRelay}}{$O = (sid, \Ps, {\color{blue5}R / P_r}, m, \Pp, \Pp\ba\!\!, i, d)$}
				\State $\Po{j} \gets \Po{k}$ with smallest $k > i$ such that $\Po{k} \notin$ Bad
				\State $tid \gets^R$ temporary ID
				\State \Call{Send}{\Ss, \enquote{\Po{i} sends $tid$ to \Po{j} via \Path{i+1}{j-1}{}}}
				\State \Call{Send}{\Po{i}, \enquote{Sent onion to \Po{i+1}{}}}
				\State \Call{\ShSend}{$\Ps, sid, {\color{blue5}R}, m, n, \Pp, tid, d$}
				\If{$d = b \textbf{ and } i = 0$}
					\State \Call{Send}{\Ss, \enquote{$tid$ belongs to $sid$}}
				\EndIf
				\State Add $(tid, O, j)$ to $L_o$
			\EndProcedure

			\algspace
			\ch
			\Procedure{\MakeReply}{$O = (sid, \Ps, R, m, \Pp, \Pp\ba, i, d), rid$}
				\State $tid \gets^R$ temporary ID
				\State Store $(tid, \Ps, \Pp, \Pp\ba, \Po{i}, sid)$ in $Back$
				\If{$i = |\Pp|$}
					\State Store $(rid, tid)$ in $Rep_{o_i}$
					\State \Call{Send}{\Ss, \enquote{Reply with reply ID $rid$}}
				\Else
					\State $\Pp[1]\ba\!\! \gets$ prefix of $\Pp\ba\!\!$ up to (including) the first honest relay
					\State \Call{Send}{\Ss, \enquote{Reply with $tid$, reply path begins with $\Pp[1]\ba$}}
				\EndIf
			\EndProcedure

			\algspace
			\Procedure{\LeakMess}{$O = (sid, \Ps, R, m, \Pp, \Pp\ba, i, d)$}
				\makeatletter
				\setlength{\curindent}{\ALG@tlm}
				\makeatother
				\If{$m = \bot$}
					\textbf{return}
				\EndIf
				\State \Call{\ShSend}{$P_s, sid, R, m, \Pp, \Pp\ba, \textsl{end}, d$}
				\If{$\Pp \neq ()$}
					\State $rid \gets^R$ temporary ID
					\State \Call{\MakeReply}{$O, rid$}
				\EndIf
				\If{$i = |\Pp|$}
					\Call{Send}{\Po{i}, \enquote{Sent message to $R$}}
				\Else
					\ \Call{Send}{\Po{i}, \enquote{Sent onion to \Po{i+1}{}}}
				\EndIf
				\vspace{-0.6ex}
				\State \CallPar{Send}{\Ss,
					\enquote{$\Po{i}$ sends onion with message $m$ to $R$ \\
						\phantom{\hspace{\curindent} \Call{Send}{\Ss}}
						via \Path{i+1}{n}{}}}
			\EndProcedure

			\algspace
			\Message{\DelPlain}{$P_i, m, rid, R$}{\Ss}
				\State \Call{Send}{$R$, \enquote{Message $m$ received from $P_i$}}
				\If{$rid \neq \bot$}
					\State \Call{Send}{$R$, \enquote{Message is repliable with $rid$}}
				\EndIf
			\EndMessage

			\algspace
			\Message{\InitReply}{$P_i, m, rid$}{\textit{\Zz or \Ss via $R$}}
				\State \Call{Send}{\Ss, \enquote{$R$ replies to $rid$ with message $m$ via $P_i$}}
			\EndMessage

			\algspace
			\DocComment{$P_i$ creates an onion from $R$'s reply request}
			\Message{\DelReply}{$R, P_i, m, rid$}{\Ss}
				\State \Call{Send}{$P_i$, \enquote{Reply $(m, rid)$ received from $R$}}
				\If{$(rid, \_) \in Rep_i$}
					\State $(rid, tid) \gets Rep_i$
					\State $tid' \gets^R$ temporary ID
					\State Store $(tid', m, tid)$ in $B^r_i$
					\State \Call{Send}{$P_i$, \enquote{Send reply onion with $tid'$}}
				\EndIf
			\EndMessage

			\algspace
			\DocComment{\Ss uses a $tid$ ID to bypass replying via an exit relay}
			\Message{\ByReply}{$m, tid$}{\textit{\Ss via $P_i$}}
				\If{$(tid, \ldots) \in Back$}
					\State \Call{\NewReply}{$m, tid$}
				\EndIf
			\EndMessage

			\algspace
			\Procedure{\LeakReply}{$O = (sid, \Ps, P_r, m, \Pp, \Pp\ba, i, d)$}
				\makeatletter
				\setlength{\curindent}{\ALG@tlm}
				\makeatother
				\State \CallPar{Send}{\Ss, \enquote{\Po{i} sends reply $tid$ with message $m$ to $P_r$ \\
				\phantom{\hspace{\curindent} \Call{Send}{\Ss}}
				via \Path{i+1}{n-1}{}}}
				\State \Call{Send}{\Po{i}, \enquote{Sent onion to \Po{i+1}{}}}
				\State \Call{\ShSend}{$\Ps, sid, P_r, m, \Pp, \Pp\ba, tid, b$}
			\EndProcedure

			\algspace
			\Procedure{\textsc{Proc.NextStep}}{$O\! =\! (sid, \Ps, R / P_r, m, \Pp, \Pp\ba\!\!, i, d)$}
				\If {$\forall j > i: \Po{j} \in$ Bad \textbf{or} $i = |\Pp|$}
					\If {$O \in L_{tag}$}
						\State \Call{\ShSend}{$\Ps, sid, R / P_r, m, \Pp, \textsl{tagged}, d$}
						\If {$i < n$}
							\State \Call{Send}{\Ss, \enquote{\Po{i} sends tagged via \Path{i+1}{n}{}}}
							\State \Call{Send}{\Zz, \enquote{\Po{i} sends onion to \Po{i+1}{}}}
						\Else \ \Call{Send}{\Zz, \enquote{Onion at \Po{i} fails integrity check}}
						\EndIf
					\Else
						\If {$d = f$} \Call{\LeakMess}{$O$}
						\Else \ \Call{\LeakReply}{$O$}
						\EndIf
					\EndIf
				\Else \ \Call{\ToRelay}{$O$}
				\EndIf
			\EndProcedure
			\normalcolor
		\end{algorithmic}
	\end{algorithm}

	\clearpage

	\section{Sphinx: Packet Format Description}
	\label{sapp:sphinx-packet-format}

	\compactequations
	This section
	is meant as a compact reference
	for the structure
	of the Sphinx packet,
	which is used
	in the following appendices.
	For a complete introduction
	to the Sphinx packet format,
	see%
	~\cite{sphinx}.
	A Sphinx packet
	is made of
	a header $\eta = (\alpha, \beta, \gamma)$
	and a payload $\delta$.
	$\alpha$ is built
	using the public keys
	$y_i:=g^{x_i}$
	for each relay $P_i$
	on the onion's path.
	The sender
	chooses a secret
	$x \in  \mathbb{Z}^*_q$
	and lets
	$\alpha_i = g^{xb_0\cdots b_{i-i}}$
	and $s_i = y_i^{xb_0\cdots b_{i-1}}$,
	where $b_i = h_b(\alpha_{i-1}, s_{i-1})$.
	$\alpha_i$ is the group element
	contained in the $i$-th layer
	of the header,
	and $s_i$ is the corresponding secret
	it shares with $P_i$%
	\footnote{%
		Note that
		our indices $i$
		for relay names
		are relative to
		a single packet's path
		for clarity.
	}.
	The $b_i$ are blinding factors
	that transform $\alpha$
	at each relay.
	They are calculated with
	a random oracle $h_b$.
	The remainder of the header
	is built using
	the shared secrets $s_i$
	after passing them
	through more random oracles:
	$h_\rho$, $h_\mu$,
	and $h_\pi$ are each used
	to key a different primitive.
	The $\beta_i$ are built
	starting at
	the final layer:
	\begin{flalign*}
	\beta_{\nu-1} := \{\ast/\Ps\|I\|rand_{(2(r-n)+2)\kappa - |R|}\} \oplus \\
	\{\rho(h_{\rho}(s_{n-1}))_{[\ldots(2(r-n)+3)\kappa-1]}\} \| \Phi_{n-1}%
	\footnotemark.
	\end{flalign*}
	\footnotetext{%
		Note that we use
		the randomized padding
		described in
		\autoref{sssec:sphinx-insecure} here.
	}
	In this definition,
	$\ast/\Ps$ is either
	a sentinel value indicating that
	this is a forward packet
	or the address
	of the original sender
	in a reply packet.
	$I$ is an identifier
	used by $\Ps$
	to recognize replies.
	$r$ is the global maximum path length
	in this Sphinx instance.
	$\rho$ is a PRG
	keyed with $h_\rho(s_i)$.
	$\Phi_i$ is padding formed
	via the repeated application
	of the $\rho$ PRG:
	$\Phi_0$ is empty,
	while
	\[\Phi_i = \{\Phi_{i-1}\|0_{2\kappa}\} \oplus \rho(h_\rho(s_i))_{[(2(r-i)+3)\kappa\ldots(2r+3)\kappa-1]}.\]
	The remaining $\beta_i$
	are built as
	\[\beta_i\! =\! \{P_{i+1}\|\gamma_{i+1}\|\beta_{i+1_{[\ldots(2r-1)\kappa-1]}}\} \oplus \rho(h_{\rho}(s_i))_{[\ldots(2r+1)\kappa-1]}.\]
	Effectively,
	the construction
	of the padding is designed
	such that
	$\Phi_i$ is a suffix
	of $\beta_i$.
	$\gamma_i$ is the MAC
	$\mu(h_\mu(s_i), \beta_i)$
	of $\beta_i$.
	Finally,
	a forward payload $\delta$ is formed
	as $\delta_{n-1} = \pi(h_\pi(s_{n-1}), 0_\kappa\|R\|m)$
	and $\delta_i = \pi(h_\pi(s_i), \delta_{i+1})$,
	where $\pi$ is a PRP
	keyed with $h_\pi(s_i)$.

	The packet sent by
	the sender
	is $((\alpha_0, \beta_0, \gamma_0), \delta_0)$.
	Each relay $P_i$
	then processes the packet it gets
	into $((\alpha_{i+1}, \beta_{i+1}, \gamma_{i+1}), \delta_{i+1})$%
	~\cite{sphinx}.
	\looseequations

	\section{Sphinx: Adapted KEM-IND-CCA}
	\label{app:Sphinx-KEM-Proof}
	\begin{proof}
		We will use
		a PPT attacker \Aa
		on the \SKEMCCA property
		for the \ROKEM
		to construct an attacker $\Bb^{\Oo[G]}$
		on the GDH assumption
		using the DDH oracle $\Oo[G]$.

		The GDH attacker \Bb
		maintains five lists:
		\begin{itemize}
			\item $L$: List of group elements $g^z$
				with their associated oracle outputs
				($h_*(g^z)$,
				$h_*(g^{x_1*z})$,
				$h_b(g^z, g^{x_1*z})$).
			\item $L_y$: List of up to $N$ tuples,
				one for each adversarial relay
				on the KEM path:
				Each holds $(\alpha_i, h_*(s_i), h_b(\alpha_i, s_i))$.
			\item $L_{b}$: List of $(g^a, g^z)$ element pairs
				with their corresponding $h_b(g^a, g^z)$ values.
			\item $L_{\Oo}$: List of $\alpha'$s that have been requested
				from \Oo.
			\item $L_h$: List of group elements that have been requested
				from $h_*$.
		\end{itemize}

		\Bb receives a CDH challenge $(g, g^{x_1}, g^{x_2})$
		from the GDH challenger \Cc.
		\Bb uses $g^{x_1}$ as the public key $PK$
		of the \enquote{honest relay}
		and $g^{x_2}$ as the challenge $\alpha_j$.
		The attacker sends $PK$ to \Aa
		and gives \Aa access to
		the programmed random oracles $h_*$ and $h_b$
		and the decapsulation oracle \Oo
		(which are described below).
		Next,
		\Bb receives $j$ and $y_i$ for $i \neq j$
		from \Aa
		and checks that
		the public keys are valid.
		\Bb now chooses $b_0, \ldots, b_n$
		randomly from $\mathbb{Z}^*_p$.
		To calculate $\alpha_i$ for $i < j$,
		\Bb calculates the inverses
		$b_0^{-1}, \ldots, b_{j-1}^{-1}$
		and uses them to form
		$\alpha_i = \alpha_j^{b_i^{-1}\dots b_{j-1}^{-1}}$.
		To make $\alpha_i$ for $i > j$,
		let $\alpha_i = \alpha_j^{b_j\dots b_{i-1}}$.
		Next,
		\Bb chooses $n$ random $r_0, \ldots, r_n$ values
		as outputs for $h_*(s_i)$.
		To remember these choices
		in the programmed random oracles,
		\Bb stores
		$(\alpha_i, r_i, b_i)$ in $L_y$
		for $i \neq j$.
		Now,
		it flips a bit $b$
		to determine whether it will simulate
		the KEM game for $b = 0$ or $b = 1$.
		If $b = 0$,
		\Bb sets $L[\alpha][1] := r_j$,
		and $L[\alpha][2] := b_j$.
		Finally,
		$\alpha_0, r_0, \ldots, r_{j-1}, r_{j+1}, \ldots, r_n$
		and $b_0, \ldots, b_{j-1}, b_{j+1}, \ldots, b_n$
		are sent to \Aa
		along with $(\alpha_j, r_j, b_j)$.

		\begin{algorithm}
		\caption{KEM attacker \Bb's oracles}
		\label{alg:kem-oracles}
		\small
		\begin{algorithmic}
			\Function{\Oo}{$\alpha'$}
				\If{$\alpha$ generated \textbf{and} $\alpha' = \alpha$} \textbf{abort}
				\EndIf
				\If{$L[\alpha'][1]$ is not set}
					\State Add $\alpha'$ to $L_{\Oo}$
					\For{$g^z \in L_h$}
						\If{$\Oo[G](g, PK, \alpha', g^z)$}
							\State $L[\alpha'][1] \gets L[g^z][0]$
						\EndIf
					\EndFor
					\If{$L[\alpha'][1]$ is not set}
						\State $L[\alpha'][1] \gets^R \{0,1\}^{3\kappa}$
					\EndIf
					\For{$(g^a, g^z) \in L_b$ where $g^a = \alpha'$}
						\If{$\Oo[G](g, PK, \alpha', g^z)$}
						\State $L[\alpha'][2] \gets L_b[(\alpha', g^z)]$
						\EndIf
					\EndFor
					\If{$L[\alpha'][2]$ is not set}
						\State $L[\alpha'][2] \gets^R \mathbb{Z}^*_q$
					\EndIf
				\EndIf
				\State \textbf{return} $(L[\alpha'][1], L[\alpha'][2])$
			\EndFunction

			\Function{$h_*$}{$g^z$}
				\If{$L[g^z][0]$ is not set}
					\State Add $g^z$ to $L_h$
					\If{$\alpha$ generated \textbf{and} $\Oo[G](g, PK, \alpha, g^z)$}
						\State $\textsl{bad} \gets 1$
					\EndIf
					\For{$(\alpha_i, r_i, \_) \in L_y$}
						\If{$\Oo[G](g, y_i, \alpha_i, g^z)$}
							\State $L[g^z][0] \gets r_i$
						\EndIf
					\EndFor
					\For{$\alpha' \in L_{\Oo}$}
						\If{$\Oo[G](g, PK, \alpha', g^z)$}
							\State $L[g^z][0] \gets L[\alpha'][1]$ 
						\EndIf
					\EndFor
					\If{$L[g^z][0]$ is not set}
						\State $L[g^z][0] \gets^R \{0,1\}^{3\kappa}$
					\EndIf
				\EndIf
				\State \textbf{return} $L[g^z][0]$
			\EndFunction

			\Function{$h_b$}{$g^a, g^z$}
				\If{$L_b[(g^a, g^z)]$ is not set}
					\If{$\Oo[G](g, PK, g^a, g^z)$}
						\If{$\alpha$ generated \textbf{and} $g^a = \alpha$}
							\State $\textsl{bad} \gets 1$
						\EndIf
						\If{$L[g^a][2]$ is set}
							\State $L_b[(g^a, g^z)] \gets L[g^a][2]$
						\EndIf
					\EndIf
					\If{$(g^a, \_, \_) \in L_y$ at index $i$ \\
					\phantom{\textbf{functi if}}\textbf{and} $\Oo[G](g, y_i, g^a, g^z)$}
						\State Retrieve $(g^a, \_, b_i)$ from $L_y$
						\State $L_b[(g^a, g^z)] \gets b_i$
					\EndIf
					\If{$L_b[(g^a, g^z)]$ is not set}
						\State $L_b[(g^a, g^z)] \gets^R \mathbb{Z}^*_q$
					\EndIf
				\EndIf
				\State \textbf{return} $L[(g^a, g^z)]$
			\EndFunction
		\end{algorithmic}
		\normalsize
		\end{algorithm}

		To simulate the decapsulation oracle \Oo
		and the random oracles $h_*$ and $h_b$,
		\Bb behaves as described
		in \autoref{alg:kem-oracles}.

		Using a bad-flag analysis,
		we can see that any attacker \Aa
		with a non-negligible advantage in the KEM game
		must trigger the $bad$ flag
		non-negligibly often,
		so \Bb can also win the GDH game
		non-negligibly often.
		The random oracles $h_*$ and $h_b$
		are set up to behave correctly
		in combination with \Oo
		except if a \enquote{collision}
		is created in the challenge creation phase
		(steps 3--7).
		Here,
		a collision refers to
		\Bb assigning two different random values
		to the same random oracle inputs
		on accident.
		This can occur
		in two ways:
		\begin{enumerate}
			\item The attacker
				already requested
				$\Oo(\alpha)$ in step two.
			\item The attacker
				already requested
				$h_*(s_i)$ or $h_b(\alpha_i, s_i)$
				for $0 \leq i \leq n$
				in step two or
				\Bb generates its own collision
				on accident when
				$s_{i_1} = s_{i_2}$
				for $i_1 \neq i_2$.
		\end{enumerate}

		In any of the above cases,
		\Bb simulates the oracles incorrectly%
		\footnote{%
			Unless \Bb happened
			to choose the same randomness
			in both cases,
			which only happens
			negligibly often.
		}.
		Let the number of requests
		\Aa makes
		to each oracle
		\Oo, $h_*$, and $h_b$
		be bounded by the polynomial $p(\kappa)$.
		The probability of each case
		is bounded by
		$1/q$
		and $\lessapprox \frac{(p(\kappa) + N)^2}{2q}$
		each for a collision
		on $h_*$ and $h_b$%
		\footnote{%
			This corresponds to
			an upper bound
			for the likelihood for a
			successful birthday attack
			given $p(\kappa) + N$ requests
			and a pre-image set with $q$ members%
			~\cite{intro_crypto_4}.
		}
		respectively.
		According to Sphinx's definition,
		the order $q$ of the group \Gg
		is on the order of $2^{2\kappa}$,
		meaning that
		both of these probabilities
		are negligible.
		
		If neither of these events occur,
		\Bb simulates the oracles correctly
		and wins the GDH game
		whenever \Aa wins the \SKEMCCA game.
	\end{proof}

	\newpage
	\section{Secure RSOR Definitions}
	\label{app:reor-defs}

	The following definition
	is adapted from \citeauthor*{or_replies}'s definition
	for repliable integrated-system-model OR
	(as shown in \autoref{ssec:rior}).
	Our changes compared
	to their definitions
	are shown in \tch{this style}.

	\begin{definition}
		\label{def:sec-rnre-protocol-full}
		An {\ch RSOR} protocol
		is secure
		in the \Ff[PKI]-\Ff[SC]-hybrid model
		if and only if
		it is built on
		a secure {\ch RSOR} scheme
		according to \autoref{def:sec-rnre-scheme}
		and operates
		the following way:

		\compactequations
		\begin{itemize}
			\item \emph{Setup:}
				Each {\ch relay} $P_i$
				generates a keypair
				$(SK_i, PK_i)$ $\gets G(1^\lambda)$
				and publishes $PK_i$
				by using \Ff[PKI].

			\item \emph{Sending a message:}
				If \Ps wants to
				send $m \in \mathcal{M}$
				to ${\ch R}$
				over the path
				$\Pp = (P_1, \ldots, P_n)$
				with $n\, {\ch \leq}\,  N$
				and wants to allow a reply
				over the path
				$\Pp\ba = (P_1, \ldots, P\ba_{\ch n\ba})$
				with $n\ba\, {\ch \leq}\, N$
				and $P\ba_{\ch n\ba} = \Ps$,
				it chooses
				a randomness \Rr
				and calculates
				\[O_1\! \gets\! \FormO(1, \Rr, m, {\ch R}, \Pp, \Pp\ba\!\!, PK_{\Pp}, PK\ba_{\Pp\ba\!\!})\]
				and sends $O_1$ to $P_1$
				using \Ff[SC].

			\item \emph{Processing an onion:}
				$P_i$ receives $O_i$
				and runs
				\[(O_j, P_j) \gets \Proc(SK_i, O_i, P_i). \]
				If $P_j = \bot$,
				$P_i$ {\ch outputs
					\enquote{Received $m = O_j$ as a reply}
					if $O_j \neq \bot$
					$[\ldots]$.
				}
				If $P_j \neq \bot$,
				$P_j$ is
				a valid relay name
				{\ch or receiver.}
				and $P_i$ generates
				a random $tid$
				and stores
				$(tid, (O_j, P_j))$
				in its outgoing buffer
				and notifies the environment
				about $tid$.

			\item \emph{Sending an onion:}
				When the environment
				instructs $P_i$
				to forward $tid$,
				$P_i$ looks up $tid$
				in its buffer.
				If $P_i$ does not
				find such an entry,
				it aborts.
				Otherwise,
				it finds
				$(tid, (O_j, P_j))$.
				{\ch
					If $P_j$ is a relay name,
					it%
				}
				sends
				$O_j$ to $P_j$
				using \Ff[SC].
				{\ch If $P_j = R$ for a receiver $R$,
					$P_i$ checks if
					$\Reply(m\ba, O_i, P_i, SK_i) \neq \bot$
					for an arbitrary $m\ba$.
					If so,
					$P_i$ chooses a random $rid$,
					stores $(rid, O_i)$
					in its reply buffer,
					and sends $(O_j, rid)$ to $P_j$
					without \Ff[SC].
					If not,
					$P_i$ sends $(O_j, \bot)$ to $P_j$
					without \Ff[SC].
				}

			{\ch \item \emph{Receiving a message:}
				When a receiver
				receives $(m, rid)$
				from a relay $P_i$,
				it outputs
				\enquote{Received message $m$ from $P_i$}
				to the environment.
				If $rid \neq \bot$,
				it additonally outputs
				\enquote{It is repliable with $rid$}.
			}

			{\ch \item \emph{Sending a reply message:}
				When the environment instructs $R$
				to reply to $P_i$ with $rid$ and $m\ba$,
				$R$ sends $(m\ba, rid)$ to $P_i$
				without \Ff[SC].
			}

			\item \emph{\ch Creating a reply onion:}
				{\ch
					When $P_r$ receives $(m\ba, rid)$
					from a receiver $R$,
					$P_r$ looks up $rid$
					in its reply buffer.
					If there is no entry with $rid$,
					$P_r$ stops.
					If it finds $(rid, O)$
					in its buffer,
				}
				it calculates
				\[(O\ba_1, P\ba_1) \gets \Reply(m\ba, O, P_r, SK_r)\]
				and sends $O\ba_1$ to $P\ba_1$ using \Ff[SC].
		\end{itemize}
		\looseequations
	\end{definition}

\section{Sphinx Onion Property Proofs}

	\subsection{\texorpdfstring{RSOR-Tagging Layer Unlinkability (\TFLU)}{RSOR-Forward Layer Unlinkability}}
	\label{sapp:tfluproof}

	\textsc{Theorem \ref{thm:tflu}.}
	\textit{%
		Sphinx satisfies \TFLU
		under the GDH assumption.
	}

	\begin{proof}
		We prove that an adversary
		cannot distinguish
		the $b = 0$ scenario
		from the $b = 1$ scenario
		through a hybrid argument
		starting at $b=0$
		and ending at $b=1$.
		For clarity,
		we separate the proof
		into two cases:
		One where $j = n$
		and the other
		where $j < n$.
		Initially,
		this may seem problematic
		since the adversary
		chooses $j$ adaptively
		after the first round
		of oracle accesses,
		so we cannot predict
		which it will choose beforehand
		for our reductions.
		However,
		every step in our proofs
		is either common to both cases
		(so it does not require
		predicting $j$)
		or only applies
		after the adversary
		has made its choice.
		Note that \citeauthor*{sphinx} index
		Sphinx packet layers starting at 0%
		~\cite{sphinx}.
		Here,
		this means that $P_0 (= \Ps)$ would send
		the Sphinx layer with $\alpha_0$
		in its header.

		\noindent \textbf{Case 1 $(j = n)$:}
		In this case,
		the honest relay
		on the forward path
		is also the exit relay
		of the onion
		and thus the sender
		of the reply onion.

		\textbf{Hybrid \Hh[_0]:}
		This hybrid is just
		the $b = 0$ scenario of \TFLU
		with $j = n$.

		\textbf{Hybrid \Hh[_1]:}
		As a first step,
		we replace the secrets used
		at the honest relay
		(the exit relay)
		with randomness.
		\Hh[_1] replaces
		the random oracle outputs
		$h_*(s_{n-1})$
		with random $\{0,1\}^\kappa$ bitstrings
		when building the reply header%
		\footnote{%
			It also randomizes $b_{n-1}$,
			but we do not need that
			for this proof.
		}.
		\textbf{Proc} requests
		for onions with the challenge
		$\alpha_{n-1}$
		are also served using these random keys
		to keep the oracle's behavior
		consistent.

		\underline{$\Hh[_0] \approx_I \Hh[_1]$:}
		The difference
		between these two hybrids
		reduces to
		$\text{Exp}^\text{RO-KEM-IND-CCA}_{\text{RO-KEM},\Aa}(\kappa)$:
		The \SKEMCCA attacker \Aa
		uses its inputs from the challenger
		and the assumed hybrid distinguisher \Dd
		to build the challenge onion
		and serves requests
		to the \textbf{Proc}
		and \textbf{Reply} oracles
		using its decapsulation oracle.
		\textbf{Proc} requests
		with the challenge $\alpha_{n-1}$
		are served using
		the keys provided by
		the challenger.

		\textbf{Hybrid \Hh[_2]:}
		To ensure that
		only the challenge onion
		is recognized
		for \enquote{challenge processing}
		by the exit relay,
		the \textbf{Proc} oracle
		at the relay
		now returns $(\bot, \bot)$
		on every request
		with $\alpha_{n-1}$
		in its header
		except if
		the rest of the header
		also matches
		the expected header of
		the challenge onion.

		\underline{$\Hh[_1] \approx_I \Hh[_2]$:}
		This reduces to
		$\text{Exp}^\text{sEUF-CMA-vq}_{\mu,\Aa}(\kappa)$
		using the fact that
		the PRF $\mu$ can be viewed
		as a MAC
		with the randomized key
		$h_\mu(s_{n-1})$.
		Any request
		to the \textbf{Proc} oracle
		at the exit relay
		with $\alpha_{n-1}$
		must contain
		a valid $\gamma$ MAC
		for the $\beta$ in the header
		to be processed.
		To notice a difference
		between the two hybrids,
		a distinguisher must submit
		such a request
		with a modified $\beta$
		or $\gamma$.
		Since $\mu$ is sEUF-CMA-vq-secure,
		this request
		constitutes a MAC forgery.

		\textbf{Hybrid \Hh[_3]:}
		This hybrid swaps
		$\pi(h_\pi(s_{n-1}), \cdot)$
		with a random permutation (RP).
		Note that $h_\pi(s_{n-1})$
		is already a random key
		before this hybrid.
		Since Sphinx requires $\pi$
		to be a strong PRP%
		~\cite{sphinx},
		both $\pi$ and $\pi^{-1}$
		can be used as RPs
		after the replacement.

		\underline{$\Hh[_2] \approx_I \Hh[_3]$:}
		A distinguisher \Dd
		for these hybrids
		is easily converted
		into an attacker \Aa
		on $\text{Exp}^\text{prp}_{\pi,\Aa}(\kappa)$.

		\textbf{Hybrid \Hh[_4]:}
		In this hybrid,
		the honest exit relay $P_H$
		only sends the challenge reply
		in response to a \textbf{Reply} request
		if the onion received
		in the corresponding \textbf{Proc} request
		has a payload
		that matches the expected payload
		$\delta_{n-1}$
		of the forward onion
		exactly.

		\underline{$\Hh[_3] \approx_I \Hh[_4]$:}
		Since $\pi^{-1}(h_\pi(s_{n-1}), \cdot)$
		is now an RP,
		every input
		is mapped to a random output.
		In order for
		a distinguisher
		to notice a difference
		between the two hybrids,
		it must submit a modified payload
		attached to the challenge header
		that is accepted by the oracle.
		After decrypting the payload,
		the oracle checks that
		the first $\kappa$ bits
		of the contents are all zero
		and discards the onion
		it that is not the case.
		A manipulated payload
		only starts with $0_\kappa$
		negligibly often,
		so the distinguisher
		only has a negligible chance
		of success.

		\textbf{Hybrid \Hh[_5]:}
		This hybrid
		replaces the contents
		of the forward payload
		with the contents
		as they would be
		in the $b = 1$ scenario.
		The original contents
		are $0_\kappa\|R\|\eta_0\|\tilde{k}\|m$,
		while the replacements
		are $0_\kappa\|\bar{R}\|pad\fa_{\kappa,N}\|\bar{m}$
		with a random receiver $\bar{R}$,
		a random message $\bar{m}$,
		and padding
		instead of a reply header.
		When building
		the challenge reply onion,
		$P_H$ still uses
		the original $\eta_0$ reply header.
		The \textbf{Proc} oracle
		also outputs the original
		message and receiver.

		\underline{$\Hh[_4] \approx_I \Hh[_5]$:}
		The \textbf{Proc}
		and \textbf{Reply} oracles
		at the exit relay
		behave the same way
		in \Hh[_4] and \Hh[_5].
		$P_H$ consistently uses
		$\eta_0$ as the challenge reply header
		in both hybrids.
		The replacement
		of the payload contents
		reduces to
		$\text{Exp}^\text{1-LR-CPA}_{\pi,\Aa}(\kappa)$.

		\textbf{Hybrid \Hh[_6]:}
		In this hybrid,
		we rewind the temporary changes
		made in the hybrids
		\Hh[_4],
		\Hh[_3],
		\Hh[_2],
		and \Hh[_1].

		\underline{$\Hh[_5] \approx_I \Hh[_6]$:}
		Apply the previous arguments
		in reverse.
		This concludes
		the $j = n$ case
		of the \TFLU proof.

		\noindent\textbf{Case 2 $(j < n)$:}
		For the second part
		of this proof,
		the honest relay
		on the forward path
		is not the exit relay,
		i.e.,
		$j < n$.

		\textbf{Hybrid \Hh[_0]:}
		The $b = 0$ case of \TFLU
		with $j < n$.

		\textbf{Hybrid \Hh[_1]:}
		This hybrid performs
		the same steps
		as the hybrids
		\Hh[_1] and \Hh[_2]
		in the $j = n$ case.

		\underline{$\Hh[_0] \approx_I \Hh[_1]$:}
		See the corresponding hybrids
		in the $j = n$ case.

		\textbf{Hybrid \Hh[_2]:}
		In this hybrid,
		we replace
		$\pi(h_\pi(s_{j-1}), \cdot)$
		with an RP.

		\underline{$\Hh[_1] \approx_I \Hh[_2]$:}
		Analogous to Case 1's
		$\Hh[_2] \approx_I \Hh[_3]$.

		\textbf{Hybrid \Hh[_3]:}
		Previously (in \Hh[_2]),
		if $\delta'_{j-1}$ matches
		the original $\delta_{j-1}$,
		the resulting output
		is the correct $\delta_j$.
		Otherwise,
		the RP causes the output
		to be a uniformly random
		$\{0,1\}^{l_\pi(\kappa)}$ string.

		In \Hh[_3],
		instead of actually processing
		$\delta'_{j-1}$,
		the honest relay only checks
		whether $\delta'_{j-1} = \delta_{j-1}$.
		If so,
		$\delta_j$ is used
		as the output onion's payload.
		Otherwise,
		a random
		$\{0,1\}^{l_\pi(\kappa)}$ string
		is output instead.

		\underline{$\Hh[_2] \approx_I \Hh[_3]$:}
		If a distinguisher \Dd
		chooses to submit
		the correct challenge payload,
		the output in both hybrids
		is identical.
		If \Dd sends
		a manipulated payload,
		\Hh[_2] outputs
		a new RP output
		while \Hh[_3] produces
		a completely random string.
		These two distributions
		are only distinguishable
		if \Hh[_3] happens to choose
		$\delta_j$ as its
		random output,
		which \Hh[_2] would never do.
		The probability
		of that happening
		is negligible.

		\textbf{Hybrid \Hh[_4]:}
		This hybrid is analogous
		to Case 1's \Hh[_5],
		replacing the contents
		of $\delta_{j-1}$
		(originally
		$\delta_j$)
		with
		$0_\kappa\|\bar{R}\|pad\fa_{\kappa,N}\|\bar{m}$,
		$\bar{R}$ and $\bar{m}$
		being a random receiver
		and message.

		\underline{$\Hh[_3] \approx_I \Hh[_4]$:}
		Analogous to \Hh[_5]
		in Case 1.

		\textbf{Hybrid \Hh[_5]:}
		In the honest relay's challenge processing,
		\Hh[_5] always produces
		the same challenge header
		(the one belonging to
		the challenge onion's layer $O_j$)
		without actually processing
		the header input
		the relay is given.
		The challenge output of $P_H$
		now only depends on
		whether the payload was manipulated
		(i.e., tagged).

		\underline{$\Hh[_4] \approx_I \Hh[_5]$:}
		Due to \Hh[_1],
		the honest relay
		only performs
		the challenge processing steps
		on headers that match
		the challenge header exactly.
		Thus,
		both hybrids always output
		the identical challenge header
		for the challenge onion.

		\textbf{Hybrid \Hh[_6]:}
		This hybrid
		replaces the PRG output
		$\rho(h_\rho(s_{j-1}))$
		with a random string.

		\underline{$\Hh[_5] \approx_I \Hh[_6]$:}
		Given a distinguisher \Dd
		for the two hybrids,
		construct an attacker \Aa
		on $\text{Exp}^{prg}_{\rho,\Aa}(\kappa)$.

		\textbf{Hybrid \Hh[_7]:}
		In this hybrid,
		we replace the first
		$(2(N-j)+3)\kappa$ bits
		of the contents
		of $\beta_{j-1}$.
		These bits correspond to
		the address of the relay $P_{j+1}$,
		the MAC $\gamma_j$,
		and $\beta_{j[\ldots(2N-1)\kappa-1]}$.
		The rest of $\beta_j$
		constitutes padding
		that we leave unchanged.
		The replacement is
		$\{\ast\|0_\kappa\|rand_{[(2(N-j)+2)\kappa-|\delta|]}\}$.

		\underline{$\Hh[_6] \approx_I \Hh[_7]$:}
		Since $\rho(h_\rho(s_{j-1}))$
		is a random string,
		this change can be reduced to
		$\text{Exp}^\text{1-LR-CPA}_{\text{OTP},\Aa}$.
		The new contents
		are exactly what $\beta_{j-1}$
		would contain
		if $P\ba_H$
		were the last relay
		on the path.
		The suffix 
		\[[(2(N-j)+3)\kappa\ldots (2N+1)\kappa-1]\]
		of $\beta_{j-1}$
		is $\Phi_{j-1}$
		by construction%
		\footnote{%
			Technically,
			Sphinx uses the output
			of $\rho$ twice
			when building a header:
			Once to generate the padding
			and a second time
			to encrypt $\beta$.
			However,
			these use different substrings
			of the PRG output.
			\Aa can thus submit
			the two different $\beta$ contents
			with an appropriate zero padding
			to the 1-LR-CPA challenger
			to extract the random string
			required for the padding calculation.
		}.
		With this change,
		the layers
		$\beta_0$, \ldots, $\beta_{j-1}$
		are now independent of
		the later layers
		$\beta_j$, \ldots, $\beta_{n-1}$.

		\textbf{Hybrid \Hh[_8]:}
		The second part
		of the challenge onion
		still contains the padding
		$\Phi_0$, \ldots, $\Phi_{j-1}$
		nested in
		$\beta_j$'s $\Phi_j$ padding.
		To alleviate this,
		\Hh[_8] replaces $\Phi_j$
		with a random string
		of length $2j\kappa$.

		\underline{$\Hh[_7] \approx_I \Hh[_8]$:}
		In \Hh[_7],
		$\Phi_j$ is calculated
		from $\Phi_{j-1}$ as
		\[\Phi_j \gets \rho(h_\rho(s_{j-1}))_{[(2(N-j)+3)\kappa\ldots]} \oplus \{\Phi_{j-1}\|0_{2\kappa}\}.\]
		Since $\rho(h_\rho(s_{j-1}))$
		is a random string,
		the replacement
		\[\Phi_j \gets \rho(h_\rho(s_{j-1}))_{[(2(N-j)+3)\kappa\ldots]}\]
		is indistinguishable
		from the original.
		As a result,
		this change
		reduces to
		$\text{Exp}^\text{1-LR-CPA}_{\text{OTP},\Aa}$.

		\textbf{Hybrid \Hh[_9]:}
		This hybrid
		replaces the KEM instance used
		for the second part
		of the challenge onion
		after the honest relay.
		Previously,
		$\alpha_j = \alpha_{j-1}^{b_{j-1}}$
		and $s_j = y_j^{xb_0\cdots b_{j-1}}$.
		Now,
		\Hh[_9] picks a new
		$x' \gets^R \mathbb{Z}_q^*$,
		setting
		$\alpha_j = g^{x'}$
		and $s_j = y_j^{x'}$
		and adjusting the later
		$\alpha_i$ and $s_i$
		following them accordingly.

		\underline{$\Hh[_8] \approx_I \Hh[_9]$:}
		\Hh[_1] randomizes $b_{j-1}$
		into a unformly distributed element
		of $\mathbb{Z}_q^*$.
		It follows that
		$\alpha_{j-1}^{b_{j-1}}$
		and $g^{x'}$
		are identically distributed.
		The same argument holds
		for the later $\alpha$s
		and secrets.

		\textbf{Hybrid \Hh[_{10}]:}
		This hybrid
		\enquote{fixes} the second part
		of the onion
		so that it becomes
		a complete onion
		starting at $P_S$ again.
		To that end,
		\Hh[_{10}] starts building
		new $O'_0$, \ldots, $O'_{j-1}$
		onion layers
		that follow the same path
		as the original $O_0$, \ldots, $O_{j-1}$.
		These new layers
		are built as a prefix
		to $O_j$,
		so the payload content
		of $\delta'_{j-1}$ is
		$\delta_j$
		and $\beta'_{j-1}$ is formed
		with $\beta_j$
		in its contents.
		The random oracle outputs
		$h_*(s'_{j-1})$
		and $h_b(\alpha'_{j-1}, s'_{j-1})$
		are randomized.

		Most importantly,
		$\alpha_j$ is now formed
		as $\alpha_{j-1}^{\prime b'_{j-1}}$,
		with the secrets being built
		analogously.

		\underline{$\Hh[_9] \approx_I \Hh[_{10}]$:}
		The new layers
		$O'_0$, \ldots, $O'_{j-1}$
		are never actually given
		to the adversary.
		Their construction is thus
		entirely invisible to
		the attacker
		except for the change
		in how $\alpha_j$ is formed.
		Using $\Hh[_8] \approx_I \Hh[_9]$'s argument,
		$\alpha_j$ still has
		the same distribution
		in both hybrids.

		\textbf{Hybrid \Hh[_{11}]:}
		This hybrid
		replaces
		$\rho(h_\rho(s'_{j-1}))$
		with a random string.

		\underline{$\Hh[_{10}] \approx_I \Hh[_{11}]$:}
		See $\Hh[_5] \approx_I \Hh[_6]$.

		\textbf{Hybrid \Hh[_{12}]:}
		In \Hh[_8],
		$\Phi_j$ is replaced
		with a random string
		instead of containing
		the previous $\Phi_{j-1}$.
		Now,
		we replace that random string
		with
		\[\rho(h_\rho(s'_{j-1}))_{[(2(N-j)+3)\kappa\ldots]} \oplus \{\Phi_{j-1}\|0_\kappa\},\]
		so that
		$\Phi_j$ is formed
		as the $j$-th layer
		of padding in the
		$O'_0$, \ldots, $O_j$, \ldots, $O_{n-1}$ onion.

		\underline{$\Hh[_{11}] \approx_I \Hh[_{12}]$:}
		Analogous to \Hh[_8].

		\textbf{Hybrid \Hh[_{13}]:}
		This hybrid
		replaces the randomized oracle outputs
		for $s'_{j-1}$
		with the actual outputs
		$h_*(s'_{j-1})$
		and $h_b(\alpha'_{j-1}, s'_{j-1})$.

		\underline{$\Hh[_{12}] \approx_I \Hh[_{13}]$:}
		See $\Hh[_1] \approx_I \Hh[_0]$.

		\textbf{Hybrid \Hh[_{14}]:}
		In this hybrid,
		we rewind
		all of the temporary changes
		made in the previous hybrids
		in the reverse order:
		\Hh[_{11}],
		\Hh[_6],
		\Hh[_2],
		and \Hh[_1].

		\underline{$\Hh[_{13}] \approx_I \Hh[_{14}]$:}
		Apply the previous arguments
		in reverse.
		This concludes
		the $j < n$ case
		of the \TFLU proof.

		We have proven that
		Sphinx satisfies
		\TFLU.
	\end{proof}

	\subsection{\texorpdfstring{RSOR-Backw. Layer Unlinkability (\NBLU)}{RSOR-Backward Layer Unlinkability}}
	\label{sapp:nbluproof}

	\begin{theorem}
		Sphinx satisfies \NBLU
		under the GDH assumption.
	\end{theorem}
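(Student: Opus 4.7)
The plan is to mirror the hybrid structure used for \TFLU, exploiting the deliberate bitwise symmetry between forward and reply Sphinx packets in transit. We split the argument into two cases according to the adversary's choice of $j\ba$: Case A where $j\ba = 0$ (the honest relay is the forward-path exit relay that actually creates the reply) and Case B where $1 \leq j\ba < n\ba$ (the honest relay sits strictly inside the reply path). Case A requires turning the reply-creation output of the exit relay into what looks like the first layer of a fresh forward onion using the reply-path suffix; Case B requires detaching the reply layers before the honest relay from those after it, then restyling the remainder as a forward onion. Unlike \TFLU, we do not need tagging-mitigation hybrids: the \NBLU oracle at the second honest relay \Ps never outputs anything on challenge-matching inputs (step 7(b) of the definition), so there is no wrong-output leak to patch.

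The hybrid sequence I would carry out is the following. First, randomize the KEM outputs $h_*(s_{j\ba})$ and the blinding factor at the honest reply-path relay by a reduction to \SKEMCCA at that position; this is justified exactly as in hybrid $\Hh[_1]$ of the \TFLU proof. Second, by a reduction to strong EUF-CMA-vq security of $\mu$, make the honest relay's \textbf{Proc} (or \textbf{Reply}, in Case A) oracle treat only headers that exactly match the expected challenge header as challenge inputs. Third, swap the payload PRP $\pi(h_\pi(s_{j\ba}), \cdot)$ for a random permutation, so the payload encryption at this layer is information-theoretically hiding. Fourth, replace the PRG output $\rho(h_\rho(s_{j\ba}))$ with a uniform string; using this, rewrite the relevant portion of the next $\beta$ layer as if it were the last header layer of an onion ending at the honest relay (Case B), exactly as in hybrids $\Hh[_6]$--$\Hh[_7]$ for \TFLU. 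Fifth, replace the nested padding $\Phi$ on the far side of the honest relay by a fresh random string, decoupling the two halves of the header. Sixth, resample the KEM from the honest relay outward with a fresh exponent $x'$; this is indistinguishable because the blinding factor was already randomized. Seventh, convert the far-side payload: since we are now free to choose both $\delta$ and the structure of the innermost layer, replace the innermost reply payload $0_\kappa\|pad\ba_{\kappa,N}\|m\ba$ with a forward innermost payload $0_\kappa\|\bar R\|pad\fa_{\kappa,N}\|\bar m$ for fresh random $\bar R,\bar m$, reducing to 1-LR-CPA security of $\pi$ on every outer layer. At this point the \enquote{second half} is, bit-for-bit, a freshly constructed forward Sphinx onion $\bar O_1$ on the path $(P\ba_{j\ba+1},\dots,P\ba_{n\ba})$, which is precisely the $b=1$ output. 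Finally, rewind the temporary PRP-to-RP, oracle-restriction, and KEM-randomization hybrids in reverse order.

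For Case A the sequence is shorter: there is no \enquote{before} part to detach on the reply side, but we still must ensure that the reply-creation routine of the exit relay emits a packet that is distributed as a fresh forward onion. The key point here is that in nymserverless Sphinx (\autoref{sec:nymserver}) the exit relay lifts the reply header $\eta_0$ and the symmetric key $\tilde k$ out of the forward payload and uses them to assemble the reply. By a hybrid identical to $\Hh[_4]$--$\Hh[_5]$ of the \TFLU proof, the payload is already an RP image after we randomize $h_\pi$ at the exit relay, so we can swap the embedded $(\eta_0,\tilde k)$ for a freshly-sampled header/key pair produced from \FormO on $(P\ba_1,\dots,P\ba_{n\ba})$ with a random message. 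By the bitwise indistinguishability of Sphinx forward and reply packets on the wire (both are of the form $(\alpha,\beta,\gamma,\delta)$ with identically distributed $\alpha$, a MAC-ed $\beta$, and a PRP-encrypted $\delta$), the resulting output is distributed exactly like the $b=1$ replacement $\bar O_1$.

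The main obstacle is the seventh step above, namely showing that a reply onion starting at the honest relay and a freshly sampled forward onion on the same suffix path have identical joint distributions from the adversary's view. The header side is immediate once the KEM is re-sampled and the padding is randomized, since Sphinx headers are structurally the same for forward and reply packets and $\beta$ carries only the remaining relay names and MACs. The payload side is the delicate part, because the innermost plaintexts differ in shape ($R\|\eta_0\|\tilde k\|m$ versus $\bar R\|pad\fa_{\kappa,N}\|\bar m$); we handle this by a length-preserving 1-LR-CPA argument on the outermost PRP layer, exploiting \autoref{ass:recreate-tag} that $\pi$ is a PRP so that only the length of the innermost plaintext is visible, and Sphinx pads all inner plaintexts to the same fixed length. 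Once this step is verified, the remaining hybrids are bookkeeping reductions to primitives already used in the \TFLU proof.
\renewcommand*{\qedsymbol}{}
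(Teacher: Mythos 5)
Your plan matches the paper's in every structural respect: the same case split on $j\ba$, the same primitive reductions (\SKEMCCA, MAC unforgeability, PRP-to-RP, PRG, 1-LR-CPA), roughly the same ordering of KEM/header/padding/payload detaching, and the same key observation that no tag-forwarding hybrid is needed because the \Ps oracle produces no output on challenge-matching inputs. However, two things you wave away are exactly the Sphinx-specific core of the argument, and as written they would not hold up.

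First, the \enquote{bitwise indistinguishability} of forward and reply packets you invoke to close Case~A is not a free observation but what several hybrids must establish. The reply payload emitted by the exit relay is a \emph{single} PRP layer $\pi(\tilde{k},\cdot)$ under the key $\tilde{k}$ lifted from the forward payload, whereas the payload of $\bar{O}_1$ produced by \FormO is a \emph{nested chain} $\pi(h_\pi(s\ba_0),\pi(h_\pi(s\ba_1),\cdots))$; the paper turns each into an RP and then explicitly replaces one RP composition with the other, a dedicated step that your sketch omits. The header side is also not \enquote{immediate}: the innermost reply $\beta$ carries $\Ps\|I$ while a forward $\beta$ carries $\ast\|0_\kappa$, which requires its own PRG/OTP hybrid. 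Second, in your \enquote{main obstacle} paragraph you compare $R\|\eta_0\|\tilde{k}\|m$ against $\bar{R}\|\text{pad}\fa_{\kappa,N}\|\bar{m}$ --- but both of those are \emph{forward}-payload plaintexts, i.e.\ the \TFLU conversion. The conversion \NBLU actually needs is reply-to-forward: $0_\kappa\|\text{pad}\ba_{\kappa,N}\|m\ba$ versus $0_\kappa\|\bar{R}\|\text{pad}\fa_{\kappa,N}\|\bar{m}$. Your 1-LR-CPA argument does work for that swap, but only once the RP-chaining step above has aligned the nesting structure of the two payloads; without it, the two ciphertexts have different shapes and there is no single outer PRP layer to reduce to.
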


	\begin{proof}
		We prove that an adversary
		cannot distinguish
		the $b = 0$ scenario
		from the $b = 1$ scenario
		through a hybrid argument
		starting at $b=0$
		and ending at $b=1$.
		For clarity,
		we separate the proof
		into two cases:
		One where $j\ba = 0$
		and the other
		where $j\ba > 0$.

		\noindent\textbf{Case 1 $(j\ba = 0)$:}
		In this case,
		the honest relay
		on the return path
		is identical to
		the exit relay
		on the forward path
		and is thus also
		the sender of the reply onion.

		\textbf{Hybrid \Hh[_0]:}
		This hybrid is just
		the $b = 0$ case of \NBLU
		with $j\ba = 0$.

		\textbf{Hybrid \Hh[_1]:}
		\Hh[_1] replaces
		the random oracle outputs
		$h_*(s\ba_{n\ba-1})$
		with random $\{0,1\}^\kappa$ bitstrings
		when building the reply header.
		Any \textbf{Proc} requests
		for onions with the challenge
		$\alpha\ba_{n\ba-1}$
		are also served using these random keys.

		\underline{$\Hh[_0] \approx_I \Hh[_1]$:}
		This difference reduces
		to \SKEMCCA.
		See hybrid \Hh[_1]
		in the \TFLU proof
		for details.

		\textbf{Hybrid \Hh[_2]:}
		To ensure that
		only the challenge reply onion
		is \enquote{absorbed}
		by the reply receiver,
		the \textbf{Proc} oracle
		now returns $(\bot, \bot)$
		on every request
		with the challenge
		$\alpha\ba_{n\ba-1}$
		except if the rest of the header
		also matches the challenge reply
		(in that case,
		no output is produced at all).

		\underline{$\Hh[_1] \approx_I \Hh[_2]$:}
		See hybrid \Hh[_2]
		in the \TFLU proof.

		\textbf{Hybrid \Hh[_3]:}
		We repeat the changes
		in hybrids \Hh[_1]
		and \Hh[_2]
		for the honest relay $P_H$
		to randomize $h_*(s_{n-1})$
		and reject challenge onions
		with modified headers
		in \textbf{Proc} at $P_H$.

		\underline{$\Hh[_2] \approx_I \Hh[_3]$:}
		Analogous to
		$\Hh[_0] \approx_I \Hh[_1]$
		and $\Hh[_1] \approx_I \Hh[_2]$.

		\textbf{Hybrid \Hh[_4]:}
		This hybrid exchanges
		$\pi(h_\pi(s_{n-1}), \cdot)$
		with an RP.

		\underline{$\Hh[_3] \approx_I \Hh[_4]$:}
		A distinguisher \Dd
		for these hybrids
		is easily converted
		into an attacker \Aa on
		$\text{Exp}^\text{prp}_{\pi,\Aa}(\kappa)$.

		\textbf{Hybrid \Hh[_5]:}
		In this hybrid,
		the honest exit relay $P_H$
		only sends the challenge reply
		in response to a \textbf{Reply} request
		if the onion received
		in the corresponding \textbf{Proc} request
		has a payload
		that matches the expected payload
		$\delta_{n-1}$
		of the forward onion
		exactly.

		\underline{$\Hh[_4] \approx_I \Hh[_5]$:}
		See hybrid \Hh[_4]
		in \TFLU's
		Case 1.

		\textbf{Hybrid \Hh[_6]:}
		This hybrid replaces
		the reply header $\eta_0$
		and symmetric key $\tilde{k}$
		in the contents of $\delta_{n-1}$
		with a new reply header
		that uses the same path,
		but different randomness
		and a random $\tilde{k}'$
		when building the forward onion.
		The challenge reply onion's
		reply header
		is no longer read from
		the payload of the forward onion.
		Instead,
		the actual reply header
		and symmetric key
		are built \enquote{at} $P_H$
		when the challenge \textbf{Reply}
		is requested.

		\underline{$\Hh[_5] \approx_I \Hh[_6]$:}
		The behavior of the \textbf{Proc}
		and \textbf{Reply} oracles
		is indistinguishable
		between the two hybrids,
		since both process the onion
		and reply to it
		using the same header.
		The only other change
		is to the contents
		of $\delta_{n-1}$.
		We can construct
		an attacker \Aa
		on $\text{Exp}^\text{1-LR-CPA}_{\pi,\Aa}(\kappa)$
		using any distinguisher \Dd.

		\textbf{Hybrid \Hh[_7]:}
		In this hybrid,
		replace $\pi(h_\pi(s\ba_{n\ba-1}), \cdot)$
		and $\pi(\tilde{k}, \cdot)$
		with RPs
		when building
		the reply onion.

		\underline{$\Hh[_6] \approx_I \Hh[_7]$:}
		Analogous to
		$\Hh[_3] \approx_I \Hh[_4]$.

		\textbf{Hybrid \Hh[_8]:}
		Previously,
		the first layer $O\ba_1$
		of the reply onion
		had a payload
		encrypted with the RP
		$\pi(\tilde{k}, \cdot)$.
		We now replace this permutation
		with 
		$\pi(h_\pi(s\ba_0), \pi(h_\pi(s\ba_1), \cdots\pi(h_\pi(s\ba_{n\ba-1}), \cdot)\cdots))$
		while encrypting the same contents.
		The new permutation
		corresponds to
		how a forward onion payload
		is encrypted at the sender.

		\underline{$\Hh[_7] \approx_I \Hh[_8]$:}
		Since chaining the permutations
		$\pi(h_\pi(s\ba_i), \cdot)$
		after the RP
		$\pi(h_\pi(s\ba_{n\ba-1}), \cdot)$
		results in a new RP,
		we have simply replaced
		one RP with another.

		\textbf{Hybrid \Hh[_9]:}
		Until now,
		the contents
		of the reply onion payload were
		$0_\kappa\|\text{pad}\ba_{\kappa,N}\|m\ba$,
		where $m\ba$ is
		the adversary-chosen message.
		We replace them with
		$0_\kappa\|\bar{R}\|\text{pad}\fa_{\kappa,N}\|\bar{m}$,
		where $\bar{R}$ and $\bar{m}$
		are randomly chosen receivers
		and messages.

		\underline{$\Hh[_8] \approx_I \Hh[_9]$:}
		Analogous to
		$\Hh[_5] \approx_I \Hh[_6]$.

		\textbf{Hybrid \Hh[_{10}]:}
		This hybrid replaces
		$\rho(h_\rho(s\ba_{n\ba-1}))$
		with a random string
		when building the
		\enquote{reply} onion header.

		\underline{$\Hh[_9] \approx_I \Hh[_{10}]$:}
		Reduce to
		$\text{Exp}^\text{prg}_{\rho,\Aa}(\kappa)$.

		\textbf{Hybrid \Hh[_{11}]:}
		When building
		the \enquote{reply} onion header,
		\Hh[_{11}] uses $*\|0_\kappa$
		instead of $\Ps\|I$
		in the contents of $\beta\ba_{n\ba-1}$.

		\underline{$\Hh[_{10}] \approx_I \Hh[_{11}]$:}
		Since the change
		in $\beta\ba_{n\ba-1}$'s contents
		does not affect the padding,
		we can reduce this change to
		$\text{Exp}^\text{1-LR-CPA}_{\text{OTP},\Aa}$
		without further provisions.
		At this stage,
		$O\ba_1$ is built
		just like $\bar{O}_1$
		in the $b = 1$ case
		of \NBLU.

		\textbf{Hybrid \Hh[_{12}]:}
		This hybrid rewinds
		all of the temporary changes
		made in the previous hybrids
		\Hh[_{10}],
		\Hh[_7],
		\Hh[_5],
		\Hh[_4],
		\Hh[_3],
		\Hh[_2],
		and \Hh[_1]
		in that order.

		\underline{$\Hh[_{11}] \approx_I \Hh[_{12}]$:}
		Apply the previous arguments
		in reverse.
		This concludes
		the $j\ba = 0$ case
		of the \NBLU proof.

		\noindent\textbf{Case 2 $(j\ba > 0)$:}
		Now,
		we consider the case
		where the honest relay
		is on the reply path
		of the challenge onion.

		\textbf{Hybrid \Hh[_0]:}
		The $b\!=\!0$ scenario of \NBLU
		with $j\ba\! >\! 0$.

		\textbf{Hybrid \Hh[_1]:}
		This hybrid performs
		the same steps as
		the hybrids
		\Hh[_1], \Hh[_2], and \Hh[_3]
		in Case 1.

		\underline{$\Hh[_0] \approx_I \Hh[_1]$:}
		See the corresponding hybrids
		in Case 1.

		\textbf{Hybrid \Hh[_2]:}
		In this hybrid,
		we exchange
		the two permutations
		$\pi^{-1}(h_\pi(s\ba_{j\ba-1}), \cdot)$
		and $\pi(h_\pi(s\ba_{n\ba-1}), \cdot)$
		for RPs.

		\underline{$\Hh[_1] \approx_I \Hh[_2]$:}
		A distinguisher \Dd
		for this hybrid
		is easily converted
		into attackers \Aa and \Bb on
		$\text{Exp}^\text{prp}_{\pi^{-1},\Aa}(\kappa)$.
		and $\text{Exp}^\text{prp}_{\pi,\Bb}(\kappa)$.

		\textbf{Hybrid \Hh[_3]:}
		In this hybrid,
		we change how
		the challenge onion's payload
		$\delta\ba_{j\ba-1}$
		is processed at $P_H$.
		Normally,
		Sphinx calculates
		$\delta\ba_{j\ba}$
		by applying
		the $\pi^{-1}$ (P)RP
		to the payload.
		We replace $\pi^{-1}$ with the RP
		$\pi(h_\pi(s\ba_{j\ba}), \pi(h_\pi(s\ba_{j\ba+1}), \cdots\pi(h_\pi(s\ba_{n\ba-1}), \cdot)\cdots))$.

		\underline{$\Hh[_2] \approx_I \Hh[_3]$:}
		Analogous to Case 1's \Hh[_8].

		\textbf{Hybrid \Hh[_4]:}
		Now,
		instead of running the RP
		on $\delta\ba_{j\ba-1}$
		during the challenge processing,
		\Hh[_4] runs it on
		$0_\kappa\|\bar{R}\|\text{pad}\fa_{\kappa,N}\|\bar{m}$
		for a random receiver $\bar{R}$
		and message $\bar{m}$.
		This completely replaces
		the original,
		adversary-chosen payload
		with a forward payload.

		\underline{$\Hh[_3] \approx_I \Hh[_4]$:}
		See $\Hh[_5] \approx_I \Hh[_6]$
		in Case 1.

		\textbf{Hybrid \Hh[_5]:}
		In this hybrid,
		$P_H$ does not process
		the challenge onion.
		Instead,
		it outputs the header of $O\ba_{j\ba}$
		and the payload
		as defined in
		\Hh[_3] and \Hh[_4].

		\underline{$\Hh[_4] \approx_I \Hh[_5]$:}
		The processing output
		to the adversary
		is identical in both
		\Hh[_4] and \Hh[_5].

		\textbf{Hybrid \Hh[_6]:}
		\Hh[_6] replaces
		$\rho(h_\rho(s\ba_{j\ba-1}))$
		with a random string
		of the same length
		when building $O\ba_1$.
		
		\underline{$\Hh[_5] \approx_I \Hh[_6]$:}
		See Case 1's \Hh[_{10}].

		\textbf{Hybrid \Hh[_7]:}
		In this hybrid,
		we replace the
		first $(2(N-j\ba)+3)\kappa$ bits
		of the contents of
		$\beta\ba_{j\ba-1}$
		with randomness
		when building $O\ba_1$.
		This corresponds to
		the next relay address
		and the next MAC
		as well as
		the $(2(N-(j\ba-1))+3)\kappa$-bit prefix of
		$\beta\ba_{j\ba}$
		that does not contain padding.
		$\beta\ba_{j\ba}$ itself
		is still used for
		the $O\ba_{j\ba}$
		reinserted at $P_H$.

		\underline{$\Hh[_6] \approx_I \Hh[_7]$:}
		Reduce this change to
		$\text{Exp}^\text{1-LR-CPA}_{\text{OTP},\Aa}(\kappa)$
		like in Case 1's \Hh[_{11}].
		After this change,
		$\beta\ba_{j\ba-1}$ can be built
		without using any of the secrets
		$s\ba_{j\ba-1}$, \ldots, $s\ba_{n\ba-1}$
		or the random oracle outputs
		derived from them.

		\textbf{Hybrid \Hh[_8]:}
		This hybrid replaces the keys
		used to generate $O\ba_{j\ba}$.
		Instead of choosing
		$\alpha\ba_{j\ba} = \alpha_{j\ba-1}^{\leftarrow b\ba_{j\ba-1}}$
		and $s\ba_{j\ba} = y_{j\ba}^{x\ba b\ba_0 \cdots b\ba_{j\ba-1}}$,
		\Hh[_8] picks
		a new $x' \gets^R \mathbb{Z}^*_q$
		and lets $\alpha\ba_{j\ba} = g^{x'}$
		and $s\ba_{j\ba} = y_{j\ba}^{x'}$.
		The later $\alpha\ba_{i}$,
		$s\ba_{i}$, $i > j\ba$
		are calculated accordingly
		\footnote{%
			In \Hh[_1],
			the random oracle outputs
			for $s\ba_{n\ba-1}$
			at the reply receiver $P_S$
			are randomized.
			\Hh[_8] remains consistent
			with this behavior
			by also randomizing
			the random oracle outputs
			for the \enquote{new} $s\ba_{n\ba-1}$.
			Since the \enquote{original} $s\ba_{n\ba-1}$
			is not in use,
			there is still exactly
			one set of randomized oracle outputs
			at $P_S$.
		}.

		\underline{$\Hh[_7] \approx_I \Hh[_8]$:}
		See hybrid \Hh[_9]
		in \TFLU's Case 2.

		\textbf{Hybrid \Hh[_9]:}
		In \Hh[_9],
		we move the first
		$2j\ba\kappa$ bits
		of $\Phi_{n\ba-1}$
		(corresponding to $\Phi_{j\ba}$)
		into the $rand$-padding
		inside $\beta\ba_{n\ba-1}$.
		To do so,
		the $rand$-padding
		is extended by $2j\ba\kappa$ bits
		and $\Phi_{n\ba-1}$
		is truncated to
		$\Phi_{n\ba-1[2j\ba\kappa\ldots]}$.

		\underline{$\Hh[_8] \approx_I \Hh[_9]$:}
		For this step,
		assume a distinguisher \Dd
		for the two hybrids.
		We will construct
		an attacker \Aa
		on $\text{Exp}^{\text{1-LR-CPA}}_{\text{OTP},\Aa}(2j\kappa)$.
		\Aa submits
		\begin{align*}
			&\rho(h_\rho(s\ba_{j\ba}))_{[(2(N-(j\ba+1))+3)\kappa\ldots (2(N-1)+3)\kappa - 1]}\\
			\oplus &\cdots & \\
			\oplus &\rho(h_\rho(s\ba_{n-2}))_{[(2(N-(n-1))+3)\kappa\ldots (2(N-(n-1-j\ba))+3)\kappa - 1]}
		\end{align*}
		and
		\[\rho(h_\rho(s\ba_{n-1}))_{[(2(N-n)+3)\kappa\ldots (2(N-(n-j\ba)) + 3)\kappa - 1]}\]
		to the challenger
		and uses the challenge ciphertext
		as the substring
		\[[(2(N-(j\ba+1))+3)\kappa\ldots (2(N-1)+3)\kappa-1]\]
		of $\beta\ba_{n-1}$.
		$\Phi\ba_{j\ba}$
		is already a random string
		due to it being
		the result of an XOR operation
		with the random string
		$\rho(h_\rho(s\ba_{j\ba-1}))$,
		so the first scenario
		simulates \Hh[_3]
		and the second
		simulates \Hh[_4].

		\textbf{Hybrid \Hh[_{10}]:}
		This hybrid
		replaces $\rho(h_\rho(s\ba_{n\ba-1}))$
		with a random string
		when building $O\ba_{j\ba}$.

		\underline{$\Hh[_9] \approx_I \Hh[_{10}]$:}
		Analogous to
		this case's
		$\Hh[_5] \approx_I \Hh[_6]$.

		\textbf{Hybrid \Hh[_{11}]:}
		Just like in Case 1's \Hh[_{11}],
		we use $*\|0_\kappa$
		instead of $\Ps\|I$
		in the contents of $\beta\ba_{n\ba-1}$.

		\underline{$\Hh[_{10}] \approx_I \Hh[_{11}]$:}
		Analogous to Case 1's
		$\Hh[_{10}] \approx_I \Hh[_{11}]$.

		\textbf{Hybrid \Hh[_{12}]:}
		This hybrid rewinds
		the temporary changes
		made in the previous hybrids:
		\Hh[_{10}],
		\Hh[_7],
		\Hh[_6],
		\Hh[_2],
		and \Hh[_1]
		are unwound
		in that order.
		Note that unwinding \Hh[_7]
		means replacing
		the random contents
		in $\beta\ba_{j\ba-1}$
		with the
		\enquote{legitimate} rest
		of the reply header,
		not the $\beta$s
		that were transformed
		into the $\bar{O}_1$ header.

		\underline{$\Hh[_{11}] \approx_I \Hh[_{12}]$:}
		Apply the previous arguments
		in reverse.
		This concludes the $j\ba > 0$ case
		of the \NBLU proof.

		We have now shown
		that $\Hh[_0] \approx_I \Hh[_{12}]$
		with $\Hh[_0] = \NBLU_{b=0}$
		and $\Hh[_{12}] = \NBLU_{b=1}$
		for both the
		$j\ba = 0$
		and $j\ba > 0$ cases,
		proving that
		Sphinx satisfies \NBLU.
	\end{proof}

	\subsection{\texorpdfstring{RSOR-Tail Indistinguishability (\NTI)}{RSOR-Tail Indistinguishability}}
	\label{sapp:ntiproof}

	\begin{theorem}
		Sphinx satisfies \NTI
		under the GDH assumption.
	\end{theorem}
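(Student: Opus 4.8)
The plan is to show that Sphinx satisfies \NTI by a hybrid argument walking from the $b=0$ scenario to the $b=1$ scenario, exactly in the style of the \TFLU and \NBLU proofs above. The key observation is that the replacement onion $\bar{O}_1$ differs from the challenge layer $O_{j+1}$ in only two places. On the forward side, the header of $O_{j+1}$ carries the blinding chain $b_0,\dots,b_j$ --- so its leading group element is $\alpha_j^{b_j}$ rather than a fresh $g^{\bar{x}}$ --- and its accumulated header padding records the $j$ relays preceding $P_{j+1}$, whereas $\bar{O}_1$ is a freshly generated layer-$0$ onion for the path $(P_{j+1},\dots,P_n)$. Inside the payload --- whose zero check, receiver $R$, and message $m$ are all unchanged, since the exit relay is corrupted --- the embedded reply header $\eta_0$ and reply key $\tilde k$ encode the full reply path $(P_1\ba,\dots,P\ba_{n\ba}=\Ps)$, whereas $\bar{O}_1$'s payload encodes the truncated reply path $(P_1\ba,\dots,P\ba_{j\ba}=P_H\ba)$. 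No tagging provisions are needed, because the honest reply relay $P_H\ba$ absorbs the recognised challenge reply onion and produces no output in either scenario. As in the \TFLU proof, the adaptive choice of $j$ and $j\ba$ is harmless: every hybrid step is either common to all cases or takes effect only after the choice has been made.

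Phase~1 re-randomises the forward part of $O_{j+1}$; it mirrors hybrids \Hh[_8] and \Hh[_9] of the \TFLU proof together with their \SKEMCCA and PRG prerequisites, now applied at $P_H$. First invoke \SKEMCCA with the challenge embedded at position $j$ of the forward KEM path, so that $h_*(s_j)$ and the blinding factor $b_j$ at $P_H$ become uniformly random; the challenge forward onion is never submitted to this relay (the adversary only holds $O_{j+1}$), so the only bookkeeping is simulating its processing oracle through the decapsulation oracle. Next replace $\rho(h_\rho(s_j))$ by a random string (PRG security, the key now being random), and use this to rewrite the accumulated-padding suffix of $O_{j+1}$'s header as a uniform string (a one-time-pad argument, as in \TFLU's \Hh[_8]). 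Finally re-derive the forward KEM of $O_{j+1}$ and all later layers from a fresh uniform $\bar{x}\in\mathbb{Z}_q^*$, putting $g^{\bar{x}}$ in place of $\alpha_j^{b_j}$; these are identically distributed because $b_j$ is uniform. After Phase~1 the forward header of $O_{j+1}$ is distributed exactly as a fresh layer-$0$ header for $(P_{j+1},\dots,P_n)$: the original random-padding block of the final header layer concatenated with the now-uniform former padding suffix is a uniform block of the length a fresh layer would carry.

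Phase~2 truncates the embedded reply header at $P_H\ba$; it mirrors hybrid \Hh[_7] of the \TFLU proof, applied at $P_H\ba$. Repeat the preparatory moves for $P_H\ba$, embedding the \SKEMCCA challenge at position $j\ba$ of the reply header's KEM path: randomise $h_*(s\ba_{j\ba})$ and $b\ba_{j\ba}$; make $P_H\ba$ accept only the exact challenge-reply header in its recognised branch (strong unforgeability of the MAC $\mu$); and replace $\rho(h_\rho(s\ba_{j\ba}))$ by a random string. Then, as in \TFLU's \Hh[_7], replace the non-padding content of the reply-header layer received by $P_H\ba$ --- the next-relay address, the next MAC, and the next-$\beta\ba$ prefix --- together with that layer's MAC $\gamma\ba$, by the final-layer format $\{\ast\|\Ps\|I\|\mathit{rand}\}$ with the corresponding explicit $\Phi\ba$: the content swap is a one-time-pad reduction (the mask is random now), and the MAC swap is a PRF reduction for $\mu$ keyed with the now-random $h_\mu(s\ba_{j\ba})$, a key the adversary cannot query. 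The accumulated-padding suffix of that layer is left untouched, since it is built only from the PRG outputs of the corrupted relays $P_1\ba,\dots,P\ba_{j\ba-1}$ and therefore already has exactly the distribution a genuine final reply-header layer of the truncated path would. Note that the reply-header KEM secret and the reply payload key $\tilde k$ are freshly uniform in both scenarios, so no hybrid touches them, and --- unlike the forward side --- no KEM re-randomisation is needed on the reply side. A final group of rewinding hybrids undoes the temporary PRG, MAC, and \SKEMCCA substitutions (or simply notes that they are statistically invisible, the adversary being unable to compute $s_j$ or $s\ba_{j\ba}$), landing exactly on the $b=1$ distribution of $\bar{O}_1$. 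This establishes that Sphinx satisfies \NTI under the GDH assumption.

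The step I expect to be the main obstacle is the padding bookkeeping --- the same one met in \TFLU's hybrids \Hh[_7], \Hh[_8], and \Hh[_{12}], but now discharged twice, on the forward header and on the embedded reply header. One must check that the lengths line up (the forward layer $O_{j+1}$ carries $j$ accumulated-padding blocks plus the sender's random block, the fresh $\bar{O}_1$ carries only its own random block, and the two totals agree because a Sphinx header has fixed length), that after randomising the honest relay's PRG output the accumulated suffix is statistically uniform of exactly that length, and that on the reply side the recursive definition of $\Phi\ba$ makes the truncated layer's suffix coincide with the $\Phi\ba$ a genuine final reply-header layer would carry. Everything else is a routine re-run of reductions --- \SKEMCCA, PRG, one-time-pad, MAC --- that already appear in the \TFLU and \NBLU proofs.
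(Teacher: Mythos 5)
Your hybrid plan follows the same route as the paper's Appendix~\ref{sapp:ntiproof} proof: a first phase truncating the forward header at $P_H$ (re-randomise the KEM, randomise the PRG output, move the freed $\Phi$ blocks into random padding), then a second phase truncating the embedded reply header at $P_H\ba$ (randomise the $h_*$ outputs, restrict the \textbf{Proc} oracle to the exact challenge header via MAC unforgeability, randomise the PRG, and OTP-swap the contents of $\beta\ba_{j\ba-1}$ to a final-layer header), then unwind. You also correctly identify the two structural simplifications relative to \TFLU\ and \NBLU: no tagging provisions are needed because neither honest relay's oracle ever emits a challenge layer, and no KEM re-randomisation is needed on the reply side since the reply KEM secret is already independent and fresh in both scenarios. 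Your flagged worry --- the padding bookkeeping --- is exactly where the paper spends its hybrids \Hh[_2]--\Hh[_5] on the forward side and \Hh[_8]--\Hh[_9] on the reply side, and your intuition for how it resolves (lengths agree because the header has fixed length; the former $\Phi$ suffix becomes uniform once the honest relay's PRG key is random) matches the paper.

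The one place you diverge --- and add work the paper deliberately avoids --- is the forward re-randomisation. You invoke \SKEMCCA\ with the challenge embedded at the honest relay's forward position. The paper instead observes that in \NTI, unlike in \TFLU, the layer $O_j$ the honest relay would process is \emph{never handed to the adversary}: it only receives $O_{j+1}$. Consequently $b_{j-1}$ and $h_*(s_{j-1})$ are random-oracle outputs on an input the adversary cannot feasibly produce (computing $(\alpha_{j-1},s_{j-1})$ from $\alpha_j$ would require knowing $b_{j-1}$, which is circular), so they are already uniform in the adversary's view without any \SKEMCCA\ hybrid; this is why the paper's \Hh[_1]/\Hh[_2] need only cite \TFLU's hybrid \Hh[_9] and the random-oracle definition, rather than re-running the whole chain from \Hh[_1]. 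Your \SKEMCCA\ reduction is sound and arguably tightens the informal RO step the paper waves at, but it also forces you to carry the decapsulation-oracle simulation for $P_H$'s \textbf{Proc} queries through the hybrid, which the paper's argument never needs. Buy: your version is closer to a fully explicit reduction; the paper's buys a shorter proof by exploiting that the challenge layer at $P_H$ is invisible.

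Two small things to fix if you flesh this out. First, watch the indexing: in the paper's zero-indexed Sphinx notation the honest relay $P_j$ holds shared secret $s_{j-1}$ and the adversary's onion $O_{j+1}$ carries $\alpha_j$ in its header, so the quantity you want to replace by $g^{\bar x}$ is $\alpha_{j-1}^{\,b_{j-1}}$, not $\alpha_j^{\,b_j}$; your text consistently shifts by one. Second, you should handle the degenerate case $j=0$ explicitly (the paper's \Hh[_1] just skips the entire forward phase then), and likewise $j\ba=n\ba$ on the reply side.
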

	\begin{proof}
		We prove that
		an adversary cannot distinguish
		the $b = 0$ scenario
		from the $b = 1$ scenario
		through a hybrid argument
		starting at $b = 0$
		and ending at $b = 1$.
		We gradually transform
		the $O_{j}$ onion
		into the $\bar{O}_0$ onion
		in successive hybrids.

		\textbf{Hybrid \Hh[_0]:}
		This hybrid is just the $b = 0$ case of \NTI.

		\textbf{Hybrid \Hh[_1]}:
		If $j = 0$,
		the following hybrids
		do nothing.
		Skip to hybrid \Hh[_6]
		in that case.
		In this hybrid,
		we begin truncating the forward path.
		When building $O_{j}$,
		choose $\alpha_i := g^{x'}$
		and $s_i := y_i^{x'}$
		with a random $x' \in \mathbb{Z}^*_q$.

		\underline{$\Hh[_0] \approx_I \Hh[_1^1]$}:
		See hybrid \Hh[_9]
		in the \TFLU proof's
		Case 2.
		Note that $b_{j-1}$
		is a random oracle output
		that is never used elsewhere.

		\textbf{Hybrid \Hh[_2]:}
		When building the Sphinx packet,
		replace $h_\rho(s_{j-1})$ with
		a random $\{0,1\}^\kappa$-bitstring.
		$h_\rho(s_{j-1})$ is only required
		to calculate $\Phi_j$.

		\underline{$\Hh[_1^j] \approx_I \Hh[_2]$:}
		$s_{j-1}$ behaves like
		a uniformly random group element.
		By definition of a random oracle,
		these hybrids are indistinguishable.

		\textbf{Hybrid \Hh[_3]:}
		Replace $\rho(h_\rho(s_{j-1}))$
		with a random string.
		Since $\Phi_j$ is calculated
		from an XOR operation
		with $\rho(h_\rho(s_{j-1}))$,
		it is now also a random string.

		\underline{$\Hh[_2] \approx_I \Hh[_3]$:}
		See hybrid \Hh[_6]
		in the \TFLU's
		Case 2.

		\textbf{Hybrid \Hh[_4]:}
		When building $\beta_{n-1}$,
		\Hh[_3] extends
		the random bits in its contents
		by $2j\kappa$ extra random bits
		and truncates $\Phi_{n-1}$
		by the same amount.

		\underline{$\Hh[_3] \approx_I \Hh[_4]$:}
		See hybrid \Hh[_9]
		in \NBLU's
		Case 2.

		\textbf{Hybrid \Hh[_5]:}
		This hybrid does not generate
		$\alpha_0$, \ldots, $\alpha_{j-1}$,
		$\beta_0$, \ldots, $\beta_{j-1}$,
		$\gamma_0$, \ldots, $\gamma_{j-1}$,
		$\delta_0$, \ldots, $\delta_{j-1}$,
		or $\Phi_0$, \ldots, $\Phi_{j-1}$.

		\underline{$\Hh[_4] \approx_I \Hh[_5]$:}
		The parts of the packet destined
		for the path prefix
		are not used in $\Hh[_4]$,
		so not generating them
		in the first place goes unnoticed
		by any distinguisher.
		
		\textbf{Hybrid \Hh[_6]:}
		If $j\ba = n\ba$,
		we can skip the following hybrids
		because the original
		and truncated paths
		are identical.
		We thus assume
		$j\ba < n$
		in the following.
		This hybrid replaces
		$h_\rho(s\ba_{j\ba-1})$,
		$h_\mu(s\ba_{j\ba-1})$,
		and $h_\pi(s\ba_{j\ba-1})$
		with random $\{0,1\}^\kappa$ strings.
		In order to process \textbf{Proc} requests
		for $P\ba_H$ with $\alpha$s
		that are identical to
		the $\alpha$ of the challenge header,
		\Hh[_6] uses the new random keys.

		\underline{$\Hh[_5] \approx_I \Hh[_6]$:}
		See hybrid \Hh[_1]
		in \TFLU's
		Case 1.

		\textbf{Hybrid \Hh[_7]:}
		This hybrid adjusts the processing
		of onions
		at the second honest relay $P\ba_{j\ba}$
		by returning $(\bot, \bot)$
		for any \textbf{Proc} request
		with the challenge $\alpha$
		unless the entire header
		matches the one of
		the challenge onion.

		\underline{$\Hh[_6] \approx_I \Hh[_7]$:}
		See hybrid \Hh[_2]
		in \TFLU's
		Case 1.

		\textbf{Hybrid \Hh[_8]:}
		This hybrid replaces
		$\rho(h_\rho(s\ba_{j\ba-1}))$
		with a random string
		when forming $\beta\ba_{j\ba-1}$
		and $\Phi_{j\ba}$.

		\underline{$\Hh[_7] \approx_I \Hh[_8]$:}
		The difference between
		these two hybrids reduces to 
		$\text{Exp}^{prg}_{\rho,\Aa}(\kappa)$.

		\textbf{Hybrid \Hh[_9]:}
		This hybrid replaces the actual contents
		of $\beta\ba_{j\ba-1}$.
		In $\Hh[_8]$,
		these were
		\[\{n\ba_{j\ba}\|\gamma\ba_{j\ba}\|\beta\ba_{j\ba {[\ldots(2N-1)\kappa-1]}}\}.\]
		$\Hh[_9]$ replaces
		the first
		$(2(N-j\ba)+3)\kappa$ bits
		of that with
		\[\{\Ps\|I\ba\|rand_{[(2(N-j\ba)+2)\kappa-|\Delta|]}\}.\]
		The rest of $\beta\ba_{j\ba-1}$
		is unchanged padding.

		\underline{$\Hh[_8] \approx_I \Hh[_9]$:}
		See $\Hh[_3] \approx_I \Hh[_4]$.

		\textbf{Hybrid \Hh[_{10}]:}
		This hybrid reverts
		the temporary changes
		in the previous hybrids:
		\Hh[_8],
		\Hh[_7],
		and \Hh[_6]
		are unwound
		in that order.

		\underline{$\Hh[_9] \approx_I \Hh[_{10}]$:}
		Apply the previous arguments
		in reverse.

		\textbf{Hybrid \Hh[_{13}]:}
		This hybrid does not generate
		$\alpha\ba_{j\ba}$, \ldots, $\alpha\ba_{n\ba-1}$,
		$\beta\ba_{j\ba}$, \ldots, $\beta\ba_{n\ba-1}$,
		$\gamma\ba_{j\ba}$, \ldots, $\gamma\ba_{n\ba-1}$,
		$h_\pi(s\ba_{j\ba})$, \ldots, $h_\pi(s\ba_{n\ba-1})$,
		or $\Phi\ba_{j\ba}$, \ldots, $\Phi\ba_{n\ba-1}$.

		\underline{$\Hh[_{12}] \approx_I \Hh[_{13}]$:}
		These components are no longer required
		to form the challenge packet:
		The later $\beta\ba$s have been replaced
		by the new contents of $\beta\ba_{j\ba-1}$
		along with the later $\Phi\ba$s.

		$\Hh[_{13}]$ is the $b = 1$ case
		of \NTI.
		Since $\Hh[_0] \approx_I \Hh[_{13}]$,
		Sphinx satisfies \NTI.
	\end{proof}

\section{\texorpdfstring{\FRN UC Realization Proof}{FRNRE UC Realization Proof}}
	\label{app:frn-uc-proof}

	\noindent
	\textbf{\autoref{thm:rnre-props}.}
	\textit{
		An RSOR protocol
		according to
		\autoref{ssec:reor-schemes}
		with a PRP-encrypted payload
		that satisfies RSOR-Correctness,
		Tagging-Forward Layer Unlinkability,
		RSOR-Backward Layer Unlinkability,
		and RSOR-Tail Indistinguishability
		securely realizes \FRN.
	}

	This proof
	is taken from
	\citea{or_replies}
	and modified to
	fit RSOR.
	Our modifications
	are given in
	\tch{this style}.
	Some parts of the proof
	also make use of elements
	from \citeauthor*{break_onion}'s proof
	of UC-realization
	in~\cite{break_onion} ---
	these parts
	are explicitly marked
	with citations.

	\begin{proof}
		For UC-realization,
		we show that
		every attack
		on the real world protocol $\Pi$
		can be simulated
		by an ideal world attack
		without the environment
		being able to distinguish those.
		We first describe the simulator S.
		Then we show
		indistinguishability of
		the environment's view
		in the real and ideal world.

	\paragraph*{Constructing Simulator \Ss}

	\Ss interacts with
	the ideal functionality {\ch \FRN}
	as the ideal world adversary,
	and simulates
	the real-world honest parties
	for the real world adversary \Aa.
	All outputs \Aa does
	are forwarded
	to the environment by \Ss.
	First,
	\Ss carries out
	the trusted set-up stage:
	It generates
	public and private key pairs
	for all the real-world honest parties.
	\Ss then sends
	the respective public keys
	to \Aa
	and receives
	the real world corrupted parties' public keys
	from \Aa.
	The simulator \Ss
	maintains four internal data structures:
	\begin{itemize}
		\item The $r$-list
			consisting of tuples
			of the form
			(\textit{{\ch onion}, {\ch prevRelay,} nextRelay, tid{\ch, $a$})}.
			Each entry
			in this list
			corresponds to a stage
			in processing an onion
			that belongs
			to a communication
			of an honest sender
			{\ch or an onion
			that was injected
			into \FRN
			by \Ss
			}.
			By \enquote{stage},
			we mean that
			the next action
			to this onion
			is adversarial
			(i.e.,
			it is sent
			over a link
			or processed by
			an adversarial router).

		\item The $O$-list
			containing onions
			sent by corrupted senders
			together with
			the information
			about the communication
			(\textit{{\ch onionlist, path, currentPosition}, \textit{information})}.

		\item The $Reply$-list
			containing reply information
			together with
			the forward id
			for communications
			with a corrupted sender
			$({\ch sid}_{fwd}, \text{reply information})$.

		\item The $C$-list
			containing reply information
			together with
			a tid for communications
			with an honest sender
			$(P_i, reply, tid)$.
	\end{itemize}

	\textbf{\Ss's behavior
	on a message from \FRN:}
	\textit{In case
	the received output
	belongs to
	an adversarial sender's communication:}

	\noindent\textbf{Case I:}\hspace{0pt plus 20pt}
			\enquote{\textsl{start}
				belongs to {\ch reply}
				from \Ps
				with ${\ch sid}, P_r, m, n, {\ch \Pp, \Pp\ba}, d$,
				{\ch replying to onion
				from $P_r$ with $sid$}%
			};
			an honest relay
			is replying to
			an onion of
			a corrupted sender.
			\Ss knows that
			the next output
			\enquote{Onion $tid$
				in direction $d$
				from \ldots
			}
			includes the first part
			of this backward path,
			that he chose
			to consist of
			{\ch the correct sequence
				of honest relays
				potentially followed by
			}
			one adversarial relay
			{\ch $[\ldots]$}.
			To construct
			the right real world reply onion,
			\Ss looks up
			the reply information
			$({\ch sid}, \textit{replyinfo})$
			for this ${\ch sid}$
			in the $Reply$-list
			and uses the information
			to construct
			{\ch the reply onion
				\[(O_1, P_1) \gets \Reply(m, \textit{replyinfo}, {\ch \Ps, SK_s})\]
				followed by
				the next onion layers
				as far as \Ss
				can process them
				with the secret keys
				of the honest relays.
				Since the sender
				of the forward onion
				is corrupted,
				there must be
				at least one adversarial relay
				on the reply path
				of the onion,
				so \Ss will be able
				to process the onion
				up to the pair $(O', P')$
				with an adversarial $P'$.
				This results in
				a list of onions
				$\mathcal{O} = (O_1, \ldots, O_{last})$
				and a list of relays
				$\Pp = (P_1, \ldots, P_{last})$.
				\Pp is identical
				to the reply path
				of the onion
				that was already created
				in the ideal world.%
			}
			{\ch \Ss} sends $O_1$ to $P_1$,
			if $P_1$ is adversarial,
			or to \Aa's party
			representing the link between
			$P_{\ch s}$ and $P_1$,
			if $P_1$ is honest.
			(Note that $P_{\ch s}$
			cannot be adversarial
			for this output
			as then both sender and receiver
			would be corrupted,
			which only activates
			cases \textbf{VIIIb} and \textbf{II}
			(as it works
			without including
			any reply onion
			from the view of
			the ideal world).
			{\ch
				If the next relay $P_1$
				on the reply onion's path
				is honest,
				then \Ss needs to
				be able to associate
				that onion layer
				and the ones
				following it
				in the real world
				with the ideal-world onion
				as the layers are processed
				and sent along
				the honest relays.
				Conversely,
				if $P_1$
				is adversarial,
				then the onion
				leaves \Ss's control
				after this case%
				\footnote{%
					If $P_1$ is adversarial,
					the reply path
					of the ideal-world onion
					\Ss created
					for this reply
					in case \textbf{VIII}
					will also end
					at $P_1$.
				}.

			\let\origenumi\labelenumi
			\begin{enumerate}
				\item If the first relay $P_1$
					is an honest relay,
					\Ss adds
					the tuple
					$(\mathcal{O}, \Pp, 0, (\Ps, sid, P', m, \Pp, ()))$
					onto the $O$-list
					and the tuple
					$(O_1, \Ps, P_1, tid, a)$
					to the $r$-list,
					where $tid$ is the ID
					that \Ss received
					along with the output
					from \FRN
					and $a$ is the index
					of the $O$-list entry.%

				\item If $P_1$ is adversarial,
					\Ss does no additional work.%
			\end{enumerate}%
			\let\labelenumi\origenumi%
			}

		\noindent\textbf{Case II:}\hspace{0pt plus 20pt}
			\enquote{\textsl{start} belongs to
				onion from \Ps
				with ${\ch sid}, {\ch R}, m, {\ch \Pp, \Pp\ba, d}$%
			}.
			This is just the result
			of \Ss's reaction
			to an onion
			from \Aa
			that was not
			the protocol-conform processing
			of an honest sender's communication
			(Case \textbf{VIII}).
			\Ss does nothing.

		\noindent\textbf{Case IIIa:}
			any output
			together with
			\enquote{$tid$ belongs to
				onion{\ch /reply} from \Ps
				with ${\ch sid}, {\ch R/}P_r, m, {\ch \Pp, \Pp\ba, d}$%
			}
			for $tid \notin \{\textsl{start}, \textsl{end}\}$.
			This means
			an honest relay
			is done processing
			an onion
			received from the adversary \Aa
			that was not
			the protocol-conform processing
			of an honest sender's communication
			(processing that follows Case \textbf{VII}).
			\Ss finds
			(\textit{{\ch onionlist, path, $c := \textit{currentPosition}$}, information})
			with these inputs
			as information
			{\ch and
				$\Pp[][c] = \Po{i}$
				where \Po{i}
				is the relay
				that sent
				the onion $tid$
				in \FRN%
				~\cite{break_onion}
			}
			in the $O$-list
			(notice that there
			has to be
			such an entry).
			{\ch Let $a$
				be the index
				of the entry
				in the $O$-list%
				~\cite{break_onion}.

				\Ss must now
				send the correct onion
				from the list
				in the $O$-list entry
				over the next link
				while keeping track of it
				so that it can reassociate it
				with the $O$-list entry
				when it next receives it
				as the following honest relay.
				To this end,
				\Ss stores
				$(\mathcal{O}[c], \Pp[][c], \Pp[][c+1], tid, a)$
				to the $r$-list
				and sends
				$\mathcal{O}[c]$
				to the link to
				$\Pp[][c+1]$
				from $\Pp[][c]$%
				~\cite{break_onion}.
			}

		{\ch \noindent\textbf{Case IIIb:}
			any output
			together with
			\enquote{\textsl{end} belongs to
				onion{\ch /reply} from \Ps
				with ${\ch sid}, {\ch R/}P_r, m, {\ch \Pp, \Pp\ba, d}$%
			}.
			This case occurs
			whenever one of the onions
			\Ss creates in \FRN
			in case \textbf{VIII}
			reaches the end
			of its path
			and either an adversarial relay
			or the receiver of the onion
			comes next.
			\Ss can tell the difference
			by examining whether
			another relay
			$\Pp[][c+1]$
			and onion
			$\mathcal{O}[c]$
			remain in the lists
			of the $O$-list entry
			(\textit{onionlist, path, $c := \textit{currentPosition}$, information})
			corresponding to this onion
			(\Ss finds the entry
			like in case \textbf{IIIa}).

			\let\origenumi\labelenumi
			\begin{enumerate}
				\item If another relay follows,
					\Ss sends
					$\mathcal{O}[c]$
					to the link to
					$\Pp[][c+1]$
					from 
					$\Pp[][c]$%
					~\cite{break_onion}.

				\item If there is no next relay,
					then \Ss must
					send the message
					contained in the onion
					to the receiver
					along with
					the correct reply ID
					if the onion
					is repliable
					(if it is,
					\Ss also receives
					\enquote{Reply with reply ID $rid$}
					in the output).
					\Ss sends $(m, rid)$
					(or $(m, \bot)$
					if the onion
					is not repliable)
					to the link to $R$
					from \Po{i}
					in the real world,
					where \Po{i}
					is the relay
					that sent the message
					in \FRN,
					triggering this case.
			\end{enumerate}
			\let\labelenumi\origenumi%
		}
			{\ch \noindent\textbf{Case IIIc:}
				Any output
				together with
				the new output
				\enquote{\textsl{tagged} belongs to
					onion from \Ps
					with $sid, R, m, n, \Pp{}$%
				}.
				If \Ss receives
				this message,
				the final honest relay
				on the forward path
				of a corrupted sender's onion
				just processed
				a tagged onion
				over either
				the final path segment
				consisting of
				only corrupted relays
				or the final link
				to the receiver itself.
				Depending on
				which is the case,
				\Ss performs different actions:

				\let\origenumi\labelenumi
				\begin{enumerate}
					\item If $\Pp[][c+1]$
						is set:
						The onion
						is not
						at the exit relay yet
						and the tagging
						will not
						be discovered
						by an honest router.
						\Ss behaves
						like in case \textbf{IIIa}.
					\item If $\Pp[][c+1]$
						is not set:
						The honest exit relay
						has noticed the tagging.
						The protocol would discard
						such an onion,
						so no action
						is required from \Ss.
				\end{enumerate}
				\let\labelenumi\orig
			}

	\textit{In case
		the received output
		belongs to
		an honest sender's communication:
	}

		\noindent\textbf{Case IV:}
			\enquote{\ch
				\Po{i} sends
				onion $tid$
				to \Po{i+1}
				via $()$%
			}.
			In this case,
			\Ss needs to
			make it look
			as though an onion was passed
			from the honest party \Po{i}
			to the honest party \Po{i+1}:
			\Ss picks the path
			$\Pp = (\Po{i}, \Po{i+1})$
			and random message $m_{rdm}$.
			\Ss honestly picks
			a randomness \Rr{}
			{\ch and a random receiver $R$}
			and calculates
			\[O_1 \gets {\ch \FormO(1, \Rr, m_{rdm}, R, \Pp, (), PK_{\Pp}, ())}\]
			and sends
			the onion $O_1$
			to \Aa's party
			representing the link
			between the honest relays
			as if it was sent
			from \Po{i} to \Po{i+1}.
			\Ss stores
			$({\ch O_1}, {\ch \Po{i}}, \Po{i+1}, tid, {\ch \bot})$
			on the $r$-list.

		\noindent\textbf{Case V:}\hspace{0pt plus 20pt}
			\enquote{\ch
				\Po{i} sends
				onion $tid$
				to \Po{j}
				via \Path{i+1}{j-1}{}%
			}.
			To handle this case,
			\Ss picks the path
			$\Pp = \Path{i+1}{j-1}$,
			a randomness \Rr{}
			{\ch and a random receiver $R$}
			and a message $m_{rdm}$
			and calculates
			\[O_1 \gets {\ch \FormO(1, \Rr, m_{rdm}, R, \Pp, (), PK_{\Pp}, ())}\]
			and sends
			the onion $O_1$
			to \Po{i+1},
			as if it came
			from \Po{i}.
			\Ss stores
			$({\ch O_{j-i-1}}, {\ch \Po{j-1}}, \Po{j}, tid, {\ch \bot})$
			on the $r$-list.

		\noindent\textbf{Case VIa:}
			\Ss receives the message
			\enquote{\ch
				\Po{i} sends onion
				with message $m$
				to $R$
				via \Path{i+1}{j-1}{}%
			}.
			{\ch The behavior
				in this case
				depends on
				whether
				the onion is repliable
				and whether \Po{i}
				is the onion's exit relay
				or not:
			}
			\let\origenumi\labelenumi
			\begin{enumerate}
				{\ch \item The onion is not repliable
					and \Po{i} is
					its exit relay.
					In this case,
					\Ss sends $(m, \bot)$
					to the adversary's link
					to $R$ as \Po{i}
					in the real world.%
				}

				{\ch \item The onion is not repliable
					and \Po{i} is not
					its exit relay.
					Here,
					\Ss needs to build an onion
					that will carry the message
					across the remaining
					adversarial relays
					to the receiver.
					This happens just like
					in the original case:
				}
				\Ss picks the path
				$\Pp = \Path{\ch i+1}{\ch n}$,
				randomness \Rr,
				calculates
				\[O_1 \gets {\ch \FormO(1, \Rr, m, R, \Pp, (), PK_{\Pp}, ())}\]
				and sends
				the onion $O_1$
				to \Po{i+1},
				as if it came from \Po{i}.

				{\ch \item The onion is repliable
					and \Po{i} is
					the exit relay.
					\Ss receives
					the additional output
					\enquote{%
						Reply with reply ID $rid$%
					}
					and sends $(m, rid)$
					to the adversary's link
					to $R$ as \Po{i}
					in the real world.
				}

				{\ch \item The onion is repliable
					and \Po{i} is
					not the exit relay.
					Now,
					\Ss needs to use
					the extra output
					\enquote{%
						Reply with $tid$.
						Its reply path
						begins with $\Pp\ba$%
					}
					to construct an onion
					that will carry the message
					to the receiver
					and allow it to reply
					such that the reply onion
					will follow the beginning
					of the reply path
					to the first honest relay
					on it,
					where \Ss will expect it.
				}
				\Ss picks the path
				$\Pp = \Path{\ch i+1}{\ch n}$,
				randomness \Rr,
				calculates
				\[O_1\! \gets\! {\ch \FormO(1,\! \Rr,\! m,\! R,\! \Pp,\! \Pp\ba\!\!\!, PK_{\Pp\!},\! PK_{\Pp\ba\!\!})}\]
				and sends
				the onion $O_1$
				to the relay \Po{i+1},
				as if it was
				sent from
				the relay \Po{i}.
				Further,
				\Ss stores
				$(\Pp\ba.last, info, tid)$
				with
				$\textit{info} = ({\ch n} + \Pp\ba\!\!.lastPos, \Rr, m, {\ch R}, \Pp, \Pp\ba, PK_{\Pp}, PK_{\Pp\ba})$
				on the $C$-list.
				(Note that,
				as this is an honest communication,
				$\Pp\ba.last$ is honest.)
			\end{enumerate}
			\let\labelenumi\origenumi
			
			{\ch \noindent\textbf{Case VIb:}
				\Ss receives
				the message
				\enquote{\Po{i} sends tagged onion
				via \Path{i+1}{n}{}}.
				This means that
				an honest relay
				has processed
				a tagged onion
				from an honest sender
				on the forward path
				and is delivering it
				to the exit relay.
				\Path{i+1}{n}
				is never empty
				when \Ss receives
				this input
				since \FRN guards
				against that case
				with the condition
				$i < n$.
				To translate
				this onion
				into the real world,
				\Ss behaves
				like in case \textbf{VIa2},
				but doesn't learn
				the message
				or the receiver
				of the onion
				and must additionally
				tag the new onion
				before sending it.
				To this end,
				\Ss chooses
				a random $m_{rdm} \in M$
				and $R_{rdm} \in D$
				and builds the onion
				using those.
				Before sending it,
				\Ss tags the onion.
			}

			{\ch \noindent\textbf{Case IX:}
				\enquote{$R$ replies
					to $rid$
					with message $m$
					via $P_i$%
				}.
				\Ss needs to
				recreate this reply
				in the real world
				by sending $(m, rid)$
				to the adversary's link
				to $P_i$
				from the honest receiver $R$.%
			}

	\textbf{\Ss's behavior on a message from \Aa:}
	\Ss,
	as real world honest party $P_i$,
	received an onion
	$O = (\tilde{\eta}, \tilde{\delta})$
	{\ch or a message-reply ID-pair $(m, rid)$}
	from \Aa
	as adversarial player $P_a$.

		\noindent\textbf{Case VIIa:}
			$({\ch (\tilde{\eta}, \tilde{\delta})}, {\ch P_{i-1}}, P_i, tid, {\ch a})$
			is on the $r$-list
			for some $tid$.
			In this case,
			$O$ is the protocol-conform processing
			of an onion
			from an honest sender's communication.
			\Ss runs
			$\Proc(SK_{P_i}, O, P_i)$.
			If it returns a fail
			($O$ is a replay
			or modification
			that is detected and dropped
			by $P_i$),
			\Ss does nothing.
			Otherwise,
			\Ss sends the message
			${\ch \DelOnion(tid)}$
			to {\ch \FRN}
			{\ch and increments
				\textit{currentPosition}
				for the $a$-th entry
				in the $O$-list
				if $a \neq \bot$%
				~\cite{break_onion}.
			}

		{\ch \noindent\textbf{Case VIIb:}
			$({\ch (\tilde{\eta}, \delta')}, {\ch P_{i-1}}, P_i, tid, {\ch a})$
			is on the $r$-list
			for some $tid$
			and a $\delta' \neq \tilde{\delta}$.
			From this,
			we know that
			\Aa has tagged
			the onion
			in flight.
			\Ss calculates
			$\Proc(SK_{P_i}, O, P_i)$.
			If it returns a fail
			(e.g., $O$ is a replay
			that is detected
			and dropped by the protocol),
			\Ss does nothing.
			If \Proc
			does not
			return a fail,
			\Ss calls
			$\Tag(tid)$
			to tag the onion
			in the ideal world
			as well
			before calling
			$\DelOnion(tid)$.
			If $a \neq \bot$,
			\Ss increases the position
			of the $a$-th entry
			in the $O$-list.
			To forward the tag
			in the real world,
			\Ss also replaces
			the onion list
			$\mathcal{O} = (O_1, \ldots, O, \ldots, O_k)$
			in the $O$-list entry
			with
			the new onion list
			$\mathcal{O}' = (O_1, \ldots, O', \ldots, O'_k)$,
			where $O'_i$
			is the result
			of processing $O'$
			repeatedly
			like in case \textbf{VIII}.%
		}%

		\noindent\textbf{Case VIII:}
			$(\tilde{\eta}, {\ch P_{i-1}}, P_i, tid, {\ch a})$
			is not on the $r$-list
			for any $tid$.
			{\ch This onion
				must have been sent
				by the $P_a$ relay itself
				since the links
				between relays
				are secure channels
				due to \Ff[SC].
				In order to
				replicate this onion
				in the ideal world,
				\Ss must first process it
				until it cannot be processed
				any further
				because either:
				1) the next relay
				is a receiver,
				in which case
				the reply path
				must also be processed,
				2) there is no next relay
				because the onion
				is a reply,
				3) the next relay
				is adversarial,
				so \Ss does not
				know the necessary keys,
				4) or processing the onion fails~\cite{break_onion}.

				In any case,
				\Ss processes $O$ repeatedly
				until it has
				the final result $(O', P')$
				along with
				the list of onions
				$\mathcal{O} = (O_1, \ldots, O_{last})$
				and relays $\Pp = (P_i, \ldots, P_{last})$
				encountered along the way.
				The following behavior
				depends on
				what form $(O', P')$ takes~\cite{break_onion}:
			}
			\let\origenumi\labelenumi
			\begin{enumerate}
				{\ch \item $(O', P') = (m, R)$:
					First,
					\Ss checks whether
					$O_{last}$ is repliable.
					If so,
					then \Ss forms a reply
					\[(O\ba_1, P\ba_1) \gets \Reply(m', O_{last}, P_i, SK_i)\]
					to it
					with an arbitrary message $m'$
					and processes it
					until it cannot proceed,
					resulting in
					a list of reply onions
					$\mathcal{O}\ba = (O\ba_1, \ldots, O\ba_k)$
					and relays
					$\Pp\ba = (P\ba_1, \ldots, P\ba_k)$.
					On the other hand,
					if the onion
					is not repliable,
					let $\mathcal{O}\ba = \Pp\ba = ()$.

					Now,
					\Ss creates the ideal-world onion
					by sending the message
					$\NewOnion(R, m, \Pp, \Pp\ba\!)$
					to \FRN
					in the role of $P_a$.
					After doing so,
					\Ss immediately delivers
					the first onion
					by calling $\DelOnion(tid)$
					with the $tid$ ID
					it received from \FRN
					without running case \textbf{IIIa}.
					It does so because
					\Aa has already delivered
					the onion from $P_a$
					to $P_i$.
					\Ss also stores
					$(\mathcal{O}, \Pp, 0, (P_a, sid, R, m, \Pp, \Pp\ba))$
					on the $O$-list~\cite{break_onion}
					and $(sid, O_{last})$
					on the $Reply$-list.%
				}

				\item ${\ch (O', P') = (m, \bot)}$:
					{\ch $[\ldots]$}
					$P_i$ is the recipient
					and $O'$
					{\ch is a message.}
					This means
					the adversary
					possibly replied
					to an honest senders
					forward onion
					{\ch with a corrupted exit relay}.
					\Ss checks for all
					$(P_i, reply, tid)$ tuples
					in the $C$-list
					to see if
					$\tilde{\eta}$ matches
					any $reply$-info
					on this list.
					If so
					(it was a reply to $tid$),
					\Ss sends the message
					${\ch \ByReply(m, tid)}$
					to \FRN on $P_a$'s behalf
					and,
					as \Aa already
					delivered this message
					to the honest party,
					sends
					${\ch \DelOnion(tid')}$
					for the belonging $tid'$.
					Otherwise,
					{\ch this onion is
						an unsolicited reply
						to an honest sender
						and will be ignored.
					}
					\Ss (creating this onion in the \FRN)
					sends {\ch the invalid onion
						on behalf of $P_a$:%
					}
					${\ch \NewOnion(\bot, \bot, \Pp, ())}$
					and ${\ch \DelOnion(tid)}$
					for the corresponding $tid$
					without running case \textbf{IIIa}.
					(Notice that \Ss knows
					which $tid$ and ${\ch sid}$
					belongs to this communication
					as it is started
					at an adversarial party $P_a$).
					{\ch \Ss adds
						$(\mathcal{O}, \Pp, 0, (P_a, sid, \bot, \bot, \Pp, ()))$
						to the $O$-list.
					}

				\item ${\ch (O', P' \neq \bot)}$:
					{\ch $P'$ is the next adversarial relay
						and $O'$ is the onion
						it should receive.
					}
					\Ss picks a message
					$m \in \mathcal{M}$
					{\ch and a new random receiver $R$}.
					\Ss sends
					on $P_a$'s behalf
					the message,
					${\ch \NewOnion(R, m, \Pp, ())}$
					(notice that
					this onion cannot be replied to)
					{\ch $[\ldots]$}
					and ${\ch \DelOnion(tid)}$
					for the belonging $tid$
					to \FRN{}
					{\ch without running case \textbf{IIIa}}
					(notice that \Ss knows
					the $tid$ as in case (a)).
					Now, \Ss adds the entry
					${\ch (\mathcal{O}\|(O'), \Pp\|(P'), 0, (P_a, sid, R, m, \Pp, ()))}$
					to the $O$-list.

				{\ch \item $(O', P') = (\bot, \bot)$:
					This onion failed to be processed
					at $P_{last}$,
					so \Ss must also send
					an invalid onion
					that takes this path
					in \FRN:
					\Ss sends
					$\NewOnion(\bot, \bot, \Pp, ())$
					in the role of $P_a$.
					If the header
					of the onion
					processes correctly
					as a last layer header,
					but the payload does not,
					\Ss calls $\Tag(tid)$
					on the corresponding
					$tid$ ID.
					Then,
					\Ss follows it with
					$\DelOnion(tid)$
					without running case \textbf{IIIa}.
					\Ss adds
					$(\mathcal{O}, \Pp, 0, (P_a, sid, \bot, \bot, \Pp, ()))$
					to the $O$-list.
				}
			\end{enumerate}
			\let\labelenumi\origenumi

			{\ch \noindent\textbf{Case X:}
				\Ss receives $(m, rid)$
				as $R$
				from the link to
				the exit relay $E$.
				\Ss sends the message
				$\DelPlain(E, m, rid, R)$
				to \FRN.%
			}

			{\ch \noindent\textbf{Case XI:}
				\Ss receives $(m, rid)$
				as the honest relay $P_i$
				from the link to
				the receiver $R$.
				\Ss sends the message
				$\DelReply(R, P_i, m, rid)$
				to \FRN.%
			}

	\noindent\textbf{Indistinguishability:}

	\textbf{Notation:}
	\Hh[_i] describes
	the first hybrid
	that replaces
	a certain part of any communication
	for the first communication.
	In \Hh[_i^{<x}]
	this part of the communication
	is replaced
	for the first $x - 1$ communications.
	Finally in \Hh[_i^\ast]
	this part of the communication
	is replaced in all communications.

	\textbf{Hybrid \Hh[_0]:}
	This machine sets up
	the keys for the honest parties
	(so it has their secret keys).
	Then it interacts
	with the environment
	and \Aa
	on behalf of the honest parties.
	It invokes
	the real protocol
	for the honest parties
	in interacting with \Aa.

	\textbf{Replacing between honest - Forward Onion:}
	We replace the onion layers
	in the way they appear
	in the communication.
	So the first onion layers
	(close to the sender)
	are replaced first.

	\textbf{Hybrid \Hh[_1]:}
	In this hybrid,
	for the first one forward communication
	the onion layers
	from its honest sender
	to the next honest relay
	on the forward path
	(relay or receiver)
	are replaced with
	random onion layers
	embedding the same path.
	More precisely,
	this machine acts like \Hh[_0]
	except that the consecutive onion layers
	$O_1, O_2, \ldots, O_j$
	from an honest sender $P_0$
	to the next honest relay $P_j$
	are replaced with $\bar{O}_1$
	and its following processings
	by calculating
	(with honestly chosen
	randomness $\Rr'$
	{\ch and a random receiver $R_{rdm}$})
	\[\bar{O}_1 \gets {\ch \FormO(1, \Rr', m_{rdm}, R_{rdm}, \Pp^{\ch \prime}, (), PK_{\Pp^{\ch \prime}}, ())}\]
	where $m_{rdm}$ is a random message,
	$\Pp^{\ch \prime} = (P_1, \ldots, P_j)$.
	\Hh[_1] now maintains a new $\bar{O}$-list,
	which it uses to
	recognize replacement onions.
	It stores
	$(info = (\Rr', m_{rdm}, {\ch R_{rdm}}, \Pp^{\ch \prime}\!, (), PK_{\Pp}, ()), {\ch \delta_j}, P_j, (O_1^R, P_{j+1}))$\\
	there,
	where $info$ are the randomness and parameters
	used for the {\ch replacement} onion's creation,
	{\ch $\delta_j$ is the payload
	of the $j$-the onion layer,}
	and $O_1^R$ is calculated as
	\[O_1^R\! \gets\! {\ch \FormO(j + 1, \Rr, m, R, \Pp, \Pp\ba\!\!, PK_{\Pp}, PK_{\Pp\ba\!\!})},\]
	where the randomness,
	{\ch receiver,}
	paths and message
	are chosen as in
	the original sender's call in \Hh[_0].
	If an onion $\tilde{O}$
	is sent to $P_j$,
	the machine tests
	if processing results in a fail
	(replay/modification detected and dropped).
	If it does not,
	\Hh[_1] uses
	\[\RO(j, \tilde{O}, \Rr', m_{rdm}, {\ch R_{rdm}}, \Pp^{\ch \prime}, \Pp\ba, PK_{\Pp}, PK_{\Pp\ba})\]
	for every recognize-information
	stored in the $\bar{O}$-list
	where the {\ch third} entry is $P_j$.
	If it finds a match,
	{\ch it compares $\tilde{O}$'s payload
	to the $\delta_j$
	in the entry.
	If those also match,}
	the belonging $O_1^R$
	is sent to $P_{j+1}$
	as the processing result of $P_j$.
	{\ch If only the payload comparison fails,
	the onion has been tagged.
	\Hh[_1] adds
	the original onion's input parameters
	to a \textit{Tag}-list
	for later recognition.
	If $P_j$ is
	the onion's exit relay
	($P_{j+1} = \bot$),
	\Hh[_1] produces
	no output.
	Otherwise,
	\Hh[_1] recreates the tag
	on $O_1^R$
	and sends it
	to $P_{j+1}$.
	If no onion
	with a matching header
	is found,
	}
	$\Proc(SK_{P_j}, \tilde{O}, P_j)$
	is used.

	\underline{$\Hh[_0] \approx_I \Hh[_1]$:}
	The environment gets notified
	when an honest party receives
	an onion layer
	(and about their repliability)
	and inputs
	when this party is done.
	As we just
	exchange onion layers with others
	(with the same repliability),
	the behavior to the environment
	is indistinguishable for both machines.
	\Aa observes
	the onion layers after $P_0$
	and,
	if it sends an onion to $P_j$,
	the result of the processing
	after the honest relay.
	Depending on the behavior of \Aa,
	three cases occur:
	\Aa drops the onion
	belonging to this communication
	before $P_j$,
	\Aa behaves protocol-conform
	and sends the expected onion to $P_j$
	or \Aa modifies the expected onion
	before sending it to $P_j$.
	Notice that dropping the onion
	leaves the adversary
	with no further output.
	Thus,
	we can focus
	on the other cases:

	We assume there exists
	a distinguisher \Dd
	between \Hh[_0] and \Hh[_1]
	and construct a successful attack
	on {\ch \TFLU}.

	The attack receives
	key and name
	of the honest relay
	and uses the input
	of the replaced communication
	as choice for the challenge,
	where it replaces the name
	of the first honest relay
	with the one
	that it got
	from the challenger.
	For the other relays,
	the attack decides on the keys
	as \Aa (for corrupted)
	and the protocol (for honest) do.
	It receives $\tilde{O}$
	from the challenger.
	The attack uses \Dd.
	For \Dd,
	it simulates
	all communications except the one
	chosen for the challenge,
	with the oracles and knowledge
	of the protocol and keys.
	For simulating the challenge communication,
	the attack hands $\tilde{O}$ to \Aa
	as soon as \Dd instructs to do so.
	To simulate further for \Dd
	it uses $\tilde{O}$
	to calculate the later layers
	and does any actions \Aa does
	on the onion.

	\Aa either sends
	the honest processing of $\tilde{O}$
	to the challenge router
	or \Aa modifies it.
	The attack uses the oracle
	to simulate the further processing
	of $\tilde{O}$
	or its modification.
	{\ch If \Aa chooses to
		tag $\tilde{O}$,
		then the challenger
		will output
		an onion with
		a random payload
		in both \Hh[_0]
		and \Hh[_1]
		since the tag
		completely randomizes
		the payload contents.
	}

	Thus,
	either the challenger chose $b = 0$
	and the attack behaves like
	\Hh[_0] under \Dd;
	or the challenger chose $b = 1$
	and the attack behaves like
	\Hh[_1] under \Dd.
	The attack outputs
	the same bit as \Dd does
	for its simulation
	to win with the same advantage
	as \Dd can distinguish the hybrids.

	\textbf{Hybrid \Hh[_1^{<x}]:}
	In this hybrid,
	for the first $x - 1$ forward communications,
	onion layers
	from an honest sender
	to the next honest relay
	on the forward path
	are replaced with
	a random onion
	sharing this path.
	[Note that $\Hh[_1] = \Hh[_1^{<2}]$
	and let $\Hh[_1^\ast]$
	be the hybrid where
	the replacement happened
	for all communications.]

	\underline{$\Hh[_1^{<x-1}] \approx_I \Hh[_1^{<x}]$:}
	Analogous to above.
	Apply argumentation of indistinguishability
	($\Hh[_0] \approx_I \Hh[_1]$)
	for every replaced subpath.

	{\ch \textbf{Hybrid \Hh[_{2a}]:}
	In the two hybrids
	\Hh[_{2a}] and \Hh[_{2b}],}
	for the first forward communication
	for which, in the adversarial processing,
	no recognition-falsifying modification
	(i.e. on $\eta$) occurred
	and other modification
	does not result in a fail,
	onion layers between
	two consecutive honest relays
	on the forward path
	(the second might be the {\ch exit relay})
	are replaced with
	random onion layers
	embedding the same path.
	{\ch We do so
		in two hybrid steps
		because we require
		both the \NTI
		and \TFLU properties
		to truncate the forward path
		of the onion
		before the first
		of the two honest relays
		and then replace
		the onion layers
		between the honest relays.
	}
	Additionally,
	for all forward communications,
	replacements between
	the sender and the first relay
	happen as in $\Hh[_1^\ast]$.
	More precisely,
	{\ch \Hh[_{2a}]} acts like $\Hh[_1^\ast]$
	except for the processing
	of $O_j$.
	{\ch The onion layers
	$O_{j+1}$, \ldots, $O_{n}$, $O\ba_{1}$, \ldots, $O\ba_{n\ba}$
	are replaced with
	$\bar{O}_1$, \ldots, $\bar{O}_{n-j}$, $\bar{O}\ba_{1}$, \ldots, $\bar{O}\ba_{n\ba}$;
	the hybrid sends
	$\bar{O}_1$ instead of $O_{j+1}$.
	The replacement
	is formed as
	\[\bar{O}_1 \gets \FormO(1, \Rr', m, R, \Pp', \Pp\ba\!\!, PK_{\Pp'}, PK_{\Pp\ba\!\!})\]
	with an honestly chosen
	randomness $\Rr'$
	and $\Pp' = (P_{j+1}, \ldots, P_{n})$.
	If the onion's information
	is on the \textit{Tag}-list,
	the tag is recreated
	on $\bar{O}_1$
	before it is sent.
	}

	{\ch \underline{$\Hh[_1^\ast] \approx_I \Hh[_{2a}]$:}
	\Hh[_{2a}] replaces
	the onion layers
	on the path after
	an honest relay
	and does so
	for an onion
	that has already had
	all of its subpaths
	between honest relays
	prior to this
	honest relay
	replaced before.
	The original onion layers
	before this honest relay
	are thus
	never output
	to the adversary
	while the layers
	used as replacements
	are chosen independently
	at random.
	We can reduce
	the new replacement
	to \NTI
	with $j_{\NTI} = j$
	and $j\ba_{\NTI} = n\ba$
	as the positions
	of the honest relays
	in the challenge
	thanks to this.
	Since the layers
	before $P_j$
	are independent
	of the challenge,
	the attack
	can recognize tagged payloads
	on those layers
	and recreate the tag
	on the challenge onion
	from the \NTI challenger
	if necessary.
	}

	\textbf{Hybrid \Hh[_{2b}]:}
	{\ch In this hybrid,
	we perform
	the actual replacement
	of the onion layers
	between the honest relays
	for the onion
	whose forward path
	was truncated
	in \Hh[_{2a}].
	}
	In essence,
	the consecutive onion layers
	{\ch $\bar{O}_{1}, \ldots, \bar{O}_{j'-j}$}
	from a communication
	of an honest sender,
	starting at the next honest relay $P_j$
	to the next following honest relay $P_{j'}$,
	are replaced with
	{\ch $\hat{O}_1, \ldots, \hat{O}_{j'-j}$}
	by sending {\ch $\hat{O}_1$}.
	Thereby,
	for honestly chosen
	randomness $\Rr'$
	{\ch and a random receiver $R_{rdm}$}:
	\[{\ch \hat{O}_1}\! \gets\! {\ch \FormO(1, \Rr', m_{rdm}, R_{rdm}, \Pp^{\ch \prime\prime}, (), PK_{\Pp^{\ch \prime\prime}}, ())},\]
	where $m_{rdm}$ is a random message,
	$\Pp^{\ch \prime\prime} = (P_{\ch j+1}, \ldots, P_{j'})$
	is the path between
	the honest relays.
	\Hh[_{2b}] stores
	\begin{flalign*}
		(({\ch \Rr''}, m_{rdm}, {\ch R_{rdm}}, \Pp^{\ch \prime\prime}, (), PK_{\Pp^{\ch \prime\prime}}, ()), \\
		{\ch \delta_{j'-j}}, P_{j'}, (O_1^R, P_{j'+1})),
	\end{flalign*}
	where {\ch $\bar{O}_1^R$}
	is calculated with
	\[{\ch \FormO(j'\!-\!j\! + \!1,\! \Rr',\! m,\! {\ch R},\! \Pp^{\ch \prime},\! \Pp\ba\!\!\!, PK_{\Pp^{\ch \prime}},\! PK_{\Pp\ba\!\!})},\]
	where the randomness,
	{\ch receiver},
	paths and message are chosen
	as {\ch \Hh[_{2a}] chose them},
	on the $\bar{O}$-list.
	Like in \Hh[_1^\ast],
	if an onion $\tilde{O}$
	is sent to $P_{j'}$,
	processing is first checked for a fail.
	If it does not fail,
	\Hh[_{2b}] checks
	\[{\ch \RO(j'\! -\! j, \tilde{O}, \Rr'', m_{rdm}, R_{rdm}, \Pp'', (), PK_{\Pp''}, ())}\]
	for any info on the $\bar{O}$-list
	where the second entry is $P_{j'}$.
	If it finds a match,
	{\ch it checks
	whether the original onion's information
	is on the \textit{Tag}-list
	and compares
	$\tilde{O}$'s payload
	to $\delta_{j'-j}$.
	If the information
	is on the list
	or the payloads
	do not match,
	one of the previous replacements
	were tagged
	or the current replacement
	was tagged.
	If the original onion's information
	is not on the \textit{Tag}-list yet,
	it is added.
	If $P_j$ is
	the onion's exit relay,
	\Hh[_{2b}] does nothing.
	Otherwise,
	\Hh[_{2b}] recreates the tag
	on $O_1^R$
	and sends it
	to $P_{j'+1}$.
	On the other hand,
	if the onion
	was not tagged until now,}
	the belonging $O_1^R$
	is used as the processing result of $P_{j'}$.
	{\ch If no onion
	with a matching header
	is found,}
	$\Proc(SK_{P_{j'}}, \tilde{O}, P_{j'})$
	is used.

	{\ch \underline{$\Hh[_{2a}] \approx_I \Hh[_{2b}]$:}}
	\Hh[_{2b}] replaces,
	for one communication
	(and all its replays),
	the first subpath
	between two consecutive honest relays
	after an honest sender.
	The output to \Aa
	includes the earlier
	(by \Hh[_1^\ast]) replaced onion layers $\bar{O}_{earlier}$
	before the first honest relay
	(these layers are identical
	in {\ch \Hh[_{2a}]} and \Hh[_{2b}])
	that take the original subpath
	but are otherwise chosen randomly;
	the original onion layers
	after the first honest relay
	for all communications
	not considered by \Hh[_{2b}]
	(output by \Hh[_1^\ast])
	or in case of the communication
	considered by \Hh[_{2b}],
	the newly drawn random replacement
	(generated by \Hh[_{2b}]);
	and the processing after $P_{j'}$.

	{\ch Similarly to
		our argument
		for \Hh[_{2a}],
		the $\bar{O}_{earlier}$ layers
		are random
		and independent of
		the replaced layers,
		so they can be built
		without needing
		the \TFLU challenger.}

	Thus,
	all that is left
	are the original/replaced onion layer
	after the first honest relay
	and the processing afterwards.
	This is the same output
	as in $\Hh[_0] \approx_I \Hh[_1]$.
	Hence,
	if there exists a distinguisher
	between {\ch \Hh[_{2a}]} and \Hh[_{2b}]
	there exists an attack on {\ch \TFLU}.

	\textbf{Counting explanation for $\Hh[_2^{<x}]$:}
	{\ch From here,
		we refer to
		the combination of
		\Hh[_{2a}] and \Hh[_{2b}]
		as \Hh[_2]
		for convenience.
	}
	Communication paths consist
	of multiple possible honest subpaths
	(paths from an honest relay
	to the next honest relay).
	We count (and replace)
	all these subpaths
	from the subpath closest to the sender
	until the one closest to the receiver.
	We first replace
	all such subpaths for the first communication,
	then for the second and so on.
	Below we use $< x$
	to signal how many such subpaths
	will be replaced
	in the current hybrid.
	[Note that $\Hh[_2] = \Hh[_2^{<2}]$
	and let \Hh[_2^\ast] be the hybrid
	where the replacement happened
	for all such subpaths.]

	\textbf{Hybrid \Hh[_2^{<x}]:}
	In this hybrid,
	the first $x-1$ honest subpaths
	(honest relay to next honest relay)
	of honest senders' forward communications
	is replaced with a random onion
	sharing the path.
	Additionally,
	for all forward communications,
	replacements between the sender
	and the first relay happen
	as in \Hh[_1^\ast].
	If \Aa previously
	(i.e., in onion layers
	up to the honest relay
	starting the selected subpath)
	modified $\eta$ of an onion layer
	in this communication
	or modifies other parts
	such that processing fails,
	the communication is skipped.

	\underline{$\Hh[_2^{<x-1}] \approx_I \Hh[_2^{<x}]$:}
	Analogous to above.

	\textbf{Replacing between Honest - Backward Onion}

	On the backward path,
	we replace the last onion layers first,
	then the second last and so on.
	Each machine only starts replacing
	at a certain point
	and if a message does not come that far
	(it is modified or dropped),
	they simply do not use any replacement.
	For all following hybrids,
	the replacements on the forward path
	are done as in \Hh[_2^\ast].

	\textbf{Hybrid \Hh[_1\ba]:}
	Similar to \Hh[_1],
	but this time one backward communication
	between the last honest relay
	({\ch which could be the exit relay})
	until the honest (forwards) sender
	is replaced.
	More precisely,
	this machine acts like \Hh[_2^\ast]
	except that the consecutive onion layers
	$O\ba_{j+1}, \ldots, O\ba_{\ch n\ba}$
	from a reply to an honest (forward) sender
	from the last honest relay $P\ba_j$
	to the (forward) sender $P\ba_{\ch n\ba} = P_0$
	are replaced with
	$\bar{O}_1, \ldots, \bar{O}_{\ch n\ba-j}$
	with (for an honestly chosen $\Rr'$
	{\ch and a random receiver $R_{rdm}$}):
	\[\bar{O}_1 \gets {\ch \FormO(1, \Rr', m_{rdm}, R_{rdm}, \Pp^{\ch \prime}, (), PK_{\Pp^{\ch \prime}}, ())}\]
	where $m_{rdm}$ is a random message,
	$\Pp^{\ch \prime} = (P\ba_{\ch j+1}, \ldots, P\ba_{\ch n\ba})$
	is the path from $P\ba_{\ch j+1}$ to $P\ba_{\ch n\ba}$.
	\Hh[_1\ba] stores
	$(\textit{info}, P\ba_{\ch n\ba} = P_0, m_{rdm})$
	on the $\bar{O}$-list.
	When looking up entries
	(with \RO)
	on the $\bar{O}$-list,
	\Hh[_1\ba] checks
	the belonging last entry
	to be an onion
	before sending it to the next relay.

	\underline{$\Hh[_2^\ast] \approx_I \Hh[_1\ba]$:}
	The environment is notified
	when an honest party receives an onion layer
	and inputs when this party is done.
	As we just exchange onion layers with others
	(with the same repliability),
	the behavior to the environment
	is indistinguishable for both machines.

	\Aa observes the onion layers
	before $P_j\ba$
	and,
	if it sends an onion to $P\ba_{\ch n\ba}$,
	the result of the processing
	after the honest relay.
	Depending on the behavior of \Aa,
	three cases occur:
	\Aa drops the onion
	belonging to this communication
	before $P\ba_{\ch n\ba}$,
	\Aa behaves protocol-conform
	and sends the expected onion
	to $P\ba_{\ch n\ba}$
	or \Aa modifies the expected onion
	before sending it to $P\ba_{\ch n\ba}$.
	Notice that dropping the onion
	leaves the adversary
	with no further output.
	Thus,
	we can focus on the other cases.

	We assume there exists a distinguisher \Dd
	between \Hh[_2^\ast] and \Hh[_1\ba]
	and construct a successful attack
	on {\ch \NBLU}:

	The attack receives
	key and name of the honest relay
	and uses the input of the replaced communication
	as the choice for the challenge,
	where it replaces the name of the honest relay
	with the one that it got from the challenger.
	For the other relays,
	the attack decides on the keys
	as \Aa (for corrupted)
	and the protocol (for honest) do.
	It receives $O_1$ from the challenger
	and forwards it to \Aa
	for the corrupted first relay
	(on the forward path).
	The attack simulates all other communications
	with oracles
	(or their replacements
	as in the games before)
	and at some point,
	as \Aa replies to $O_1$
	(after receiving its processing $O_{n+1}$),
	so does our attack.
	The reply is processed
	(with the knowledge of the keys)
	until the honest relay,
	where the replaced onion layers start
	and this processed reply is forwarded
	to the oracle of the challenger
	as $O$ to process it.
	The challenger returns $\tilde{O}$.
	The attack sends $\tilde{O}$,
	as the processing of the answer,
	to \Aa as soon as
	\Dd instructs to do so.
	To simulate further for \Dd
	it uses $\tilde{O}$
	to calculate the later layers
	and does any actions \Aa does
	on the onion.
	Further, the attack simulates
	all other communications
	with the oracles and knowledge
	of the protocol and keys
	(or the random replacement onions,
	if replaced before).

	Thus,
	either the challenger chose $b = 0$
	and the attack behaves like
	\Hh[_2^\ast] under \Dd;
	or the challenger chose $b = 1$
	and the attack behaves like
	\Hh[_1\ba] under \Dd.
	The attack outputs the same bit
	as \Dd does for its simulation
	to win with the same advantage
	as \Dd can distinguish the hybrids.

	\textbf{Hybrid \Hh[_1^{<x\leftarrow}]:}
	In this hybrid,
	for the first $x-1$ backward communications,
	onion layers from the last honest relay
	to the honest sender
	(=backwards receiver)
	are replaced with a random onion
	sharing this path.
	The replacement is again stored
	on the $\bar{O}$-list as before.

	\underline{$\Hh[_1^{<x-1\leftarrow}] \approx_I \Hh[_1^{<x\leftarrow}]$:}
	Analogous to above.
	Apply argumentation of indistinguishability
	($\Hh[_2^\ast] \approx_I \Hh[_1\ba]$)
	for every replaced subpath.

	{\ch \textbf{Hybrid \Hh[_{2a}\ba]:}
	In the hybrids
	\Hh[_{2a}\ba] and \Hh[_{2b}\ba],}
	for the first backward communication
	(and all its replays)
	for which,
	in the adversarial processing,
	no recognition-falsifying modification occurred
	and other modification
	did not lead to failed processing,
	onion layers between
	the two last consecutive honest relays
	(the first might be
	the forward receiver
	(=backward sender))
	are replaced with random onion layers
	embedding the same path.
	{\ch We separate this hybrid
	into two just like \Hh[_2]
	for the same reason.
	Let $j$ be the index
	of the first of
	the two honest relays in question
	and $j'$ the index
	of the second.
	}
	{\ch \Hh[_{2a}\ba]} acts like \Hh[_1^{\ast\leftarrow}]
	except for the processing of
	{\ch $O_{j''}$,
	where $j''$ is the index
	of the last honest relay
	on the forward path
	of the onion.
	Due to \Hh[_2],
	the onion's path
	starts at
	the second-to-last honest relay
	on the original onion's forward path.
	The onion layers
	$O_{j''+1}$, \ldots, $O_n$, $O_1$, \ldots, $O_{j'}$
	are replaced with
	$\bar{O}_{j''+1}$, \ldots, $\bar{O}_n$, $\bar{O}_1$, \ldots, $\bar{O}_{j'}$,
	where $\bar{O}_{j''+1}$
	is formed as
	\[\FormO(j''+1, \Rr', m, R, \Pp, \Pp^{\leftarrow\prime}\!\!, PK_{\Pp}, PK_{\Pp^{\leftarrow\prime}\!\!})\]
	with honestly chosen
	randomness $\Rr'$
	and $\Pp^{\leftarrow\prime} = (P\ba_1, \ldots, P\ba_{j'})$,
	effectively cutting off
	the end of the onion's reply path.
	If the onion's information
	is on the \textit{Tag}-list,
	the tag is recreated
	on $\bar{O}_{j''+1}$
	before it is sent.
}

	{\ch \underline{$\Hh[_1^{\ast\leftarrow}] \approx_I \Hh[_{2a}\ba]$:}
		We construct
		an attacker on \NTI
		with $j_{\NTI} = 0$
		and $j\ba_{\NTI} = j'$
		from any distinguisher
		of the two hybrids
		the same way
		as for \Hh[_{2a}].
		Due to the previous hybrid,
		the original onion layers
		after relay $P_{j'}$
		are never output
		to the adversary
		since they have been replaced
		with random layers.
	}

	\textbf{Hybrid \Hh[_{2b}\ba]:}
	{\ch In this hybrid,
	we perform
	the actual replacement
	of the onion layers
	between the honest relays
	for the onion
	whose backward path
	was truncated in
	\Hh[_{2a}\ba].
	}
	In essence,
	the consecutive onion layers
	{\ch $\bar{O}\ba_{j+1}, \ldots, \bar{O}\ba_{j'}$}
	from a backward communication
	of an honest (forward) sender,
	starting at the second last honest relay $P_j\ba$
	to the next following honest relay $P\ba_{j'}$
	(on the backward path),
	are replaced with {\ch $\hat{O}_1, \ldots, \hat{O}_{j'-j}$}.
	Thereby for an
	honestly chosen {\ch $\Rr''$}
	{\ch and a random receiver $R_{rdm}$:}
	\[\hat{O}_1 \gets {\ch \FormO(1, \Rr'', m_{rdm}, R_{rdm}, \Pp^{\ch \prime}, (), PK_{\Pp^{\ch \prime}}, ())}\]
	where $m_{rdm}$ is a random message,
	$\Pp^{\ch \prime} = (P\ba_{\ch j+1}, \ldots, P\ba_{j'})$
	is the path from $P\ba_{\ch j}$ to $P\ba_{j'}$.

	Further,
	the hybrid calculates (and stores)
	another replacement for the next part
	after the current replacement
	$(\tilde{P}\ba_{j'+1}, \tilde{O}_k)$
	(by exploiting the fact that
	the sender knows the backward path
	and can infer the message from any layer)
	as in the hybrid \Hh[_1^{\ast\leftarrow}] before.
	Then it also stores
	$(\textit{info}, P\ba_{j'}, (\tilde{O}_k, \tilde{P}\ba_{j'+1}))$
	to the $\bar{O}$-list
	(to ensure the replacement
	of the later path
	is used as well).
	As before,
	the $\bar{O}$-list will be checked
	to pick the right processing
	of an onion.

	\underline{$\Hh[_{2a}\ba] \approx_I \Hh[_{2b}\ba]$:}
	\Hh[_{2b}\ba] replaces for one backward communication,
	the last subpath between
	two consecutive honest relay
	before an honest (forward) sender.
	The output to \Aa includes
	the later (by \Hh[_1^{\ast\leftarrow}])
	replaced onion layers $\bar{O}_{later}$
	after the second honest relay
	(these layers are identically generated
	in {\ch \Hh[_{2a}\ba]} and \Hh[_{2b}\ba])
	that take the original subpath
	but are otherwise chosen randomly;
	the original onion layers
	after the first of the honest relays
	for all communications
	not considered by \Hh[_{2b}\ba]
	(output by {\ch \Hh[_{2a}\ba]})
	or,
	in case of the communication considered by \Hh[_{2b}\ba],
	the newly drawn random replacement
	(generated by \Hh[_{2b}\ba]);
	and the processing before
	the first honest relay $P\ba_j$.

	{\ch Similarly to
	our argument for
	\Hh[_{2a}\ba],
	the $\bar{O}_{later}$ layers
	are random
	and independent of
	the replaced layers,
	so they can be built
	without needing
	the \NBLU challenger.}

	Thus,
	all that is left
	are the original/replaced onion layer
	after the honest relay
	and the original layers before.
	This is the same output as in $\Hh[_2^\ast] \approx_I \Hh[_1\ba]$.
	Hence,
	if there exists a distinguisher
	between {\ch \Hh[_{2a}\ba]} and \Hh[_{2b}\ba],
	there exists an attack on {\ch \NBLU}.

	\textbf{Hybrid \Hh[_2^{<x\leftarrow}]:}
	{\ch From here,
	we refer to
	the combination of
	\Hh[_{2a\ba}] and \Hh[_{2b}\ba]
	as \Hh[_2\ba]
	for convenience.}
	In this hybrid,
	for the first $x - 1$ honest subpaths
	on backwards communications
	are replaced with a random onion
	sharing the path
	and the other replacements calculated as before
	and all are stored on the $\bar{O}$-list.
	If \Aa previously
	(i.e.,
	in onion layers
	up to the honest relay
	starting the selected subpath)
	modified $\eta$ of an onion layer
	in this communication
	or modified another part
	such that processing fails,
	the communication is skipped.

	\underline{$\Hh[_2^{<x-1\leftarrow}] \approx_I \Hh[_2^{<x\leftarrow}]$:}
	Analogous to above.

	\textbf{Onion replacement for corrupted receivers}

	We replace the missing part
	between the onion layers
	already replaced on the forward path
	and the onion layers already replaced
	on the backward path.
	{\ch Due to the structure
		of RSOR,
		this part exists
		for every onion
		and includes
		the link between
		the exit relay
		and the receiver.
		Note that
		if the exit relay
		is the last honest relay
		on the onion's path,
		then no replacement
		can take place
		because the message
		is simply sent to the receiver
		in plaintext.
		We thus only consider onions
		that still have at least
		one adversarial relay
		left on their path
		(which also implies
		that their exit relay
		is adversarial).
	}

	\textbf{Hybrid \Hh[_3]:}
	In this hybrid,
	for the first forward communication
	for which,
	in the adversarial processing,
	no recognition-falsifying modification
	(i.e., a modification on $\eta$)
	occurred
	(and no other modification
	caused the processing to fail)
	so far,
	forward onion layers
	from its last honest relay
	to the corrupted {\ch exit relay}
	are replaced with random onions
	sharing this path, {\ch receiver},
	and message
	and the first part of the reply-path.
	More precisely,
	this machine acts like \Hh[_2^{\ast\leftarrow}]
	except for the processing of $O_j$;
	in essence,
	the consecutive onion layers
	$O_{j+1}, \ldots, O_{\ch n}$
	from a communication of an honest sender,
	starting at the last honest relay $P_j$
	to the corrupted {\ch exit relay} $P_{\ch n}$
	are replaced with
	$\bar{O}_1, \ldots, \bar{O}_{\ch n-j}$;
	Thereby,
	for an honestly chosen $\Rr'$:
	\[\bar{O}_1 \gets {\ch \FormO(1, \Rr', m, R, \Pp^{\ch \prime}, \Pp^{\ch \prime \leftarrow}\!\!, PK_{\Pp^{\ch \prime}}, PK_{\Pp^{\ch \prime \leftarrow}\!\!})},\]
	where $m$ is the message
	of this communication,
	{\ch $R$ is the receiver,}
	$\Pp^{\ch \prime} = (P_j, \ldots, P_{\ch n})$
	is the path from $P_j$ to $P_{\ch n}$
	and $\Pp^{\ch \prime \leftarrow}$
	is the first part of the reply-path
	(until the first honest relay),
	that a reply to the original onion
	would have taken.
	{\ch If the onion's information
	is on the \textit{Tag}-list,
	then the tag
	is recreated
	on $\bar{O}_1$
	before it is sent.}

	\Hh[_3] further checks for
	every onion (ending at) $\Pp\ba.last$,
	if it was a reply to these replaced onion layers
	(by using the information \textit{info} stored and \RO).
	If so,
	it uses its knowledge
	about the original forward onion
	(before replacement)
	and the sender
	to construct the belonging original reply.
	With it,
	it computes the replacement
	of the later onion layers
	for this communication
	as in hybrid \Hh[_2^{\ast\leftarrow}]
	and stores the corresponding information
	on the $\bar{O}$-list.
	As before,
	the $\bar{O}$-list will be checked
	to pick the right processing of an onion.

	\underline{$\Hh[_2^{\ast\leftarrow}] \approx_I \Hh[_3]$:}
	Similar to $\Hh[_1^\ast] \approx_I \Hh[_2]$,
	the forward onion layers before $P_j$
	are independent and hence can be simulated
	for the distinguisher by an attack
	on {\ch \NTI}.
	{\ch If the adversary
	tags one of those layers,
	the attack can recognize
	the tag since it knows
	the expected payload
	it should be receiving
	at each honest relay.
	The attack can then
	recreate the tag
	after getting
	the challenge onion
	from the \NTI challenger.}
	Similarly to $\Hh[_1^{\ast\leftarrow}] \approx_I \Hh[_2\ba]$,
	the backward onion layers after $\Pp\ba.last$
	are independent and hence can be simulated
	for the distinguisher by an attack
	on {\ch \NTI}.
	The remaining outputs suffice to
	construct an attack on {\ch \NTI}
	similar to the one on {\ch \TFLU}
	in \Hh[_1^\ast] and \Hh[_2].

	\textbf{Hybrid \Hh[_3^{<x}]:}
	In this hybrid,
	for the first $x - 1$ forward communications
	for which,
	in the adversarial processing,
	no recognition-falsifying modification
	(and no other modification
	that results in failed processing)
	occurred so far,
	the onion layers between
	its last honest relay to corrupted {\ch exit relay}
	are replaced with random onion layers
	sharing the path,
	message,
	{\ch receiver,}
	and first part of the reply path.

	$\Hh[_3^{<x-1}] \approx_I \Hh[_3^{<x}]$:
	Analogous to above.

	\textbf{Hybrid \Hh[_4]:}
	This machine acts the way
	that \Ss acts
	in combination with \FRN.
	Note that \Hh[_3^\ast] only behaves differently from \Ss in
	(a) routing onions through the honest parties
	and (b)
	where it gets its information needed
	for choosing the replacement onion layers:
	(a) \Hh[_3^\ast] actually routes them
	through the real honest parties
	that do all the computation.
	\Hh[_4] instead runs the way
	that \FRN and \Ss operate:
	there are no real honest parties,
	and the ideal honest parties do not do
	any crypto work.
	(b) \Hh[_3^\ast]
	gets inputs directly from the environment
	and gives output to it.
	In \Hh[_4],
	the environment instead gives
	inputs to \FRN
	and \Ss gets the needed information
	(i.e.,
	parts of path
	and the included message,
	if the receiver is corrupted)
	from outputs of \FRN
	as the ideal world adversary.
	\FRN gives the outputs
	to the environment as needed.

	\underline{$\Hh[_3^\ast] \approx_I \Hh[_4]$:}
	For the interaction with the environment
	from the protocol/ideal functionality,
	it is easy to see
	that the simulator directly gets
	the information it needs
	from the outputs of the ideal functionality
	to the adversary:
	Whenever an honest relay
	is done processing,
	it needs the path
	from it to the next honest relay
	or path from it to the corrupted receiver
	and in this case also the message
	and beginning of the backward path.
	This information is given to \Ss by \FRN.

	Further,
	in the real protocol,
	the environment is notified by honest relays
	when they receive an onion
	together with some random ID
	that the environment sends back to signal
	that the honest relay
	is done processing the onion.
	The same is done
	in the ideal functionality.
	Notice that the simulator ensures that
	every communication is simulated in \FRN
	such that those notifications arrive at the environment
	without any difference
	(this includes them having the same repliability).

	For the interaction
	with the real world adversary,
	we distinguish the outputs
	in communications from
	honest and corrupted senders.
	\vspace{-2pt}
	\begin{enumerate}
		\setcounter{enumi}{-1}
		\item Corrupted (forward) senders:
			In the case of a corrupted sender,
			both \Hh[_3^\ast] and \Hh[_4]
			(i.e., $\Ss+\FRN$)
			do not replace any onion layers
			except that with negligible probability
			a collision on the $\bar{O}$-list
			resp. $O$-list occurs.
			(Notice that even for honest receivers
			(and thus backward senders)
			layers following the protocol
			can be and are created.)

		\item Honest senders:
			\let\origenumii\labelenumii
			\renewcommand{\labelenumii}{1.\arabic{enumii})}
			\begin{enumerate}
				\item No recognition-falsifying modification of the onion
					by the adversary happens
					(and if modification happens at all,
					the processing does not fail
					[note that a failing processing
					is the same as dropping;
					see 1.3)]):
					All parts of the path
					are replaced with
					randomly drawn onion layers $\bar{O}_i$.
					The way those layers are chosen is identical
					for \Hh[_3^\ast] and \Hh[_4] (i.e., $\Ss + \FRN$).

				{\ch \item The onion
					is tagged
					by the adversary:
					If the tagging occurs
					on the forwards path
					between two honest relays,
					the onion's information
					is added to
					the \textit{Tag}-list
					at the next honest relay.
					This list exactly corresponds
					to the $L_{tag}$ list
					in \FRN
					for forward onions
					and the onion is treated accordingly
					at the last honest relay:
					If that relay
					is the exit relay,
					no output is produced.
					If it is not the exit relay,
					the steps in \Hh[_3^\ast]
					ensure that
					the corrupted exit relay
					receives a tagged onion.
					If the tagging occurs
					on the backwards path,
					the onion's information
					is not added to
					the \textit{Tag}-list
					and the final onion layer
					delivered to
					the reply receiver
					is not tagged.
					This does not change
					the output to \Zz
					because received replies
					are never output
					to the environment
					by honest relays.
				}

				\item Some recognition-falsifying modification of the onion
					or a drop or insert happens:
					As soon as a recognition-falsifying modification happens,
					both \Hh[_3^\ast] and \Hh[_4] continue to use
					the bit-identical onion
					for the further processing
					except when,
					with negligible probability,
					a collision on the $\bar{O}$-list
					resp. $O$-list occurs.
					In case of a dropped onion,
					it is simply not processed further
					in any of the two machines.
			\end{enumerate}
			\let\labelenumii\origenumii

	\end{enumerate}

	Note that the view of the environment
	in the real protocol
	is the same as its view
	in interacting with \Hh[_0].
	Similarly,
	its view in the ideal protocol
	with the simulator
	is the same as its view
	in interacting with \Hh[_4].
	As we have shown
	indistinguishability in every step,
	we have indistinguishability
	in their views.
	\end{proof}

\end{document}